\documentclass[a4paper,10pt,notitlepage]{article}
\usepackage{amsmath}
\usepackage{amssymb}
\usepackage{xstring}
\usepackage[english]{babel}
\usepackage{hyperref}
\usepackage{amsthm}
\usepackage{caption}
\usepackage{subcaption}
\usepackage{makecell}
\usepackage{tablefootnote}
\usepackage[boxed]{algorithm2e}
\usepackage{xcolor} 
\usepackage{tikz} 
\usepackage{pgfplots} 
\pgfplotsset{compat=1.14}
\usepackage{graphicx}
\usepackage[a4paper,left=20mm,right=20mm,top=20mm,bottom=20mm]{geometry}
\newcommand{\refP}[1]{%
	\def\InputString{#1}%
	\IfBeginWith{\InputString}{Equation}{%
		(\ref{#1})}{%
	\IfBeginWith{\InputString}{Section}{%
		Section \ref{#1}}{%
	\IfBeginWith{\InputString}{Subsection}{%
		Subsection \ref{#1}}{%
	\IfBeginWith{\InputString}{Chapter}{%
		Chapter \ref{#1}}{%
	\IfBeginWith{\InputString}{Subsubsection}{%
		Subsubsection \ref{#1}}{%
	\IfBeginWith{\InputString}{Problem}{%
		Problem (\ref{#1})}{%
	\IfBeginWith{\InputString}{Property}{%
		Property (\ref{#1})}{%
	\IfBeginWith{\InputString}{Condition}{%
		Condition \ref{#1}}{%
	\IfBeginWith{\InputString}{Statement}{%
		Statement \ref{#1}}{%
	\IfBeginWith{\InputString}{Algorithm}{%
		Algorithm \ref{#1}}{%
	\IfBeginWith{\InputString}{Figure}{%
		Figure \ref{#1}}{%
	\IfBeginWith{\InputString}{Table}{%
		Table \ref{#1}}{%
	\IfBeginWith{\InputString}{Question}{%
		Question (\ref{#1})}{%
	\IfBeginWith{\InputString}{Footnote}{%
		Footnote \ref{#1}}{%
		\ref{#1}}}}}}}}}}}}}}}%
}
\newcommand{\thref}[1]{%
	\def\InputString{#1}%
	\IfBeginWith{\InputString}{Definition}{%
		Definition \ref{#1}}{%
	\IfBeginWith{\InputString}{Theorem}{%
		Theorem \ref{#1}}{%
	\IfBeginWith{\InputString}{Proposition}{%
		Proposition \ref{#1}}{%
	\IfBeginWith{\InputString}{Lemma}{%
		Lemma \ref{#1}}{%
	\IfBeginWith{\InputString}{Corollary}{%
		Corollary \ref{#1}}{%
	\IfBeginWith{\InputString}{Remark}{%
		Remark \ref{#1}}{%
	\IfBeginWith{\InputString}{Citedresult}{%
		Cited Result \ref{#1}}{%
	\IfBeginWith{\InputString}{Example}{%
		Example \ref{#1}}{%
	\IfBeginWith{\InputString}{Conjecture}{%
		Conjecture \ref{#1}}{%
	\IfBeginWith{\InputString}{Question}{%
		Question \ref{#1}}{%
	\IfBeginWith{\InputString}{Experiment}{%
		Experiment \ref{#1}}{%
	\IfBeginWith{\InputString}{Goal}{%
		Goal \ref{#1}}{%
		\ref{#1}}}}}}}}}}}}}%
}

\definecolor{TodoRed}{RGB}{150,50,0}

\newtheorem{CounterTheorem}{}[section]

\newtheorem{Definition}[CounterTheorem]{Definition}
\newtheorem{Theorem}[CounterTheorem]{Theorem}
\newtheorem{Proposition}[CounterTheorem]{Proposition}
\newtheorem{Lemma}[CounterTheorem]{Lemma}
\newtheorem{Corollary}[CounterTheorem]{Corollary}
\newtheorem{Remark}[CounterTheorem]{Remark}

\newtheorem{Example}[CounterTheorem]{Example}

\newtheorem{AlgorithmEnvirorementForNTheorem}[CounterTheorem]{Algorithm}






\renewcommand{\Vec}[1]{\mathbf{#1}}

\newcommand{\aVec}{\Vec{a}}
\newcommand{\bVec}{\Vec{b}}

\newcommand{\eVec}{\Vec{e}}

\newcommand{\pVec}{\Vec{p}}

\newcommand{\tVec}{\Vec{t}}

\newcommand{\vVec}{\Vec{v}}
\newcommand{\wVec}{\Vec{w}}
\newcommand{\xVec}{\Vec{x}}
\newcommand{\yVec}{\Vec{y}}
\newcommand{\zVec}{\Vec{z}}


\newcommand{\Mat}[1]{\mathbf{#1}}

\newcommand{\AMat}{\Mat{A}}
\newcommand{\BMat}{\Mat{B}}




\newcommand{\TextForAll}{\hspace{2pt} \text{ for all } \hspace{2pt}}

\newcommand{\TextIf}{\hspace{2pt}\text{ if }\hspace{2pt}}
\newcommand{\TextWith}{\hspace{2pt}\text{ with }\hspace{2pt}}
\newcommand{\TextAnd}{\hspace{2pt}\text{ and }\hspace{2pt}}

\newcommand{\ZeroSet}{\left\{0\right\}}

\newcommand{\abs}[1]{\left|#1\right|}

\newcommand{\norm}[1]{\left\|#1\right\|}
\newcommand{\scprod}[2]{\langle#1,#2\rangle}

\newcommand{\ProjToIndex}[2]{\left.#2\right|_{#1}}

\newcommand{\SetOf}[1]{\left[#1\right]}

\newcommand{\argmin}[1]{\underset{#1}{\textnormal{argmin}}\hspace{1pt}}




\def\AddBasicFunction#1#2{
	\expandafter\def\csname #1\endcsname##1{
		\def\InputString{##1}
		\def\CheckString{}
		\ifx\InputString\CheckString 
			#2
		\else
			#2 \left(##1\right)
		\fi
	}
}
\AddBasicFunction{Exp}{\exp}
\AddBasicFunction{Cos}{\cos}
\AddBasicFunction{Sin}{\sin}
\AddBasicFunction{Tan}{\tan}
\AddBasicFunction{Ln}{\ln}
\AddBasicFunction{Log}{\log}
\AddBasicFunction{ArcCos}{\arccos}
\AddBasicFunction{ArcSin}{\arcsin}
\AddBasicFunction{ArcTan}{\arctan}
\AddBasicFunction{Cosh}{\cosh}
\AddBasicFunction{Sinh}{\sinh}
\AddBasicFunction{Tanh}{\tanh}
\AddBasicFunction{ArcCosh}{\textnormal{arccosh}}
\AddBasicFunction{ArcSinh}{\textnormal{arcsinh}}
\AddBasicFunction{ArcTanh}{\textnormal{arctanh}}


\AddBasicFunction{Kernel}{\ker}
\AddBasicFunction{Rank}{\textnormal{rank}}
\AddBasicFunction{Range}{\textnormal{Ran}}

\AddBasicFunction{supp}{\textnormal{supp}}
\newcommand{\conv}[1]{\textnormal{conv}\left(#1\right)}

\AddBasicFunction{sgn}{\textnormal{sgn}}

\newcommand{\ExpE}{\mathrm{e}}

\newcommand{\SetSize}[1]{\#\left(#1\right)}



\title{
	Non-negative Sparse Recovery at Minimal Sampling Rate
}
\date{\today}
\author{
	Hendrik Bernd Zarucha
	\thanks{
		Communications and Information Theory Group,
		Technische Universtität Berlin, Berlin,
		petersen@tu-berlin.de,
	}
	and
	Peter Jung
	\thanks{
		Communications and Information Theory Group,
		Technische Universtität Berlin, Berlin,
		peter.jung@tu-berlin.de,
		and German Aerospace Center (DLR)
	}
}
\newcommand{\OrderOf}[1]{\mathcal{O}\left(#1\right)}
\newcommand{\normLHSsign}{L}
\newcommand{\normRHSsign}{R}
\newcommand{\normRHS}[1]{\norm{#1}_\normRHSsign}
\newcommand{\normLHS}[1]{\norm{#1}_\normLHSsign}
\AddBasicFunction{p}{p}
\newcommand{\Cone}[1]{\textnormal{cone}\left(#1\right)}

\begin{document}
	\maketitle
	\begin{abstract}
	It is known that sparse recovery is possible
	if the number of measurements is in the order of the sparsity,
	but the corresponding decoders either lack polynomial decoding time or
	robustness to noise.	
	Commonly, decoders that rely on a null space property are being used.
	These achieve polynomial time decoding
	and are robust to additive noise but pay the price
	by requiring more measurements. The non-negative least residual
	has been established as such a decoder for non-negative recovery.
	A new equivalent condition for uniform, robust recovery of non-negative sparse
	vectors with the non-negative least residual that is not based
	on null space properties is introduced. It is shown that the number of measurements
	for this equivalent condition only needs to be in the order of the sparsity.
	Further, it is explained why the robustness to additive noise is
	similar, but not equal, to the robustness of
	decoders based on null space properties.
\end{abstract}
	\section{Introduction}
The goal of compressed sensing is to design a measurement matrix
$\AMat\in\mathbb{R}^{M\times N}$ and a decoder $Q:\mathbb{R}^M\rightarrow\mathbb{R}^N$ such that $Q\left(\AMat\xVec\right)=\xVec$ for all $S$-sparse $\xVec$ with $M$ being as small as possible.
By choosing $\ell_0$-minimization as decoder and a suitable matrix
one can show that the number of measurements can be as small as $M=2S$
\cite[Theorem~2.14]{IntroductionCS}.
However, $\ell_0$-minimization is NP-hard 
and can not be performed in polynomial time in general \cite[Theorem~2.17]{IntroductionCS}.
Only if $\AMat$ is a Vandermonde matrix,
it is known that $\ell_0$-minimization can be performed in
a polynomial time using Pronys method \cite[Section~5]{Prony_method}\cite[Chapter~4.4]{Prony_method_diss}\cite[Theorem~2.15]{IntroductionCS}.
Numerical evidence suggests that Pronys method is infeasibly sensitive
to noise \cite[Exercise~2.8]{IntroductionCS}. To the best of the authors knowledge
no recovery guarantees are known that suggest robustness to noise
for $\ell_0$-minimization.
Since in most cases noise corrupts the measurements
$\yVec=\AMat\xVec+\eVec$, there is no feasible method for sparse recovery
whenever $M\in\OrderOf{S}$.\par
To overcome this problem one can consider matrices with a null space property.
Such matrices exist if $M\in\OrderOf{S\Ln{\ExpE\frac{N}{S}}}$ \cite[Theorem~9.27~+~Theorem~6.13]{IntroductionCS},
which scales slightly worse than $2S$.
In particular, if $\AMat$ has a null space property,
there exist several decoders (for instance the basis pursuit denoising \cite[Chapter~4]{IntroductionCS}
and the rLasso \cite{rLasso})
that not only obey $Q\left(\AMat\xVec\right)=\xVec$ for all $S$-sparse $\xVec$
but the estimation error is also bounded by a linear function
in the norm of the noise vector. This latter property is called robustness. If the constant determining the linear function is large, the robustness is called bad. Restricted to non-negative $\xVec$,
the non-negative least residual can achieve the same error bounds
if $\AMat$ has a null space property and a further minor property
\cite{NNLS,NNLAD}.
\par
In this work the signed kernel condition is introduced and its equivalence
to exact recovery of all non-negative $S$-sparse signals
with the non-negative least residual is proven.
Further, it is shown that under a signed kernel condition
the estimation error is also bounded by a linear function
in the norm of the noise.
Another result shows that Vandermonde matrices have the
signed kernel condition with the optimal number of measurements $M\geq 2S+1$
and that the underlying problem can be solved in a polynomial time because it
is a linear program.
In other words the open case in \refP{Table:solved_problems} is filled
at least in the case of non-negative recovery.
\begin{table}[ht]
	\centering
	\caption{Solved Uniform Recovery Problems in Compressed Sensing}
	\label{Table:solved_problems}
	\begin{tabular}{|c|c|c|c|}
		\hline
			\makecell{
				sparse
				\\ recovery
			}
			& robust
			& polynomial time
			& \makecell{
				robust and
				\\ polynomial time
			}
		\\\hline
			$M\in\mathcal{O}\left(S\right)$ 
			& \makecell{
				$\ell_0$-constrained
				\\ residual minimization
				\tablefootnote{
					To the best of the authors knowledge this statement is well known. If not, readers are free to apply this work with the Kruskal Rank Case to get the result. However, the authors do not claim originality over this result!
				}
			}
			& \makecell{
				Pronys method
				\\ \cite[Theorem~2.15]{IntroductionCS}
			}
			& ?
		\\\hline
			$M\in\mathcal{O}\left(S\Ln{\ExpE\frac{N}{S}}\right)$
			& \makecell{
				basis pursuit denoising
				\\ \cite[Chapter~4]{IntroductionCS}
			}
			& \makecell{
				basis pursuit denoising
				\\ \cite[Chapter~13]{IntroductionCS}
			}
			& \makecell{
				basis pursuit denoising
				\\ \cite[Chapter~13]{IntroductionCS}
			}
		\\\hline
	\end{tabular}
\end{table}
To the best of the authors knowledge this is the first construction
of a matrix with a polynomial time decoder that achieve uniform and
robust recovery guarantees with $M\in\OrderOf{S}$.
It is also shown that with complex measurement matrices
$M\geq S+1$ measurements are sufficient for a signed kernel condition.
Further, a new explanation for the difference in stable recovery between
the $\mathcal{O}\left(S\right)$ and
$\mathcal{O}\left(S\Ln{\ExpE\frac{N}{S}}\right)$ case is given.
Lastly, a method to calculate the robustness constant (not in polynomial time)
is given.
	\section*{Notation}
Given $N\in\mathbb{N}$ denote the indices up to $N$ as
$\SetOf{N}:=\left\{1,\dots,N\right\}$. Given a set $T\subset\SetOf{N}$
let $\SetSize{T}$ be the number of elements in $T$.
Over the course of this work the body $\mathbb{K}$ denotes either the real numbers $\mathbb{R}$
or the complex numbers $\mathbb{C}$. $\mathbb{K}$ will be specified whenever it is required.
A vector $\xVec\in\mathbb{K}^N$ is called $S$-sparse if
$\norm{\xVec}_0:=\SetSize{\supp{\xVec}}\leq S$ where $\supp{\xVec}:=\left\{n\in\SetOf{N}:\xVec_n\neq 0\right\}$.
The set of sparse vectors is denoted by
$\Sigma_S^N:=\left\{\zVec\in\mathbb{K}^N:\norm{\zVec}_0\leq S\right\}$
and set of the non-negative vectors by
$\mathbb{R}^N_+
	:=\left\{\zVec\in\mathbb{R}^N:\zVec_n\geq 0 \TextForAll n\in\SetOf{N}\right\}$.
Within this work two norms are required and denoted by $\normLHS{\cdot}$ and $\normRHS{\cdot}$.
These norms do not need to be $\ell_p$-norms and $L,R$ are to be understand as a part of the defined
symbol and not a variable.
$\ell_p$-norms will always be denoted by $\norm{\cdot}_p$ with a variable denoted by
$p$, $q$ or an element of $p\in\left[1,\infty\right]$ to distinguish them clearly from
$\normLHS{\cdot}$ and $\normRHS{\cdot}$.
Given a norm $\norm{\cdot}$ the unit sphere and ball are denoted by
$\mathbb{S}^{N-1}_{\norm{\cdot}}
	:=\left\{\zVec\in\mathbb{K}^N:\norm{\zVec}=1\right\}$ and
$\mathbb{B}^N_{\norm{\cdot}}
	:=\left\{\zVec\in\mathbb{K}^N:\norm{\zVec}\leq 1\right\}$ respectively.
Given $\xVec\in\mathbb{K}^N$ and $T\subset\SetOf{N}$
the vector which replaces the entries of $\xVec$ with coordinates not in $T$
by 0 is denoted by $\ProjToIndex{T}{\xVec}$.
The kernel or null space of a matrix $\AMat\in\mathbb{K}^{M\times N}$
is denoted by $\Kernel{\AMat}:=\left\{\zVec\in\mathbb{K}^N:\AMat\xVec=0\right\}$
which is depending on the choice of $\mathbb{K}$!
Given a set $F$ the cone containing $F$ is denoted by
$\Cone{F}:=\left\{\alpha\xVec:\xVec\in F,\alpha\geq 0\right\}$
and the convex hull is denoted by $\conv{F}$.
Given a function $f$ and a set of feasible elements $F$
the set of minimizers of the corresponding optimization problem
is denoted by
$\argmin{z\in F}f(z):=\left\{x^\#\in F: f(x^\#)=\inf_{z\in C}f(z)\right\}$.
	\section{Robustness and Recovery Conditions}\label{Section:robustness_recovery}
In this work $C\subset\mathbb{K}^N$ is a set of vectors
that are supposed to be recoverable from as few as possible noisy, linear measurements
$\yVec=\AMat\xVec+\eVec$. Given some norm $\normRHS{\cdot}$ a simple decoding
approach would try to solve $\argmin{\zVec\in C}\normRHS{\AMat\zVec-\yVec}$.
For a possibly larger, closed set of candidates $F$ with $C\subset F\subset \mathbb{K}^N$
one can try to solve the relaxed problem
$\xVec^\#\in\argmin{\zVec\in F}\normRHS{\AMat\zVec-\yVec}$ instead.
This might be easier if for instance
$F$ is convex but $C$ is of a combinatorial nature.
This problem formulation can potentially cover uniform recovery guarantees
from the following cases.
\begin{itemize}
	\item
		The Kruskal Rank case:
		$C=\Sigma_S^N$ and $F=\Sigma_S^N$.
	\item
		The Signed Kernel Condition case:
		$C=\Sigma_S^N\cap\mathbb{R}^N_+$ and $F=\mathbb{R}^N_+$.
		This will be introduced later.
	\item
		The Null Space Property case:
		$C=\eta\left(\Sigma_S^N\cap\mathbb{S}^{N-1}_{\norm{\cdot}_1}\right)$
		and $F=\eta\mathbb{B}^{N}_{\norm{\cdot}_1}$,
		where $\eta>0$.
	\item
		The Outwardly Neighborly Polytope case:
		$C=\eta\left(\Sigma_S^N\cap\mathbb{S}^{N-1}_{\norm{\cdot}_1}\cap\mathbb{R}_+^N\right)$
		and $F=\eta\left(\mathbb{B}^{N}_{\norm{\cdot}_1}\cap\mathbb{R}_+^N\right)$,
		where $\eta>0$.
	\item
		The Finite Code Book case:
		$\SetSize{C}<\infty$ and $F=\conv{C}$.
\end{itemize}
The following constants are central to study recovery and robustness. 
\begin{Definition}\label{Definition:tau_kappa}
	Let $C\subset F\subset\mathbb{K}^N$,
	$\normLHS{\cdot}$ a norm on $\mathbb{K}^N$,
	$\normRHS{\cdot}$ a norm on $\mathbb{K}^M$
	and $\AMat\in\mathbb{K}^{M\times N}$.
	Then,
	\begin{align}
		\nonumber
			\tau\left(\AMat\right)
		:=
			\inf_{\zVec\in F,\xVec\in C:\zVec\neq\xVec}
			\frac{
				\normRHS{\AMat\left(\zVec-\xVec\right)}
			}{
				\normLHS{\zVec-\xVec}
			}
		=
			\inf_{\vVec\in F-C:\vVec\neq 0}
			\frac{
				\normRHS{\AMat\vVec}
			}{
				\normLHS{\vVec}
			}
		=
			\inf_{\vVec\in \Cone{F-C}:\vVec\neq 0}
			\frac{
				\normRHS{\AMat\vVec}
			}{
				\normLHS{\vVec}
			}
	\end{align}
	is called robustness constant and
	\begin{align}
		\nonumber
			\kappa\left(\AMat\right)
		:=
			\sup_{\zVec\in F,\xVec\in C:\zVec\neq\xVec}
			\frac{
				\normRHS{\AMat\left(\zVec-\xVec\right)}
			}{
				\normLHS{\zVec-\xVec}
			}
		=
			\sup_{\vVec\in F-C:\vVec\neq 0}
			\frac{
				\normRHS{\AMat\vVec}
			}{
				\normLHS{\vVec}
			}
		=
			\sup_{\vVec\in \Cone{F-C}:\vVec\neq 0}
			\frac{
				\normRHS{\AMat\vVec}
			}{
				\normLHS{\vVec}
			}
	\end{align}
	is called normalization constant.
\end{Definition}
The robustness constant is closely related to recovery and robustness.
\begin{Theorem}[Robustness and Recovery]
	\label{Theorem:robustness_and_recovery}
	Let $C\subset F\subset\mathbb{K}^N$,
	$\normLHS{\cdot}$ a norm on $\mathbb{K}^N$,
	$\normRHS{\cdot}$ a norm on $\mathbb{K}^M$
	and $\AMat\in\mathbb{K}^{M\times N}$.
	Consider the following conditions:
	\begin{enumerate}
		\item
		\label{Condition:Theorem:robustness_and_recovery:tau}
			$\tau\left(\AMat\right)>0$.
		\item
		\label{Condition:Theorem:robustness_and_recovery:robust}
			There exists a constant $D>0$ such that one has
			\begin{align}
				\nonumber
					\normLHS{\xVec^\#-\xVec}
				\leq
					D\normRHS{\yVec-\AMat\xVec}
				\TextForAll
					\xVec\in C,\yVec\in\mathbb{K}^M,
					\xVec^\#\in\argmin{\zVec\in F}\normRHS{\AMat\zVec-\yVec}.
			\end{align}
		\item
		\label{Condition:Theorem:robustness_and_recovery:kernel}
			$\left(F-C\right)\cap\Kernel{\AMat}=\left\{0\right\}$.
		\item
		\label{Condition:Theorem:robustness_and_recovery:recovery}
			For all $\xVec\in C$ one has
			\begin{align}
					\nonumber
					\left\{\xVec\right\}
				=
					\argmin{\zVec\in F}\normRHS{\AMat\zVec-\AMat\xVec}.
			\end{align}
	\end{enumerate}
	Then, the following statements are true:
	\begin{enumerate}
		\item
		\label{Statement:Theorem:robustness_and_recovery:tau_robust}
			\refP{Condition:Theorem:robustness_and_recovery:tau}
			and
			\refP{Condition:Theorem:robustness_and_recovery:robust}
			are equivalent.
			Moreover
			\refP{Condition:Theorem:robustness_and_recovery:tau}
			yields
			\refP{Condition:Theorem:robustness_and_recovery:robust}
			with $D=\frac{2}{\tau\left(\AMat\right)}$
			and
			\refP{Condition:Theorem:robustness_and_recovery:robust}
			yields
			\refP{Condition:Theorem:robustness_and_recovery:tau}
			with $D\geq\frac{1}{\tau\left(\AMat\right)}$.
		\item
		\label{Statement:Theorem:robustness_and_recovery:kernel_recovery}
			\refP{Condition:Theorem:robustness_and_recovery:kernel}
			and
			\refP{Condition:Theorem:robustness_and_recovery:recovery}
			are equivalent.
		\item
		\label{Statement:Theorem:robustness_and_recovery:robust=>recovery}
			\refP{Condition:Theorem:robustness_and_recovery:tau}
			and
			\refP{Condition:Theorem:robustness_and_recovery:robust}
			imply
			\refP{Condition:Theorem:robustness_and_recovery:kernel}
			and
			\refP{Condition:Theorem:robustness_and_recovery:recovery}.
		\item
		\label{Statement:Theorem:robustness_and_recovery:recovery=>robust}
			Let additionally $\Cone{F-C}$ be closed. Then,
			\refP{Condition:Theorem:robustness_and_recovery:kernel}
			and
			\refP{Condition:Theorem:robustness_and_recovery:recovery}
			imply
			\refP{Condition:Theorem:robustness_and_recovery:tau}
			and
			\refP{Condition:Theorem:robustness_and_recovery:robust}.
		\item
		\label{Statement:Theorem:robustness_and_recovery:equivalent}
			Let additionally $\Cone{F-C}$ be closed. Then,
			\refP{Condition:Theorem:robustness_and_recovery:tau},
			\refP{Condition:Theorem:robustness_and_recovery:robust},
			\refP{Condition:Theorem:robustness_and_recovery:kernel}
			and
			\refP{Condition:Theorem:robustness_and_recovery:recovery}
			are all equivalent.
	\end{enumerate}
\end{Theorem}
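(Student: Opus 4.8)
The plan is to establish the five assertions in the order they are stated, since each later one feeds on the earlier ones; apart from one compactness argument everything reduces to the triangle inequality and the definition of $\tau\left(\AMat\right)$.

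For the first assertion, to get \refP{Condition:Theorem:robustness_and_recovery:robust} from \refP{Condition:Theorem:robustness_and_recovery:tau} I would fix $\xVec\in C$, $\yVec\in\mathbb{K}^M$ and a minimizer $\xVec^\#\in\argmin{\zVec\in F}\normRHS{\AMat\zVec-\yVec}$; the case $\xVec^\#=\xVec$ is trivial, and otherwise $\xVec^\#-\xVec$ is a nonzero element of $F-C$, so the definition of $\tau\left(\AMat\right)$ gives $\normLHS{\xVec^\#-\xVec}\le\tau\left(\AMat\right)^{-1}\normRHS{\AMat\xVec^\#-\AMat\xVec}$, while $\normRHS{\AMat\xVec^\#-\AMat\xVec}\le\normRHS{\AMat\xVec^\#-\yVec}+\normRHS{\yVec-\AMat\xVec}\le 2\normRHS{\yVec-\AMat\xVec}$ because $\xVec\in C\subset F$ is a feasible competitor against which $\xVec^\#$ is optimal; this yields the bound with $D=2/\tau\left(\AMat\right)$. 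Conversely, from \refP{Condition:Theorem:robustness_and_recovery:robust} with constant $D$, given any nonzero $\vVec=\zVec-\xVec\in F-C$ I would apply the robustness estimate with $\yVec:=\AMat\zVec$, for which $\zVec$ attains residual $0$ and is therefore a minimizer, obtaining $\normLHS{\vVec}\le D\normRHS{\AMat\vVec}$; taking the infimum over such $\vVec$ gives $\tau\left(\AMat\right)\ge 1/D>0$, i.e.\ both the positivity and the bound $D\ge 1/\tau\left(\AMat\right)$ (the degenerate case $F-C=\left\{0\right\}$, where $\tau\left(\AMat\right)=+\infty$, is trivial).

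The second assertion is bookkeeping: for $\xVec\in C$ the quantity $\normRHS{\AMat\zVec-\AMat\xVec}$ over $\zVec\in F$ has minimum $0$, attained at $\zVec=\xVec$, and $\zVec\in F$ is a minimizer exactly when $\AMat\zVec=\AMat\xVec$, i.e.\ $\zVec-\xVec\in\Kernel{\AMat}$; hence $\argmin{\zVec\in F}\normRHS{\AMat\zVec-\AMat\xVec}=\left\{\xVec\right\}$ for every $\xVec\in C$ is precisely the statement $\left(F-C\right)\cap\Kernel{\AMat}=\left\{0\right\}$. For the third assertion it then suffices, by the two equivalences already shown, to derive \refP{Condition:Theorem:robustness_and_recovery:kernel} from \refP{Condition:Theorem:robustness_and_recovery:tau}: a nonzero $\vVec\in\left(F-C\right)\cap\Kernel{\AMat}$ would produce the forbidden ratio $\normRHS{\AMat\vVec}/\normLHS{\vVec}=0<\tau\left(\AMat\right)$.

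The crux is the fourth assertion, which is where closedness of $\Cone{F-C}$ is needed. Assuming \refP{Condition:Theorem:robustness_and_recovery:kernel}, I would first upgrade it to $\Cone{F-C}\cap\Kernel{\AMat}=\left\{0\right\}$: a nonzero $\vVec=\alpha\wVec$ in that intersection with $\alpha>0$ and $\wVec\in F-C$ would give $\wVec=\alpha^{-1}\vVec\in\Kernel{\AMat}$ (a linear subspace), contradicting \refP{Condition:Theorem:robustness_and_recovery:kernel}. Using homogeneity of the defining ratio I would rewrite $\tau\left(\AMat\right)$ as the infimum of $\normRHS{\AMat\vVec}$ over $\vVec\in\Cone{F-C}$ with $\normLHS{\vVec}=1$; the feasible set is the intersection of the closed cone $\Cone{F-C}$ with the (closed, bounded) unit sphere of $\normLHS{\cdot}$, hence compact in the finite-dimensional space $\mathbb{K}^N$, and $\vVec\mapsto\normRHS{\AMat\vVec}$ is continuous, so the infimum is attained at some $\vVec^*$. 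If $\tau\left(\AMat\right)=0$ then $\AMat\vVec^*=0$ with $\vVec^*\neq 0$, contradicting the upgraded kernel condition, so $\tau\left(\AMat\right)>0$ and \refP{Condition:Theorem:robustness_and_recovery:robust} follows from the first assertion; the degenerate case $F-C=\left\{0\right\}$ (cone trivial, feasible set empty, $\tau\left(\AMat\right)=+\infty$) is handled separately. This is the step I expect to require the most care, since closedness is exactly what makes the compactness argument legitimate — without it the infimum can be $0$ even though \refP{Condition:Theorem:robustness_and_recovery:kernel} holds. Finally, the fifth assertion is pure chaining: \refP{Condition:Theorem:robustness_and_recovery:tau}$\Leftrightarrow$\refP{Condition:Theorem:robustness_and_recovery:robust} and \refP{Condition:Theorem:robustness_and_recovery:kernel}$\Leftrightarrow$\refP{Condition:Theorem:robustness_and_recovery:recovery} from the first two assertions, together with the cross-implications from the third and fourth, give the full four-way equivalence.
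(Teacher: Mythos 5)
Your proposal is correct and follows essentially the same route as the paper: the same triangle-inequality argument with $D=\frac{2}{\tau\left(\AMat\right)}$, the same observation that minimizers at $\yVec=\AMat\zVec$ (respectively $\yVec=\AMat\xVec$) link robustness to $\tau$ and recovery to the kernel condition, and the same use of closedness of $\Cone{F-C}$ on the unit sphere of $\normLHS{\cdot}$ for the fourth assertion. The only cosmetic differences are that you obtain $\tau\left(\AMat\right)\geq\frac{1}{D}$ by bounding every $\vVec\in F-C$ directly instead of passing to a minimizing sequence, and you invoke compactness/attainment of the infimum where the paper extracts a convergent subsequence of normalized differences; these are interchangeable.
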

\begin{proof}
	\refP{Statement:Theorem:robustness_and_recovery:tau_robust}:
	\refP{Condition:Theorem:robustness_and_recovery:tau}
	$\Rightarrow$
	\refP{Condition:Theorem:robustness_and_recovery:robust}:
	Let $\xVec\in C$, $\yVec\in \mathbb{R}^M$ and
	$\xVec^\#\in\argmin{\zVec\in F}\normRHS{\AMat\zVec-\yVec}$.
	By the definition of $\tau\left(\AMat\right)$ and $\xVec^\#$
	being a minimizer and $\xVec\in C\subset F$ being feasible one gets
	\begin{align}
		\nonumber
			\normLHS{\xVec^\#-\xVec}
		\leq
			\frac{1}{\tau\left(\AMat\right)}\normRHS{\AMat\xVec^\#-\AMat\xVec}
		\leq
			\frac{1}{\tau\left(\AMat\right)}
			\left(
				\normRHS{\AMat\xVec^\#-\yVec}
				+\normRHS{\AMat\xVec-\yVec}
			\right)
		\leq
			\frac{2}{\tau\left(\AMat\right)}
			\normRHS{\AMat\xVec-\yVec}.
	\end{align}
	This is \refP{Condition:Theorem:robustness_and_recovery:robust}
	with $D=\frac{2}{\tau\left(\AMat\right)}$.
	\par
	\refP{Condition:Theorem:robustness_and_recovery:robust}
	$\Rightarrow$
	\refP{Condition:Theorem:robustness_and_recovery:tau}:
	There exists $\left(\zVec^k\right)_{k\in\mathbb{N}}\subset F,\left(\xVec^k\right)_{k\in\mathbb{N}}\subset C$ such that
	$\zVec^k\neq\xVec^k$ for all $k\in\mathbb{N}$ and
	\begin{align}
		\label{Equation:Theorem:robustness_and_recovery:eq1}
			\tau\left(\AMat\right)
		=
			\lim_{k\rightarrow \infty}
			\frac{
				\normRHS{\AMat\left(\xVec^k-\zVec^k\right)}
			}{
				\normLHS{\xVec^k-\zVec^k}
			}.
	\end{align}
	By setting $\yVec^k:=\AMat\zVec^k$ it follows that
	$\zVec^k$ is a minimizer of
	$\min_{\zVec\in F}\normRHS{\AMat\zVec-\yVec^k}$.
	Applying \refP{Condition:Theorem:robustness_and_recovery:robust}
	to \eqref{Equation:Theorem:robustness_and_recovery:eq1} yields
	\begin{align}
		\nonumber
			\tau\left(\AMat\right)
		=
			\lim_{k\rightarrow \infty}
			\frac{
				\normRHS{\AMat\xVec^k-\yVec^k}
			}{
				\normLHS{\xVec^k-\zVec^k}
			}
		\geq
			\frac{1}{D}
		>
			0.
	\end{align}
	\par
	\refP{Statement:Theorem:robustness_and_recovery:kernel_recovery}:
	\refP{Condition:Theorem:robustness_and_recovery:recovery}
	$\Rightarrow$
	\refP{Condition:Theorem:robustness_and_recovery:kernel}:
	Let
	$\vVec\in \left(F-C\right)\cap\Kernel{\AMat}$.
	Then, there exists $\zVec\in F,\xVec\in C\subset F$ such that
	$\vVec=\zVec-\xVec$ and it follows that
	$\AMat\zVec=\AMat\xVec+\AMat\vVec=\AMat\xVec$.
	As a consequence $\zVec$ is a minimizer of
	$\min_{\zVec'\in F}\normRHS{\AMat\zVec'-\AMat\xVec}$.
	By \refP{Condition:Theorem:robustness_and_recovery:recovery}
	one gets $\zVec=\xVec$
	and thus $\left(F-C\right)\cap\Kernel{\AMat}=\left\{0\right\}$.
	\par
	\refP{Condition:Theorem:robustness_and_recovery:kernel}
	$\Rightarrow$
	\refP{Condition:Theorem:robustness_and_recovery:recovery}:
	Let $\xVec\in C$ and
	$\zVec\in\argmin{\zVec'\in F}\normRHS{\AMat\zVec'-\AMat\xVec}$.
	Since $\xVec\in C\subset F$ has objective value of $0$,
	also $\zVec$ needs to have objective value of $0$.
	Thus $\AMat\zVec=\AMat\xVec$ and $\zVec-\xVec\in \left(F-C\right)\cap\Kernel{\AMat}$.
	By \refP{Condition:Theorem:robustness_and_recovery:kernel}
	one gets $\zVec=\xVec$ is the unique minimizer.
	\par
	\refP{Statement:Theorem:robustness_and_recovery:robust=>recovery}:
	\refP{Condition:Theorem:robustness_and_recovery:robust}
	$\Rightarrow$
	\refP{Condition:Theorem:robustness_and_recovery:recovery}:
	Let $\xVec\in C$. Since, $C\subset F$, $\xVec$ is indeed
	a minimizer of $\min_{\zVec\in F}\normRHS{\AMat\zVec-\AMat\xVec}$.
	Now let $\xVec^\#$ be any minimizer of
	$\min_{\zVec\in F}\normRHS{\AMat\zVec-\AMat\xVec}$.
	Applying \refP{Condition:Theorem:robustness_and_recovery:robust}
	with $\yVec:=\AMat\xVec$ yields
	$\normLHS{\xVec^\#-\xVec}\leq D\normRHS{0}$.
	Thus, $\xVec$ is the unique minimizer.
	\par
	\refP{Statement:Theorem:robustness_and_recovery:recovery=>robust}:
	Now let additionally $\Cone{F-C}$ be closed. In order to show
	\refP{Condition:Theorem:robustness_and_recovery:kernel}
	$\Rightarrow$
	\refP{Condition:Theorem:robustness_and_recovery:tau}
	assume the logical inverse and let $\tau\left(\AMat\right)=0$.
	Hence, there exist sequences
	$\left({\zVec}^k\right)_{k\in\mathbb{N}}\subset F$
	and $\left(\xVec^k\right)_{k\in\mathbb{N}}\subset C$
	such that
	${\zVec}^k\neq\xVec^k$ for all $k\in\mathbb{N}$ and
	\begin{align}\label{Equation:Theorem:robustness_and_recovery:limit}
			\lim_{k\rightarrow\infty}
			\normRHS{
				\AMat\frac{{\zVec}^k-\xVec^k}{\normLHS{{\zVec}^k-\xVec^k}}
			}
		=
			0.
	\end{align}
	Then, the sequence $\left(\vVec^k\right)_{k\in\mathbb{N}}$ defined by
	$\vVec^k:=\frac{\zVec^k-\xVec^k}{\normLHS{\zVec^k-\xVec^k}}\in 
		\Cone{F-C}\cap \mathbb{S}^{N-1}_{\normLHS{\cdot}}$ for all $k\in\mathbb{N}$
	has a convergent subsequence converging
	to some $\vVec$. Since $\Cone{F-C}$ is closed,
	$\vVec\in\Cone{F-C}\cap\mathbb{S}^{N-1}_{\normLHS{\cdot}}$.
	By \eqref{Equation:Theorem:robustness_and_recovery:limit} one gets
	$\normRHS{\AMat\vVec}=0$ and thus, $\vVec\in\Kernel{\AMat}$.
	Due to the normalization
	$\vVec\in\Cone{F-C}\cap\Kernel{\AMat}\setminus\ZeroSet$
	and hence
	$\left(F-C\right)\cap\Kernel{\AMat}\neq\ZeroSet$.
	\par
	\refP{Statement:Theorem:robustness_and_recovery:equivalent}
	follows from
	\refP{Statement:Theorem:robustness_and_recovery:tau_robust},
	\refP{Statement:Theorem:robustness_and_recovery:kernel_recovery},
	\refP{Statement:Theorem:robustness_and_recovery:robust=>recovery}
	and
	\refP{Statement:Theorem:robustness_and_recovery:recovery=>robust}.
\end{proof}
\refP{Condition:Theorem:robustness_and_recovery:kernel} determines whether uniform
recovery is possible.
The robustness constant determines whether or not uniform and robust
recovery is possible and gives insight how robust the recovery is.
Due to \refP{Statement:Theorem:robustness_and_recovery:tau_robust}
it is guaranteed that $\frac{2}{\tau\left(\AMat\right)}$
is a constant for a robust recovery guarantee and that this
constant is optimal up to a factor of $2$. It is thus, sufficient
to study the robustness constant and not required
to study general constants for robust recovery guarantees.
If $\Cone{F-C}$ is closed, recovery is equivalent to robust recovery.
The robustness comes for free in this case.
The following theorem is concerned with small perturbations
of measurement matrices with recovery guarantees.
\begin{Theorem}
	\label{Theorem:openness}
	Let $C\subset F\subset\mathbb{K}^N$,
	$\normLHS{\cdot}$ a norm on $\mathbb{K}^N$,
	$\normRHS{\cdot}$ a norm on $\mathbb{K}^M$.
	Then, the function $\AMat\mapsto\tau\left(\AMat\right)$ is continuous
	and the set of matrices with a robust recovery guarantee
	$\left\{\AMat\in\mathbb{K}^{M\times N}:\tau\left(\AMat\right)>0\right\}$
	is open.
	If additionally $\Cone{F-C}$ is closed, the set of matrices with a recovery guarantee
	$\big\{\AMat\in\mathbb{K}^{M\times N}:\left(F-C\right)\cap\Kernel{\AMat}=\ZeroSet\big\}$
	is open.
\end{Theorem}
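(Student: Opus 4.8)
The plan is to prove that $\AMat\mapsto\tau\left(\AMat\right)$ is $1$-Lipschitz for a suitable operator norm on the finite-dimensional space $\mathbb{K}^{M\times N}$, and then to read off the two openness statements — the first directly, the second by invoking the equivalence proven in \thref{Theorem:robustness_and_recovery}.

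First I would use the scale invariance recorded in \thref{Definition:tau_kappa} to write $\tau\left(\AMat\right)=\inf\left\{\normRHS{\AMat\vVec}:\vVec\in\Cone{F-C},\,\normLHS{\vVec}=1\right\}$, with the convention $\inf\emptyset=+\infty$. (This covers the degenerate case $\Cone{F-C}=\ZeroSet$, in which $\tau\equiv+\infty$ is constant and all three claims are trivial; from now on I assume the set is nontrivial.) Then I would introduce the operator norm $\norm{\MMat}_{\mathrm{op}}:=\sup_{\normLHS{\vVec}=1}\normRHS{\MMat\vVec}$, observe that this is a genuine norm on $\mathbb{K}^{M\times N}$, and note that since all norms on a finite-dimensional space are equivalent, continuity and openness with respect to $\norm{\cdot}_{\mathrm{op}}$ are continuity and openness in the ordinary sense.

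The key step is the uniform estimate: for any $\AMat,\BMat\in\mathbb{K}^{M\times N}$ and any $\vVec$ with $\vVec\in\Cone{F-C}$ and $\normLHS{\vVec}=1$, the triangle inequality for $\normRHS{\cdot}$ gives $\normRHS{\AMat\vVec}\leq\normRHS{\BMat\vVec}+\normRHS{\left(\AMat-\BMat\right)\vVec}\leq\normRHS{\BMat\vVec}+\norm{\AMat-\BMat}_{\mathrm{op}}$. Since the last summand does not depend on $\vVec$, taking the infimum over all admissible $\vVec$ on both sides yields $\tau\left(\AMat\right)\leq\tau\left(\BMat\right)+\norm{\AMat-\BMat}_{\mathrm{op}}$; interchanging $\AMat$ and $\BMat$ gives $\abs{\tau\left(\AMat\right)-\tau\left(\BMat\right)}\leq\norm{\AMat-\BMat}_{\mathrm{op}}$, so $\tau$ is continuous.

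Openness of $\left\{\AMat\in\mathbb{K}^{M\times N}:\tau\left(\AMat\right)>0\right\}$ then follows at once: if $\tau\left(\AMat_0\right)>0$, every $\AMat$ with $\norm{\AMat-\AMat_0}_{\mathrm{op}}<\tau\left(\AMat_0\right)$ has $\tau\left(\AMat\right)\geq\tau\left(\AMat_0\right)-\norm{\AMat-\AMat_0}_{\mathrm{op}}>0$, so an open ball around $\AMat_0$ lies in the set. Finally, if $\Cone{F-C}$ is closed, \refP{Statement:Theorem:robustness_and_recovery:equivalent} of \thref{Theorem:robustness_and_recovery} gives $\tau\left(\AMat\right)>0\iff\left(F-C\right)\cap\Kernel{\AMat}=\ZeroSet$, so the set $\big\{\AMat:\left(F-C\right)\cap\Kernel{\AMat}=\ZeroSet\big\}$ equals the open set just identified and is therefore open. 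I do not anticipate a genuine obstacle here; the only points needing a little care are justifying that the constant may be pulled out of the infimum and dispatching the empty/degenerate case, both of which are routine.
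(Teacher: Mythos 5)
Your proposal is correct and follows essentially the same route as the paper: both hinge on the operator norm $\sup_{\normLHS{\vVec}\leq 1}\normRHS{\AMat\vVec}$ together with the triangle inequality to get continuity of $\AMat\mapsto\tau\left(\AMat\right)$, and both obtain the second openness claim by identifying $\big\{\AMat:\left(F-C\right)\cap\Kernel{\AMat}=\ZeroSet\big\}$ with $\left\{\AMat:\tau\left(\AMat\right)>0\right\}$ via \thref{Theorem:robustness_and_recovery} when $\Cone{F-C}$ is closed. Your packaging as a $1$-Lipschitz estimate $\abs{\tau\left(\AMat\right)-\tau\left(\BMat\right)}\leq\norm{\AMat-\BMat}_{L\mapsto R}$ is a clean (and slightly sharper) way of organizing what the paper does with minimizing sequences and double limits, and your handling of the degenerate case $\Cone{F-C}=\ZeroSet$ is a harmless extra precaution.
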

\begin{proof}
	For this proof let
	$\norm{\AMat}_{L\mapsto R}:=\sup_{\vVec\in\mathbb{K}^N:\normLHS{\vVec}\leq 1}\normRHS{\AMat\vVec}$
	denote the operator norm of $\AMat$.
	Now let $\left(\AMat^k\right)_{k\in\mathbb{N}}\subset\mathbb{K}^{M\times N}$
	be a sequence such that $\lim_{k\rightarrow\infty}\AMat^k=\AMat$
	so that $\lim_{k\rightarrow\infty}\norm{\AMat^k-\AMat}_{L\mapsto R}=0$.
	For every $k$ there exists a sequence $\left(\vVec^{k,l}\right)_{l\in\mathbb{N}}\subset F-C$ with
	$\vVec^{k,l}\neq 0$ for all $l\in\mathbb{N}$ and
	\begin{align}
		\nonumber
			\tau\left(\AMat^k\right)
		=
			\lim_{l\rightarrow\infty}
			\frac{
				\normRHS{\AMat^k\vVec^{k,l}}
			}{
				\normLHS{\vVec^{k,l}}
			}.
	\end{align}
	Using this yields
	\begin{align}
		\nonumber
			\tau\left(\AMat\right)
		=&
			\inf_{\vVec\in F-C:\vVec\neq 0}
			\frac{\normRHS{\AMat\vVec}}{\normLHS{\vVec}}
		\leq
			\inf_{\vVec\in F-C:\vVec\neq 0}
			\frac{\normRHS{\left(\AMat-\AMat^k\right)\vVec}}{\normLHS{\vVec}}
			+\frac{\normRHS{\AMat^k\vVec}}{\normLHS{\vVec}}
		\\\leq&\nonumber
			\frac{\normRHS{\left(\AMat-\AMat^k\right)\vVec^{k,l}}}{\normLHS{\vVec^{k,l}}}
			+\frac{\normRHS{\AMat^k\vVec^{k,l}}}{\normLHS{\vVec^{k,l}}}
		\leq
			\sup_{\vVec\in\mathbb{K}^N:\normLHS{\vVec}\leq 1}
			\normRHS{\left(\AMat-\AMat^k\right)\vVec}
			+\frac{\normRHS{\AMat^k\vVec^{k,l}}}{\normLHS{\vVec^{k,l}}}
		\\=&\nonumber
			\norm{\AMat-\AMat^k}_{L\mapsto R}
			+\frac{\normRHS{\AMat^k\vVec^{k,l}}}{\normLHS{\vVec^{k,l}}}.
	\end{align}
	Applying two limits yields
	\begin{align}
		\label{Equation:Theorem:openness:ineq1}
			\tau\left(\AMat\right)
		\leq
			\lim_{k\rightarrow\infty}\lim_{l\rightarrow\infty}
			\norm{\AMat-\AMat^k}_{L\mapsto R}
			+\frac{\normRHS{\AMat^k\vVec^{k,l}}}{\normLHS{\vVec^{k,l}}}
		=		
			\lim_{k\rightarrow\infty}\norm{\AMat-\AMat^k}_{L\mapsto R}+\tau\left(\AMat^k\right)
		=		
			\lim_{k\rightarrow\infty}\tau\left(\AMat^k\right).
	\end{align}
	On the other hand there exists a sequence $\left(\vVec^{k}\right)_{k\in\mathbb{N}}\subset F-C$ with
	$\vVec^{k}\neq 0$ for all $k\in\mathbb{N}$ and
	\begin{align}
		\nonumber
			\tau\left(\AMat\right)
		=
			\lim_{k\rightarrow\infty}
			\frac{
				\normRHS{\AMat\vVec^{k}}
			}{
				\normLHS{\vVec^{k}}
			}.
	\end{align}
	Using this yields
	\begin{align}
		\nonumber
			\tau\left(\AMat^k\right)
		=&
			\inf_{\vVec\in F-C:\vVec\neq 0}
			\frac{\normRHS{\AMat^k\vVec}}{\normLHS{\vVec}}
		\leq
			\inf_{\vVec\in F-C:\vVec\neq 0}
			\frac{\normRHS{\left(\AMat^k-\AMat\right)\vVec}}{\normLHS{\vVec}}
			+\frac{\normRHS{\AMat\vVec}}{\normLHS{\vVec}}
		\\\leq&\nonumber
			\frac{\normRHS{\left(\AMat^k-\AMat\right)\vVec^{k}}}{\normLHS{\vVec^{k}}}
			+\frac{\normRHS{\AMat\vVec^{k}}}{\normLHS{\vVec^{k}}}
		\leq
			\sup_{\vVec\in\mathbb{K}^N:\normLHS{\vVec}\leq 1}
			\normRHS{\left(\AMat^k-\AMat\right)\vVec}
			+\frac{\normRHS{\AMat\vVec^{k}}}{\normLHS{\vVec^{k}}}
		\\=&\nonumber
			\norm{\AMat^k-\AMat}_{L\mapsto R}
			+\frac{\normRHS{\AMat\vVec^{k}}}{\normLHS{\vVec^{k}}}.
	\end{align}
	Applying a limit to this yields
	\begin{align}
		\nonumber
			\lim_{k\rightarrow\infty}\tau\left(\AMat^k\right)
		\leq
			\lim_{k\rightarrow\infty}
			\norm{\AMat^k-\AMat}_{L\mapsto R}
			+\frac{\normRHS{\AMat\vVec^{k}}}{\normLHS{\vVec^{k}}}
		=
			\tau\left(\AMat\right).
	\end{align}
	By this and \eqref{Equation:Theorem:openness:ineq1} 
	one gets $\lim_{k\rightarrow\infty}\tau\left(\AMat^k\right)=\tau\left(\AMat\right)$.
	Thus, the function $\AMat\mapsto\tau\left(\AMat\right)$ is continuous
	and the set
	$\left\{\AMat\in\mathbb{R}^{M\times N}:\tau\left(\AMat\right)>0\right\}$
	is the preimage of an open set of a continuous function and thus open.
	If additionally $\Cone{F-C}$ is closed, then
	\thref{Theorem:robustness_and_recovery} yields that
	$\left\{\AMat\in\mathbb{K}^{M\times N}:\left(F-C\right)\cap\Kernel{\AMat}=\ZeroSet\right\}
	=\left\{\AMat\in\mathbb{K}^{M\times N}:\tau\left(\AMat\right)>0\right\}$
	which in turn must be open as well.
\end{proof}
While the openness does not seem insightful, this result is incredibly important.
It guarantees that small perturbations of matrices with a robust recovery guarantee will also
have a similar good robust recovery guarantees. Since in general matrices can only be accurately
be represented up to machine precision in computer algebra systems,
such a property is of utmost importance.
Note that the condition that $\Cone{F-C}$ is closed is mandatory in
\refP{Statement:Theorem:robustness_and_recovery:recovery=>robust} of \thref{Theorem:robustness_and_recovery}
and in \thref{Theorem:openness}.
\begin{Example}
	Let $\mathbb{K}=\mathbb{R}$, $\AMat:=\begin{pmatrix}1&0\end{pmatrix}$, $C:=\ZeroSet$ and
	$F:=\left\{\xVec\in\mathbb{R}^2:\norm{\xVec-\begin{pmatrix}1\\0\end{pmatrix}}_2\leq 1\right\}$.
	Then,
	\begin{align}
		\nonumber
			\Cone{F-C}
		=
			\left\{
				\begin{pmatrix}\xVec_1\\\xVec_2\end{pmatrix}
				:\xVec_1>0
			\right\}
			\cup\ZeroSet
	\end{align}
	is not closed!
	\refP{Condition:Theorem:robustness_and_recovery:kernel}
	and
	\refP{Condition:Theorem:robustness_and_recovery:recovery}
	of \thref{Theorem:robustness_and_recovery}
	are fulfilled, but condition
	\refP{Condition:Theorem:robustness_and_recovery:tau}
	and
	\refP{Condition:Theorem:robustness_and_recovery:robust}
	of \thref{Theorem:robustness_and_recovery}
	are not fulfilled
	both of which can be seen by considering
	$\begin{pmatrix}1+\cos\left(\alpha\right)\\\sin\left(\alpha\right)\end{pmatrix}\in F-C$ for $\alpha\rightarrow\pi$.
	Further, the small perturbation of $\AMat$ given by
	$\AMat^\epsilon:=\begin{pmatrix}1&\epsilon\end{pmatrix}$ for any
	$\epsilon\in\mathbb{R}\setminus\ZeroSet$
	does not suffice
	\refP{Condition:Theorem:robustness_and_recovery:kernel}
	and
	\refP{Condition:Theorem:robustness_and_recovery:recovery}
	of \thref{Theorem:robustness_and_recovery}.
\end{Example}
Due to \thref{Theorem:robustness_and_recovery} one might want to
create measurement matrices with good robustness constants.
Such matrices are near restricted isometries.
\begin{Theorem}[Measurement Matrix with bounded $\kappa$]
	\label{Theorem:optimal_matrix_kappa_bounded}
	Let $C\subset F\subset\mathbb{K}^N$,
	$\normLHS{\cdot}$ a norm on $\mathbb{K}^N$,
	$\normRHS{\cdot}$ a norm on $\mathbb{K}^M$
	and $\AMat\in\mathbb{K}^{M\times N}$.
	If $\tau\left(\AMat\right)>0$, then the following holds true:
	For all $\zVec\in F$ and $\xVec\in C$ one has
	\begin{align}
		\label{Equation:Theorem:optimal_matrix_kappa_bounded:inequality}
			\tau\left(\AMat\right)
			\normLHS{\zVec-\xVec}
		\leq
			\normRHS{\AMat\left(\zVec-\xVec\right)}
		\leq
			\kappa\left(\AMat\right)
			\normLHS{\zVec-\xVec}
	\end{align}
	In particular, one can find a measurement matrix with a recovery guarantee by finding a maximizer of
	\begin{align}
		\label{Equation:Theorem:optimal_matrix_kappa_bounded:optimization}
		\max_{
			\AMat'\in\mathbb{K}^{M\times N}:
			\kappa\left(\AMat'\right)\leq 1
		}
		\tau\left(\AMat'\right).
	\end{align}
\end{Theorem}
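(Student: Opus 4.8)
The plan is to read the two-sided estimate \eqref{Equation:Theorem:optimal_matrix_kappa_bounded:inequality} straight off \thref{Definition:tau_kappa}, and then to combine the degree-one homogeneity of the quotient $\normRHS{\AMat\vVec}/\normLHS{\vVec}$ in $\AMat$ with \thref{Theorem:robustness_and_recovery} to conclude that a maximizer of \eqref{Equation:Theorem:optimal_matrix_kappa_bounded:optimization} carries a recovery guarantee.

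For the inequality, fix $\zVec\in F$ and $\xVec\in C$. If $\zVec=\xVec$, then all three terms of \eqref{Equation:Theorem:optimal_matrix_kappa_bounded:inequality} are $0$ (here one only needs $\tau(\AMat)\geq 0$ and $\kappa(\AMat)\geq 0$, which hold since they are an infimum and a supremum of non-negative numbers), so there is nothing to show. If $\zVec\neq\xVec$, set $\vVec:=\zVec-\xVec$; then $\vVec\in F-C$ and $\vVec\neq 0$, so $\vVec$ is admissible in both extremal problems of \thref{Definition:tau_kappa} and hence
\[
  \tau(\AMat)\;\leq\;\frac{\normRHS{\AMat\vVec}}{\normLHS{\vVec}}\;\leq\;\kappa(\AMat).
\]
Multiplying through by $\normLHS{\vVec}=\normLHS{\zVec-\xVec}>0$ yields \eqref{Equation:Theorem:optimal_matrix_kappa_bounded:inequality}.

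For the optimization claim I would first record two elementary facts: (i) $\kappa(\AMat)\leq\sup_{\normLHS{\vVec}\leq 1}\normRHS{\AMat\vVec}<\infty$ for every $\AMat$, because a linear map on the finite-dimensional space $\mathbb{K}^N$ is bounded; and (ii) $\tau(c\AMat)=c\,\tau(\AMat)$ and $\kappa(c\AMat)=c\,\kappa(\AMat)$ for every $c>0$, since the quotient scales linearly in $\AMat$. Now suppose a recovery guarantee is possible at all, i.e. there is $\AMat_0$ with $\tau(\AMat_0)>0$. Then $0<\tau(\AMat_0)\leq\kappa(\AMat_0)<\infty$, so $\AMat_0':=\AMat_0/\kappa(\AMat_0)$ is feasible for \eqref{Equation:Theorem:optimal_matrix_kappa_bounded:optimization} with value $\tau(\AMat_0')=\tau(\AMat_0)/\kappa(\AMat_0)>0$. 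Hence the optimal value of \eqref{Equation:Theorem:optimal_matrix_kappa_bounded:optimization} is strictly positive, and every maximizer $\AMat^\#$ satisfies $\tau(\AMat^\#)>0$; by \thref{Theorem:robustness_and_recovery} this forces both exact recovery (\refP{Condition:Theorem:robustness_and_recovery:recovery}) and robustness (\refP{Condition:Theorem:robustness_and_recovery:robust}) for $\AMat^\#$.

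The estimate \eqref{Equation:Theorem:optimal_matrix_kappa_bounded:inequality} is essentially a restatement of \thref{Definition:tau_kappa}, so it is not where the work lies. The delicate point is the optimization statement: one must check that \eqref{Equation:Theorem:optimal_matrix_kappa_bounded:optimization} admits a feasible point with strictly positive objective precisely when a recovery guarantee exists (this is exactly the rescaling argument above, which needs $\kappa(\AMat)<\infty$), and — if one insists on an attained maximizer rather than merely the supremum — that the feasible set $\{\AMat':\kappa(\AMat')\leq 1\}$ is suitably compact, modulo the degenerate case $F=C$ where there is nothing to recover. I would flag the existence of a maximizer as the step that requires the most care.
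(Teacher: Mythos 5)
Your proof of the two-sided estimate \eqref{Equation:Theorem:optimal_matrix_kappa_bounded:inequality} is correct and is essentially the paper's own argument: the paper normalizes $\zVec-\xVec$ and plugs $\frac{\zVec-\xVec}{\normLHS{\zVec-\xVec}}$ into the definitions of $\tau$ and $\kappa$, which is the same as your observation that $\vVec=\zVec-\xVec$ is admissible in both extremal problems of \thref{Definition:tau_kappa}; the treatment of the case $\zVec=\xVec$ also coincides. Where you genuinely go beyond the paper is the ``in particular'' part: the paper's proof ends after the two bounds and never argues why a maximizer of \eqref{Equation:Theorem:optimal_matrix_kappa_bounded:optimization} carries a recovery guarantee, whereas you supply the missing chain --- finiteness of $\kappa\left(\AMat\right)$ via the operator norm, positive homogeneity of $\tau$ and $\kappa$ in $\AMat$, the rescaling $\AMat_0/\kappa\left(\AMat_0\right)$ showing that the optimal value is strictly positive as soon as some matrix (here the hypothesized $\AMat$ with $\tau\left(\AMat\right)>0$) admits a robust recovery guarantee, and then \thref{Theorem:robustness_and_recovery} applied to any maximizer. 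Your closing caveat is also well placed: the theorem tacitly assumes the maximum is attained, and since $\kappa$ is only a seminorm the feasible set $\left\{\AMat':\kappa\left(\AMat'\right)\leq 1\right\}$ need not be bounded, so attainment is not automatic; the paper does not address this either, so flagging it (rather than claiming it) is the right call.
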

\begin{proof}
	If $\zVec=\xVec$, then also $\AMat\zVec=\AMat\xVec$ and both
	bounds are fulfilled immediately. Thus, assume $\zVec\neq \xVec$.
	By definition of $\tau\left(\AMat\right)$ and $\kappa\left(\AMat\right)$
	one gets
	\begin{align}
		\nonumber
			\normRHS{\AMat\left(\zVec-\xVec\right)}
		=
			\normRHS{\AMat\frac{\zVec-\xVec}{\normLHS{\zVec-\xVec}}}
			\normLHS{\zVec-\xVec}
		\geq
			\tau\left(\AMat\right)\normLHS{\zVec-\xVec},
	\end{align}
	and
	\begin{align}
		\nonumber
			\normRHS{\AMat\left(\zVec-\xVec\right)}
		=
			\normRHS{\AMat\frac{\zVec-\xVec}{\normLHS{\zVec-\xVec}}}
			\normLHS{\zVec-\xVec}
		\leq
			\kappa\left(\AMat\right)\normLHS{\zVec-\xVec},
	\end{align}
	which finishes the proof.
\end{proof}
If one can evaluate the function $\tau$, one can potentially solve \eqref{Equation:Theorem:optimal_matrix_kappa_bounded:optimization}
in order to generate a good measurement matrix deterministically.
The matrix from this optimization problem is making the inequality in
\refP{Equation:Theorem:optimal_matrix_kappa_bounded:inequality}
as tight as possible creating a near restricted isometry.
Since $\kappa$ is seminorm, the constraint $\kappa\left(\AMat'\right)\leq 1$ creates
a convex set of feasible points. However, the function $\tau$ is not necessarily convex
and so the optimization problem is not convex either.
If the vector is not in $C$ but $\xVec\in F$, one might want
to bound the estimation error rather by a linear function in the distance
of $\xVec$ to $C$.
\begin{Theorem}[Stability as a consequence of Robustness]
	\label{Theorem:robustness=>stability}
	Let $C\subset F\subset\mathbb{K}^N$,
	$\normLHS{\cdot}$ a norm on $\mathbb{K}^N$,
	$\normRHS{\cdot}$ a norm on $\mathbb{K}^M$
	and $\AMat\in\mathbb{K}^{M\times N}$.
	If $\tau\left(\AMat\right)>0$, then the following holds true:\\
	For all $\xVec\in F$, $\yVec\in\mathbb{K}^M$
	and $\xVec^\#\in\argmin{\zVec\in F}\normRHS{\AMat\zVec-\yVec}$
	one has
	\begin{align}
		\label{Equation:robustness=>stability:stability}
		\normLHS{\xVec^\#-\xVec}
		\leq\left(1+\frac{\kappa\left(\AMat\right)}{\tau\left(\AMat\right)}\right)
				\inf_{\zVec\in C}\normLHS{\zVec-\xVec}
			+\frac{2}{\tau\left(\AMat\right)}\normRHS{\yVec-\AMat\xVec}
	\end{align}
\end{Theorem}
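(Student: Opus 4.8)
The plan is to fix an arbitrary competitor $\zVec\in C$, control $\normLHS{\xVec^\#-\zVec}$ via the robustness constant $\tau\left(\AMat\right)$, then pass from $\zVec$ to $\xVec$ by the triangle inequality and finally take the infimum over $\zVec\in C$. The whole argument is essentially the standard ``oracle'' decomposition combined with \thref{Definition:tau_kappa} and the two-sided bound from \thref{Theorem:optimal_matrix_kappa_bounded}.

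First I would write $\normLHS{\xVec^\#-\xVec}\leq\normLHS{\xVec^\#-\zVec}+\normLHS{\zVec-\xVec}$ and concentrate on the first term. Since $\xVec^\#\in F$ and $\zVec\in C$, the definition of $\tau\left(\AMat\right)$ gives $\normLHS{\xVec^\#-\zVec}\leq\tfrac{1}{\tau\left(\AMat\right)}\normRHS{\AMat\left(\xVec^\#-\zVec\right)}$, which also holds trivially in the degenerate case $\xVec^\#=\zVec$. Next I would expand $\normRHS{\AMat\left(\xVec^\#-\zVec\right)}\leq\normRHS{\AMat\xVec^\#-\yVec}+\normRHS{\yVec-\AMat\xVec}+\normRHS{\AMat\xVec-\AMat\zVec}$ by inserting $\yVec$ and then $\AMat\xVec$. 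The first summand is at most $\normRHS{\AMat\xVec-\yVec}$ because $\xVec^\#$ minimizes $\normRHS{\AMat\zVec'-\yVec}$ over $\zVec'\in F$ and $\xVec\in F$ is feasible (this is the one place where it matters that $\xVec\in F$, not that $\xVec\in C$). The third summand is at most $\kappa\left(\AMat\right)\normLHS{\xVec-\zVec}$ by the upper bound of \thref{Theorem:optimal_matrix_kappa_bounded}, applied with $\xVec\in F$ and $\zVec\in C$. Hence $\normRHS{\AMat\left(\xVec^\#-\zVec\right)}\leq 2\normRHS{\yVec-\AMat\xVec}+\kappa\left(\AMat\right)\normLHS{\xVec-\zVec}$.

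Putting the three estimates together yields, for every $\zVec\in C$, the bound $\normLHS{\xVec^\#-\xVec}\leq\tfrac{2}{\tau\left(\AMat\right)}\normRHS{\yVec-\AMat\xVec}+\bigl(1+\tfrac{\kappa\left(\AMat\right)}{\tau\left(\AMat\right)}\bigr)\normLHS{\zVec-\xVec}$, and taking the infimum over $\zVec\in C$ on the right-hand side gives \refP{Equation:robustness=>stability:stability}.

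I do not expect a genuine obstacle here; the proof is short. The only points worth stating carefully are that $\kappa\left(\AMat\right)$ is finite (it is dominated by the operator norm $\sup_{\normLHS{\vVec}\leq 1}\normRHS{\AMat\vVec}$ since everything is finite-dimensional, so \thref{Theorem:optimal_matrix_kappa_bounded} applies with a genuine constant), that the case $\xVec^\#=\zVec$ is subsumed, and that $\xVec$ may lie in $F\setminus C$ yet is still admissible in the minimization defining $\xVec^\#$, which is exactly what makes the step $\normRHS{\AMat\xVec^\#-\yVec}\leq\normRHS{\AMat\xVec-\yVec}$ legitimate.
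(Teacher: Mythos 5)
Your proposal is correct and follows essentially the same argument as the paper: the same decomposition via the triangle inequality, the lower bound from $\tau\left(\AMat\right)$, the minimizer property of $\xVec^\#$ with $\xVec\in F$ feasible, and the upper bound via $\kappa\left(\AMat\right)$. The only cosmetic difference is that you fix an arbitrary $\zVec\in C$ and take the infimum at the end, whereas the paper runs the same estimate along a minimizing sequence in $C$ and passes to the limit.
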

\begin{proof}
	Let $\left(\xVec^k\right)_{k\in\mathbb{N}}\subset C$ be a sequence such that
	$\lim_{k\rightarrow\infty}\normLHS{\xVec^k-\xVec}
		=\inf_{\zVec\in C}\normLHS{\zVec-\xVec}$.
	Using \eqref{Equation:Theorem:optimal_matrix_kappa_bounded:inequality}
	with $\xVec^\#\in F$ and $\xVec^k\in C$ gives
	\begin{align}
		\nonumber
			\normLHS{\xVec^\#-\xVec}
		\leq&
			\normLHS{\xVec^k-\xVec}+\normLHS{\xVec^\#-\xVec^k}
		\overset{\eqref{Equation:Theorem:optimal_matrix_kappa_bounded:inequality}}{\leq}
			\normLHS{\xVec^k-\xVec}
			+\frac{1}{\tau\left(\AMat\right)}\normRHS{\AMat\xVec^\#-\AMat\xVec^k}
		\\\leq&\nonumber
			\normLHS{\xVec^k-\xVec}
			+\frac{1}{\tau\left(\AMat\right)}\normRHS{\AMat\xVec^\#-\yVec}
			+\frac{1}{\tau\left(\AMat\right)}\normRHS{\yVec-\AMat\xVec}
			+\frac{1}{\tau\left(\AMat\right)}\normRHS{\AMat\xVec-\AMat\xVec^k}
		\\\leq&\nonumber
			\left(1+\frac{\kappa\left(\AMat\right)}{\tau\left(\AMat\right)}\right)
			\normLHS{\xVec^k-\xVec}
			+\frac{1}{\tau\left(\AMat\right)}\normRHS{\AMat\xVec^\#-\yVec}
			+\frac{1}{\tau\left(\AMat\right)}\normRHS{\yVec-\AMat\xVec}.
	\end{align}
	Applying that $\xVec^\#$ is a minimizer of
	$\min_{\zVec\in F}\normRHS{\AMat\zVec-\yVec}$ and
	$\xVec\in F$ is feasible to this gives
	\begin{align}
		\nonumber
			\normLHS{\xVec^\#-\xVec}
		\leq&
			\left(1+\frac{\kappa\left(\AMat\right)}{\tau\left(\AMat\right)}\right)
			\normLHS{\xVec^k-\xVec}
			+\frac{2}{\tau\left(\AMat\right)}\normRHS{\yVec-\AMat\xVec}.
	\end{align}
	Taking the limit for $k\rightarrow\infty$ yields the claim.
\end{proof}
If the error is bounded linearly in
$\inf_{\zVec\in C}\normLHS{\zVec-\xVec}$,
this is called stability here. In the case $\mathbb{K}=\mathbb{R}$, $\normLHS{\cdot}=\norm{\cdot}_1$,
$C=\Sigma_S^N\cap\mathbb{R}_+^N$ and $F=\mathbb{R}_+^N$
this can be compared to \cite[Chapter~4]{IntroductionCS}.
Note that the stability in this case is a consequence of the robustness.
Moreover, the near restricted isometry from
\eqref{Equation:Theorem:optimal_matrix_kappa_bounded:optimization}
not only has the optimal robustness constant up to a factor
of $2$, it also minimizes the constants in front of both error terms in
\eqref{Equation:robustness=>stability:stability}.
\refP{Statement:Theorem:robustness_and_recovery:tau_robust} from
\thref{Theorem:robustness_and_recovery} shows the optimality of the robustness constant in \eqref{Equation:robustness=>stability:stability}
up to a factor of $2$. A similar statement for the stability constant
is missing. So potentially there could be measurement matrices
with significantly better stability constants than
$1+\frac{\kappa\left(\AMat\right)}{\tau\left(\AMat\right)}$.
The solution of \eqref{Equation:Theorem:optimal_matrix_kappa_bounded:optimization} gives
an insight how robust recovery can be given some dimensions and thus receives a definition here.
\begin{Definition}
	Let $C\subset F\subset\mathbb{K}^N$,
	$\normLHS{\cdot}$ a norm on $\mathbb{K}^N$,
	$\normRHS{\cdot}$ a norm on $\mathbb{K}^M$.
	Then
	\begin{align}
		\nonumber
			\tau_{\kappa}\left(C,F,\mathbb{K},\normLHS{\cdot},\normRHS{\cdot}\right)
		:=
			\sup_{\AMat'\in\mathbb{K}^{M\times N}:\kappa\left(\AMat'\right)\leq 1}
			\tau\left(\AMat'\right)
	\end{align}
	is called complexity constant.
\end{Definition}
Note that due to \thref{Theorem:robustness_and_recovery}
uniform and robust recovery is possible if and only if
the complexity constant is non-zero.
In light of \thref{Theorem:robustness_and_recovery}
it is of interest to check whether $\Cone{F-C}$ is closed.
\begin{Proposition}
	\label{Proposition:cone_closed}
	Let $M,N,S\in\mathbb{N}$. Let one of the following conditions hold true:
	\begin{enumerate}
		\item
			\label{Condition:Proposition:cone_closed:KR}
			$C=\Sigma_S^N$ and $F=\Sigma_S^N$ (the Kruskal Rank case).
		\item
			\label{Condition:Proposition:cone_closed:SKC}
			$C=\Sigma_S^N\cap\mathbb{R}^N_+$ and $F=\mathbb{R}^N_+$
			(the Signed Kernel Condition case).
		\item
			\label{Condition:Proposition:cone_closed:NSP}
			$C=\eta\left(\Sigma_S^N\cap\mathbb{S}^{N-1}_{\norm{\cdot}_1}\right)$
			and $F=\eta\mathbb{B}^{N}_{\norm{\cdot}_1}$,
			where $\eta>0$ (the Null Space Property case).
		\item
			\label{Condition:Proposition:cone_closed:ONP}
			$C=\eta\left(\Sigma_S^N\cap\mathbb{S}^{N-1}_{\norm{\cdot}_1}\cap\mathbb{R}_+^N\right)$
			and $F=\eta\left(\mathbb{B}^{N}_{\norm{\cdot}_1}\cap\mathbb{R}_+^N\right)$,
			where $\eta>0$ (Outwardly Neighborly Polytope case).
		\item
			\label{Condition:Proposition:cone_closed:FCB}
			$\SetSize{C}<\infty$ and $F=\conv{C}$ (the Finite Code Book Case).
	\end{enumerate}
	Then, $\Cone{F-C}$ is closed.
\end{Proposition}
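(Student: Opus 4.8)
The plan is to reduce each case to three elementary facts: a finite union of closed sets is closed; a sublevel set $\left\{\vVec:g(\vVec)\leq 0\right\}$ of a continuous $g$ is closed; and a finitely generated cone $\left\{\sum_j\mu_j\vVec_j:\mu_j\geq 0\right\}$ is closed. I will use freely that $\Cone{\bigcup_i A_i}=\bigcup_i\Cone{A_i}$ and $\left(\bigcup_i A_i\right)-\left(\bigcup_j B_j\right)=\bigcup_{i,j}\left(A_i-B_j\right)$, and that for a \emph{convex} set $P$ the set $\Cone{P}$ is closed under addition, hence equals the conical hull of $P$; in particular $\Cone{P}$ is a finitely generated cone when $P$ is a polytope. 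Since rescaling by $\eta>0$ preserves both cones and closedness, I take $\eta=1$ throughout \refP{Condition:Proposition:cone_closed:NSP} and \refP{Condition:Proposition:cone_closed:ONP}.

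For \refP{Condition:Proposition:cone_closed:KR}, with $V_T:=\left\{\vVec\in\mathbb{K}^N:\supp{\vVec}\subseteq T\right\}$ one has $\Sigma_S^N=\bigcup_{\SetSize{T}\leq S}V_T$, so $F-C=\Sigma_S^N-\Sigma_S^N=\Sigma_{2S}^N=\bigcup_{\SetSize{U}\leq 2S}V_U$; this is already a cone (a union of subspaces), hence $\Cone{F-C}=\Sigma_{2S}^N$ is a finite union of closed subspaces, thus closed. For \refP{Condition:Proposition:cone_closed:FCB}, $F-C=\conv{C}-C=\bigcup_{\xVec\in C}\left(\conv{C}-\xVec\right)$, and each $\conv{C}-\xVec$ is a polytope, so each $\Cone{\conv{C}-\xVec}$ is a finitely generated cone; taking the finite union over $\xVec\in C$ shows $\Cone{F-C}$ is closed.

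For the remaining three cases I use the support decomposition $C=\bigcup_{\SetSize{T}\leq S}C^{(T)}$ with $C^{(T)}:=\left\{\xVec\in C:\supp{\xVec}\subseteq T\right\}$, giving $\Cone{F-C}=\bigcup_{\SetSize{T}\leq S}\Cone{F-C^{(T)}}$; it then suffices to show each $\Cone{F-C^{(T)}}$ is closed. In \refP{Condition:Proposition:cone_closed:SKC} both $F=\mathbb{R}^N_+$ and $C^{(T)}$ are convex cones, so $F-C^{(T)}$ is already a convex cone and one checks directly that $\Cone{F-C^{(T)}}=F-C^{(T)}=\left\{\vVec:\vVec_i\geq 0\TextForAll i\notin T\right\}$, an intersection of finitely many closed half-spaces. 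In \refP{Condition:Proposition:cone_closed:NSP} I claim $\Cone{F-C^{(T)}}=\left\{\vVec:\norm{\ProjToIndex{\SetOf{N}\setminus T}{\vVec}}_1\leq\norm{\ProjToIndex{T}{\vVec}}_1\right\}$: ``$\subseteq$'' is the reverse triangle inequality applied to a representative $\alpha\left(\zVec-\xVec\right)$, and ``$\supseteq$'' follows, for $\vVec$ with $\ProjToIndex{T}{\vVec}\neq 0$, from the explicit representation $\vVec=\norm{\ProjToIndex{T}{\vVec}}_1\left(\zVec-\xVec\right)$ with $\xVec:=-\ProjToIndex{T}{\vVec}/\norm{\ProjToIndex{T}{\vVec}}_1$ and $\zVec:=\ProjToIndex{\SetOf{N}\setminus T}{\vVec}/\norm{\ProjToIndex{T}{\vVec}}_1$, which lie in $\Sigma_S^N\cap\mathbb{S}^{N-1}_{\norm{\cdot}_1}$ and $\mathbb{B}^N_{\norm{\cdot}_1}$ respectively; this set is closed as a sublevel set of $\vVec\mapsto\norm{\ProjToIndex{\SetOf{N}\setminus T}{\vVec}}_1-\norm{\ProjToIndex{T}{\vVec}}_1$. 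In \refP{Condition:Proposition:cone_closed:ONP} the analogous computation gives $\Cone{F-C^{(T)}}=\left\{\vVec:\vVec_i\geq 0\TextForAll i\notin T,\ \sum_{i=1}^N\vVec_i\leq 0\right\}$: ``$\subseteq$'' is immediate from $\zVec\geq 0$ and $\sum_i\zVec_i\leq 1=\sum_i\xVec_i$, while ``$\supseteq$'' follows by choosing $\alpha>0$ large enough that $\sum_{i\in T}\max\left\{0,-\vVec_i/\alpha\right\}\leq 1$, so that some $\xVec\in C^{(T)}$ has $\xVec_i\geq\max\left\{0,-\vVec_i/\alpha\right\}$ for $i\in T$ and then $\zVec:=\vVec/\alpha+\xVec$ lies in $F$; this is a polyhedral cone. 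In each of these cases the finite union over $T$ is closed, which finishes the proof.

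The only genuinely delicate points are the explicit identifications of $\Cone{F-C^{(T)}}$ in \refP{Condition:Proposition:cone_closed:NSP} and \refP{Condition:Proposition:cone_closed:ONP}: one must check that the scale parameter governing membership in the cone can actually be chosen so that a valid representative exists, which is exactly what the witnesses $\xVec,\zVec,\alpha$ above do, and one must dispose of the degenerate support $T=\emptyset$ (where $C^{(T)}$ is empty, so that piece is empty and is absorbed by nonempty $T$) and of the trivial case $S\geq N$ (where $\Sigma_S^N=\mathbb{K}^N$). Everything else — the two distributivity identities and recognising each resulting piece as closed — is routine.
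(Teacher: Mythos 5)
Your proof is correct, and for the three nontrivial cases it takes a genuinely different route from the paper. For \refP{Condition:Proposition:cone_closed:SKC}, \refP{Condition:Proposition:cone_closed:NSP} and \refP{Condition:Proposition:cone_closed:ONP} you fix a support set $T$ and identify $\Cone{F-C^{(T)}}$ explicitly, as a polyhedral cone in the first and third case and as the sublevel set $\left\{\vVec:\norm{\ProjToIndex{T^c}{\vVec}}_1\leq\norm{\ProjToIndex{T}{\vVec}}_1\right\}$ in the null space property case; your witnesses for the reverse inclusions check out (including the choice of $\alpha$ in the outwardly neighborly case, and the degenerate $T=\emptyset$ and $\ProjToIndex{T}{\vVec}=0$ situations, which are harmless). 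The paper instead argues sequentially in the signed kernel case (passing to a subsequence with a fixed negative-support pattern and taking limits), and in the null space property case decomposes the sparse $\ell_1$-spheres into polytope faces, invokes closedness of finitely generated cones via \eqref{Equation:Proposition:cone_closed:finite}, and then needs a separate reduction through the componentwise modulus map and the identity \eqref{Equation:Proposition:cone_closed:nsp:cone_shenanigans} to treat $\mathbb{K}=\mathbb{C}$; the outwardly neighborly case is only sketched there. Your sublevel-set identification handles $\mathbb{R}$ and $\mathbb{C}$ in one stroke (all steps use only norms and supports, never sign patterns or faces), gives explicit descriptions of the cones, and needs the finitely-generated-cone fact only in the finite code book case, at the price of verifying two inclusions by hand per case; the paper's route is heavier but funnels cases \refP{Condition:Proposition:cone_closed:NSP}, \refP{Condition:Proposition:cone_closed:ONP} and \refP{Condition:Proposition:cone_closed:FCB} through the single mechanism of \eqref{Equation:Proposition:cone_closed:finite}. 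Two small points you should make explicit: in cases \refP{Condition:Proposition:cone_closed:SKC} and \refP{Condition:Proposition:cone_closed:ONP} the sets $F-C^{(T)}$ lie in $\mathbb{R}^N$, so when $\mathbb{K}=\mathbb{C}$ your polyhedral identifications must be read inside $\mathbb{R}^N$, with closedness in $\mathbb{C}^N$ following since $\mathbb{R}^N$ is closed in $\mathbb{C}^N$; and the closedness of finitely generated cones you invoke in the finite code book case is exactly the assertion the paper also states without proof in \eqref{Equation:Proposition:cone_closed:finite}, so you are on equal footing there.
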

\begin{proof}
	Let $A=\left\{\aVec^1,\dots,\aVec^K\right\}\subset\mathbb{K}^N$
	and consider the matrix $\AMat$ whose columns are all
	$\aVec^k$. One gets
	\begin{align}
		\nonumber
			\Cone{\conv{A}}
		=&
			\left\{
				\sum_{k=1}^K\alpha v_k\aVec^k:
				\sum_{k=1}^K v_k=1,
				v_k\geq 0\TextForAll k\in\SetOf{K},
				\alpha\geq 0
			\right\}
		\\=&\nonumber
			\left\{
				\sum_{k=1}^K\aVec^k\vVec_k:
				\vVec\in\mathbb{R}_+^N
			\right\}
		=
			\AMat\mathbb{R}_+^N
	\end{align}
	and the right hand side is closed in $\mathbb{K}^N$. Thus, the following logical
	statement is true:
	\begin{align}
		\label{Equation:Proposition:cone_closed:finite}
		\textnormal{
			If $A$ is finite, then $\Cone{\conv{A}}$ is closed.
		}
	\end{align}
	\par
	Assume \refP{Condition:Proposition:cone_closed:KR} is fulfilled.
	Note that $F-C=\Sigma_{2S}^N$ which is a closed cone.
	Thus $\Cone{F-C}$ is a closed cone.
	\par
	Assume \refP{Condition:Proposition:cone_closed:SKC} is fulfilled.
	Let $\left(\vVec^k\right)_{k\in\mathbb{N}}\subset\Cone{F-C}$
	be convergent to some $\vVec$.
	Define $T_k:=\left\{n\in\SetOf{N}:\vVec^k_n<0\right\}\subset\SetOf{N}$
	so that $\SetSize{T_k}\leq S$.
	Since $\SetOf{N}$ is finite, there is a $T\subset\SetOf{N}$ with $\SetSize{T}\leq S$
	such that $T_k=T$ holds for infinitely many $k$.
	Hence, by replacing $\left(\vVec^k\right)_{k\in\mathbb{N}}$ by one of its subsequences,
	it can be and is assumed that $T_k=T$ for all $k\in\mathbb{N}$.
	Define $\zVec^k:=\ProjToIndex{T^c}{\vVec^k}$
	and $\xVec^k:=-\ProjToIndex{T}{\vVec^k}$.
	Due to the definition of $T_k$ it follows that $\zVec^k=\ProjToIndex{T_k^c}{\vVec^k}\in F$ and
	$\xVec^k=-\ProjToIndex{T_k}{\vVec^k}\in C$.
	Since $\left(\vVec^k\right)_{k\in\mathbb{N}}$ is converging to $\vVec$,
	it follows that
	$\lim_{k\rightarrow\infty}\zVec^k=\lim_{k\rightarrow\infty}\ProjToIndex{T^c}{\vVec^k}
		=\ProjToIndex{T^c}{\vVec}$
	and
	$\lim_{k\rightarrow\infty}\xVec^k=\lim_{k\rightarrow\infty}-\ProjToIndex{T}{\vVec^k}
		=-\ProjToIndex{T}{\vVec}$.
	Due to $C,F$ being closed it follows that $\zVec:=\ProjToIndex{T^c}{\vVec}\in F$ and
	$\xVec:=-\ProjToIndex{T}{\vVec}\in C$.
	This yields that
	$\vVec=\lim_{k\rightarrow\infty}\vVec^k=\lim_{k\rightarrow\infty}\zVec^k-\xVec^k=\zVec-\xVec
		\in F-C\subset\Cone{F-C}$.
	Thus $\Cone{F-C}$ is closed.
	\par
	Assume \refP{Condition:Proposition:cone_closed:NSP} is fulfilled.
	The cases $\mathbb{K}=\mathbb{R}$ and $\mathbb{K}=\mathbb{C}$
	are fundamentally different. This is because $\ell_1$-ball in $\mathbb{R}^N$ is a convex polytope,
	but the $\ell_1$-ball in $\mathbb{C}^N$ is not a convex polytope due to the
	$\ell_2$ nature of the absolute value. Thus, these two cases need to be considered independently.
	\par
	Assume $\mathbb{K}=\mathbb{R}$.	Given $T\subset\SetOf{N}$ with $\SetSize{T}=S$ set
	\begin{align}
		\nonumber
			C_T
		:=
			\left\{
				\xVec\in \eta\mathbb{S}^{N-1}_{\norm{\cdot}_1}:
				\supp{\xVec}\subset T
			\right\}
	\end{align}
	so that
	\begin{align}
		\label{Equation:Proposition:cone_closed:nsp:CT}
			C
		=
			\bigcup_{T\subset\SetOf{N}:\SetSize{T}=S}C_T.
	\end{align}
	Each $C_T$ is a real $S-1$-dimensional $\ell_1$-unit sphere embedded
	in $\mathbb{R}^N$ which is the boundary of a polytope.
	Hence, $C_T$ is the union of finitely many faces.
	To be precise there exists
	faces $f\left(T,k\right)\subset\mathbb{R}^N$
	for $k\in\SetOf{K(T)}$ such that
	\begin{align}
		\label{Equation:Proposition:cone_closed:nsp:fTk}
			C_T
		=
			\bigcup_{k=1}^{K(T)}f\left(T,k\right).
	\end{align}
	Each face $f\left(T,k\right)$ is a closed set and the convex
	hull of finitely many points. Thus, there exist
	a finite set $C\left(T,k\right)$ such that
	\begin{align}
		\label{Equation:Proposition:cone_closed:nsp:CTk}
			f\left(T,k\right)
		=
			\conv{C\left(T,k\right)}.
	\end{align}
	Combining \eqref{Equation:Proposition:cone_closed:nsp:CT}, 
	\eqref{Equation:Proposition:cone_closed:nsp:fTk} and
	\eqref{Equation:Proposition:cone_closed:nsp:CTk}
	yields
	\begin{align}
		\label{Equation:Proposition:cone_closed:nsp:C}
			C
		=
			\bigcup_{T\subset\SetOf{N}:\SetSize{T}=S}C_T
		=
			\bigcup_{T\subset\SetOf{N}:\SetSize{T}=S}\bigcup_{k=1}^{K(T)}
			f\left(T,k\right)
		=
			\bigcup_{T\subset\SetOf{N}:\SetSize{T}=S}\bigcup_{k=1}^{K(T)}
			\conv{C\left(T,k\right)}.
	\end{align}
	Let $f$ be the set of all standard unit vectors and their negatives,
	then $F=\conv{f}$. Using this and
	\eqref{Equation:Proposition:cone_closed:nsp:C} yields
	\begin{align}
		\nonumber
			F-C
		=
			\conv{f}-
			\bigcup_{T\subset\SetOf{N}:\SetSize{T}=S}\bigcup_{k=1}^{K(T)}
			\conv{f\left(C_T,k\right)}
		=
			\bigcup_{T\subset\SetOf{N}:\SetSize{T}=S}\bigcup_{k=1}^{K(T)}
			\left(
				\conv{f}-\conv{C\left(T,k\right)}
			\right).
	\end{align}
	Applying \cite[Theorem~1.1.2]{convex} to this gives
	\begin{align}
		\label{Equation:Proposition:cone_closed:nsp:cone(F-C)}
			\Cone{F-C}
		=
			\Cone{
				\bigcup_{T\subset\SetOf{N}:\SetSize{T}=S}\bigcup_{k=1}^{K(T)}
				\conv{f-C\left(T,k\right)}
			}
		=
			\bigcup_{T\subset\SetOf{N}:\SetSize{T}=S}\bigcup_{k=1}^{K(T)}
			\Cone{\conv{f-C\left(T,k\right)}}.
	\end{align}
	Since all $C\left(T,k\right)$ and $f$ are finite, also all
	$f-C\left(T,k\right)$ are finite. By 
	\eqref{Equation:Proposition:cone_closed:finite}
	all $\Cone{\conv{f-C\left(T,k\right)}}$ are closed
	and by \eqref{Equation:Proposition:cone_closed:nsp:cone(F-C)}
	$\Cone{F-C}$ is a closed set as a finite union of closed sets.
	\par
	Now let $\mathbb{K}=\mathbb{C}$ and note that the claim is proven for the real case,
	i.e. it is already known that $\Cone{F\cap\mathbb{R}^N-C\cap\mathbb{R}^N}$ is closed.
	Define the function $g:\mathbb{C}^N\rightarrow\mathbb{R}_+^N$ mapping $\xVec\in \mathbb{C}^N$
	to $\left(\abs{\xVec_n}\right)_{n\in\SetOf{N}}$, i.e. mapping each component to its absolute value.
	Let $\zVec\in F,\xVec\in C$. Define $\xVec',\zVec'\in\mathbb{R}^N$ by $\xVec'_n:=-\abs{\xVec_n}$
	and $\zVec'_n:=\abs{\zVec_n-\xVec_n}-\abs{\xVec_n}$.
	On the one hand one gets
	\begin{align}
		\nonumber
			\zVec'_n
		\leq
			\abs{\zVec_n}+\abs{\xVec_n}-\abs{\xVec_n}
		=
			\abs{\zVec_n}
	\end{align}
	and on the other hand it follows that
	\begin{align}
		\nonumber
			\zVec'_n
		=
			\abs{\zVec_n-\xVec_n}-\abs{\xVec_n-\zVec_n+\zVec_n}
		\geq
			\abs{\zVec_n-\xVec_n}-\abs{\xVec_n-\zVec_n}-\abs{\zVec_n}
		=
			-\abs{\zVec_n}	.
	\end{align}
	This yields $\norm{\zVec'}_1\leq\norm{\zVec}_1$. Since $\zVec\in F$, it follows that
	$\zVec'\in F\cap\mathbb{R}^N$.
	Due to $\xVec\in C$, one directly gets $\xVec'\in C\cap\mathbb{R}^N$.
	Lastly, $\zVec'_n-\xVec'_n=\abs{\zVec_n-\xVec_n}\geq 0$ which together with the previous two
	statements yields
	$g\left(\zVec-\xVec\right)=\zVec'-\xVec'\in
		\left(F\cap\mathbb{R}^N-C\cap\mathbb{R}^N\right)\cap\mathbb{R}_+^N$.
	It follows that
	\begin{align}
		\label{Equation:Proposition:cone_closed:nsp:preimage_subset}
		F-C\subset g^{-1}\left(\left(F\cap\mathbb{R}^N-C\cap\mathbb{R}^N\right)\cap\mathbb{R}_+^N\right).
	\end{align}
	Now let $\vVec\in g^{-1}\left(\left(F\cap\mathbb{R}^N-C\cap\mathbb{R}^N\right)\cap\mathbb{R}_+^N\right)$.
	Then, there exists $\zVec\in F\cap\mathbb{R}^N,\xVec\in C\cap\mathbb{R}^N$ such that
	$g\left(\vVec\right)=\zVec-\xVec$.
	Define the vectors $\zVec',\xVec'\in\mathbb{C}^N$ by
	\begin{align}
		\nonumber
			\zVec'_n
		:=
			\begin{Bmatrix}
					\zVec_n &\TextIf \vVec_n=0
				\\
					\frac{\vVec_n}{\abs{\vVec_n}}\zVec_n &\TextIf \vVec_n\neq 0
			\end{Bmatrix}
		\TextAnd
			\xVec'_n
		:=
			\begin{Bmatrix}
					\xVec_n &\TextIf \vVec_n=0
				\\
					\frac{\vVec_n}{\abs{\vVec_n}}\xVec_n &\TextIf \vVec_n\neq 0
			\end{Bmatrix}.
	\end{align}
	It follows that $\norm{\zVec'}_1=\norm{\zVec}_1$ and $\norm{\xVec'}_1=\norm{\xVec}_1$
	and thus $\zVec'\in F$ and $\xVec'\in C$.
	Further, if $\vVec_n=0$, then 
	\begin{align}
		\nonumber
			\zVec'_n-\xVec'_n
		=
			\zVec_n-\xVec_n
		=
			\abs{\vVec_n}
		=
			0
		=
			\vVec_n.
	\end{align}
	On the other if $\vVec_n\neq 0$, then
	\begin{align}
		\nonumber
			\zVec'_n-\xVec'_n
		=
			\frac{\vVec_n}{\abs{\vVec_n}}\left(\zVec_n-\xVec_n\right)
		=
			\frac{\vVec_n}{\abs{\vVec_n}}\abs{\vVec_n}
		=
			\vVec_n.
	\end{align}
	Thus $\vVec\in F-C$. Together with \eqref{Equation:Proposition:cone_closed:nsp:preimage_subset}
	this yields
	\begin{align}
		\nonumber
			F-C
		=
			g^{-1}\left(\left(F\cap\mathbb{R}^N-C\cap\mathbb{R}^N\right)\cap\mathbb{R}_+^N\right).
	\end{align}
	Applying the homogeneity of $g$ yields
	\begin{align}
		\label{Equation:Proposition:cone_closed:nsp:preimage_cone}
			\Cone{F-C}
		=&
			g^{-1}\left(\Cone{\left(F\cap\mathbb{R}^N-C\cap\mathbb{R}^N\right)\cap\mathbb{R}_+^N}\right).
	\end{align}
	On the other hand one can show that
	\begin{align}
		\label{Equation:Proposition:cone_closed:nsp:cone_shenanigans}
			\Cone{\left(F\cap\mathbb{R}^N-C\cap\mathbb{R}^N\right)\cap\mathbb{R}_+^N}
		=
			\Cone{F\cap\mathbb{R}^N-C\cap\mathbb{R}^N}\cap\mathbb{R}_+^N
	\end{align}
	holds true.
	Due to the real case proven before
	$\Cone{F\cap\mathbb{R}^N-C\cap\mathbb{R}^N}$ is closed in $\mathbb{R}^N$
	and thus also closed in $\mathbb{C}^N$.
	By \eqref{Equation:Proposition:cone_closed:nsp:cone_shenanigans} also
	\begin{align}
		\nonumber
			\Cone{\left(F\cap\mathbb{R}^N-C\cap\mathbb{R}^N\right)
		\cap
			\mathbb{R}_+^N}
	\end{align}
	is closed.
	By \eqref{Equation:Proposition:cone_closed:nsp:preimage_cone} $\Cone{F-C}$ is
	the preimage of a closed set under a continuous function and thus closed.
	\par
	Assume \refP{Condition:Proposition:cone_closed:ONP} is fulfilled.
	Set $C':=\eta\left(\Sigma_S^N\cap\mathbb{S}^{N-1}_{\norm{\cdot}_1}\right)\cap\mathbb{R}^N$
	and $F:=\eta\mathbb{B}^{N}_{\norm{\cdot}_1}\cap\mathbb{R}^N$
	from the real null space property case.
	It follows that
	$C=C'\cap\mathbb{R}^N_+$ and $F=F'\cap\mathbb{R}^N_+$.
	Due to this the proof of the outward neighborly polytope case for any $\mathbb{K}$ proceeds similar
	to the proof of the real null space property case.
	Again one shows that $F-C$ is the union of finitely many convex hulls
	of sets with finitely many elements.
	By \eqref{Equation:Proposition:cone_closed:nsp:C} the cones of these
	convex hulls are closed
	and $\Cone{F-C}$ is closed as the finite union of closed sets.
	The only difference is that there are less faces to consider per embedded $\ell_1$-unit sphere
	due to the intersection with $\mathbb{R}_+^N$.
	A more detailed proof is omitted to shorten this work.
\end{proof}
	\section{The Signed Kernel Condition Case}
In this section the case $C=\Sigma_S^N\cap\mathbb{R}^N_+$ and $F=\mathbb{R}^N_+$ is investigated.
This is a new case and has been studied with $\mathbb{K}=\mathbb{R}$ and $\normRHS{\cdot}=\norm{\cdot}_2$
in \cite{NNLS} and with $\mathbb{K}=\mathbb{R}$ for general $\normRHS{\cdot}$ in \cite{NNLAD}.
In particular it was shown that under the assumption of a null space property
and a minor further condition, recovery is possible
if $M\geq CS\Ln{\ExpE\frac{N}{S}}$.
Here $\mathbb{K}$ can either be the real or the complex numbers.
and it is shown that less requirements and measurements
are sufficient for a recovery guarantee.
Note that since $\Cone{F-C}$ is closed in this case, recovery and robust recovery are equivalent by \thref{Theorem:robustness_and_recovery}.
\begin{Definition}
	Let $\AMat\in\mathbb{K}^{M\times N}$ and $S\in\mathbb{N}$.
	Suppose that
	\begin{align}
		\nonumber
			\SetSize{\left\{n\in \SetOf{N}:\vVec_n<0\right\}}
		>
			S
		\TextForAll
			\vVec\in\Kernel{\AMat}\cap\mathbb{R}^N\setminus\ZeroSet
	\end{align}
	holds true.
	Then, $\AMat$ has the signed kernel condition of order $S$.
\end{Definition}
This condition means that real vectors in the kernel of $\AMat$ apart from
the all zeros vector have a minimal number of negative and also positive
entries since the negative vector is also in the kernel.
It is the equivalent condition for uniform recovery.
\begin{Theorem}\label{Theorem:signed_kernel_condition}
	Let $\normRHS{\cdot}$ be any norm on $\mathbb{K}^M$.
	Let $S\in\mathbb{N}$ and $\AMat\in\mathbb{K}^{M\times N}$.
	Then, the following conditions are equivalent:
	\begin{enumerate}
		\item
			$\AMat$ has the signed kernel condition of order $S$.
		\item
			For all $\xVec\in\Sigma_S\cap\mathbb{R}^N_+$ one has
			$
				\left\{\xVec\right\}=\argmin{\zVec\in\mathbb{R}_+^N}\normRHS{\AMat\zVec-\AMat\xVec}
			$.
	\end{enumerate}
\end{Theorem}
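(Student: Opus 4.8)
The plan is to reduce everything to \thref{Theorem:robustness_and_recovery} applied with $C=\Sigma_S^N\cap\mathbb{R}^N_+$ and $F=\mathbb{R}^N_+$. Condition 2 of the statement is literally \refP{Condition:Theorem:robustness_and_recovery:recovery} for this choice of $C,F$, so by \refP{Statement:Theorem:robustness_and_recovery:kernel_recovery} it is equivalent to \refP{Condition:Theorem:robustness_and_recovery:kernel}, namely $\left(F-C\right)\cap\Kernel{\AMat}=\ZeroSet$. Hence the whole theorem follows once I show that, for this $C$ and $F$, the set $\left(F-C\right)\cap\Kernel{\AMat}=\ZeroSet$ is equivalent to the signed kernel condition of order $S$.

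The core computation is the explicit description of $F-C$. First I would note $F-C\subset\mathbb{R}^N$, so $\left(F-C\right)\cap\Kernel{\AMat}=\left(F-C\right)\cap\left(\Kernel{\AMat}\cap\mathbb{R}^N\right)$, which handles the $\mathbb{K}=\mathbb{C}$ case at once since the signed kernel condition is also only about $\Kernel{\AMat}\cap\mathbb{R}^N$. Then I claim
\begin{align}
	\nonumber
	F-C=\left\{\vVec\in\mathbb{R}^N:\SetSize{\left\{n\in\SetOf{N}:\vVec_n<0\right\}}\leq S\right\}.
\end{align}
For ``$\subseteq$'': if $\vVec=\zVec-\xVec$ with $\zVec\in\mathbb{R}^N_+$ and $\xVec\in\Sigma_S^N\cap\mathbb{R}^N_+$, then $\vVec_n<0$ forces $\xVec_n>\zVec_n\geq 0$, so $\left\{n:\vVec_n<0\right\}\subset\supp{\xVec}$, which has at most $S$ elements. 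For ``$\supseteq$'': given such a $\vVec$, set $T:=\left\{n:\vVec_n<0\right\}$, so $\SetSize{T}\leq S$, and put $\xVec:=-\ProjToIndex{T}{\vVec}$ and $\zVec:=\ProjToIndex{T^c}{\vVec}$; then $\xVec\geq 0$, $\supp{\xVec}\subset T$ gives $\xVec\in\Sigma_S^N\cap\mathbb{R}^N_+=C$, $\zVec\geq 0$ gives $\zVec\in\mathbb{R}^N_+=F$, and $\vVec=\zVec-\xVec$. This is the same splitting trick already used in the proof of \refP{Condition:Proposition:cone_closed:SKC} of \thref{Proposition:cone_closed}.

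Finishing is then purely logical: $\left(F-C\right)\cap\Kernel{\AMat}=\ZeroSet$ says every $\vVec\in\Kernel{\AMat}\cap\mathbb{R}^N$ with $\SetSize{\left\{n:\vVec_n<0\right\}}\leq S$ is zero, i.e. every nonzero $\vVec\in\Kernel{\AMat}\cap\mathbb{R}^N$ satisfies $\SetSize{\left\{n:\vVec_n<0\right\}}>S$, which is exactly the signed kernel condition of order $S$. I do not expect a real obstacle here; the only points requiring care are keeping the real/complex kernel distinction straight and verifying that the sign bookkeeping in the description of $F-C$ is tight (in particular that one genuinely gets $\leq S$, not $<S$ or $\leq 2S$), so that the inequality matches the ``$>S$'' in the definition on the nose.
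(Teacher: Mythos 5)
Your proposal is correct and follows essentially the same route as the paper: both reduce condition (2) to $\left(F-C\right)\cap\Kernel{\AMat}=\ZeroSet$ via \refP{Statement:Theorem:robustness_and_recovery:kernel_recovery} of \thref{Theorem:robustness_and_recovery} and then match this to the signed kernel condition using exactly the splitting $\vVec=\ProjToIndex{T^c}{\vVec}-\left(-\ProjToIndex{T}{\vVec}\right)$ with $T$ the set of negative entries. The only cosmetic difference is that you package the two inclusions as an explicit identity for $F-C$, whereas the paper distributes them over the two directions of the equivalence; your identity and the real/complex kernel bookkeeping are both correct.
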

\begin{proof}
	$(1)\Rightarrow (2)$: Let $\vVec\in\left(F-C\right)\cap\Kernel{\AMat}$.
	Since $F,C\subset\mathbb{R}^N$, one has $\vVec\in\Kernel{\AMat}\cap\mathbb{R}^N$.
	Set $T:=\left\{n\in \SetOf{N}:\vVec_n<0\right\}$. By definition of $C$ and $F$ it follows that
	$\SetSize{T}\leq S$. Due to the assumption $\vVec=0$ and thus
	$\left(F-C\right)\cap\Kernel{\AMat}=\ZeroSet$.
	Applying \refP{Statement:Theorem:robustness_and_recovery:kernel_recovery} of
	\thref{Theorem:robustness_and_recovery} yields the claim.
	\par
	$(2)\Rightarrow (1)$:
	Let $\vVec\in\Kernel{\AMat}\cap\mathbb{R}^N$ with
	$\SetSize{T}\leq S$ where $T:=\left\{n\in \SetOf{N}:\vVec_n<0\right\}$.
	Set $\zVec:=\ProjToIndex{T^c}{\vVec}$ and $\xVec:=-\ProjToIndex{T}{\vVec}$.
	It follows that $\zVec\in F$ and $\xVec\in C$ as well as $\vVec=\zVec-\xVec$.
	Thus, $\vVec\in\left(F-C\right)\cap\Kernel{\AMat}$.
	Due to \refP{Statement:Theorem:robustness_and_recovery:kernel_recovery} of
	\thref{Theorem:robustness_and_recovery} one gets
	$\left(F-C\right)\cap\Kernel{\AMat}=\ZeroSet$ and hence $\vVec=0$.
	Thus, any $\vVec\in\Kernel{\AMat}\cap\mathbb{R}^N\setminus\ZeroSet$
	needs to obey $\SetSize{\left\{n\in \SetOf{N}:\vVec_n<0\right\}}> S$.
\end{proof}
In order to show the existence of matrices with a signed kernel condition the following complex
extension of \emph{Descartes' rule of signs} from \cite{complex_rule_of_signs} is required.
It is simplified by ignoring multiplicity of roots.
\begin{Theorem}[Adaption of Theorem 2.4 and Equation (1.1) from \cite{complex_rule_of_signs}]
\label{Theorem:complex_rule_of_signs}
	Let $N'\in\mathbb{N}\cup\ZeroSet$, $\pVec\in\mathbb{R}^N$ and
	$\p{a}=\sum_{n=1}^{N}\pVec_na^{n-1+N'}$ be the corresponding polynomial
	with real coefficients. Let $N''$ be the degree of $\p{}$ and assume $N''\geq 1$.
	Let the following conditions hold true:
	\begin{enumerate}
		\item
			\label{Condition:Theorem:complex_rule_of_signs:positivity}
			Let $\aVec_m>0$ for all $m\in\SetOf{M}$.
		\item
			\label{Condition:Theorem:complex_rule_of_signs:arguments}
			Let the arguments suffice
			$
					\tVec_m
				\in
					\left(-\frac{\pi}{N''+2-M},\frac{\pi}{N''+2-M}\right)
				\TextForAll
					m\in\SetOf{M}$.
		\item
			\label{Condition:Theorem:complex_rule_of_signs:distinctness}
			Let $\Exp{i\tVec_m}\aVec_m$ for all $m\in\SetOf{M}$ be distinct from each other.
		\item
			\label{Condition:Theorem:complex_rule_of_signs:roots}
			Let $\p{\aVec_m\Exp{i\tVec_m}}=0$ for all $m\in\SetOf{M}$.
	\end{enumerate}
	Then, the set of sign variations
	\begin{align}
		\nonumber
			\textnormal{var}\left(\pVec\right)
		:=
			\left\{
				\left(n,n'\right)\in\SetOf{N}^2:
				n<n',\pVec_n\pVec_{n'}<0 \TextAnd \pVec_{n''}=0\TextForAll n'' \TextWith n<n''<n'
			\right\}
	\end{align}
	suffices $\SetSize{\textnormal{var}\left(\pVec\right)}\geq M$.
\end{Theorem}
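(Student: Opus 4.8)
The statement is, up to the monomial shift by $N'$ and the possible presence of vanishing coefficients at the two ends of $\pVec$, exactly Theorem~2.4 of \cite{complex_rule_of_signs}, so the plan is to reduce to the cited result by clearing the largest monomial factor dividing $\p{}$. Since $N''\geq 1$, the vector $\pVec$ is non-zero; let $n_0$ be the smallest and $n_1$ the largest index with $\pVec_{n_0}\neq 0$ and $\pVec_{n_1}\neq 0$, put $\delta:=n_1-n_0$, and set
\begin{align}
	\nonumber
	q(a):=\sum_{j=0}^{\delta}\pVec_{n_0+j}\,a^{j},
\end{align}
so that $\p{a}=a^{\,n_0-1+N'}\,q(a)$ and $q$ is a genuine polynomial of degree $\delta$ with non-vanishing constant term $\pVec_{n_0}$ and non-vanishing leading coefficient $\pVec_{n_1}$.

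First I would record the elementary consequences of this factorisation. As every $\aVec_m>0$, the numbers $\aVec_m\Exp{i\tVec_m}$ are non-zero, hence by \refP{Condition:Theorem:complex_rule_of_signs:roots} they are roots of $q$ and not merely of the monomial factor; by \refP{Condition:Theorem:complex_rule_of_signs:distinctness} there are $M$ of them and they are pairwise distinct, so $M\leq\delta\leq N''$, and in particular $N''+2-M\geq\delta+2-M\geq 2>0$. Consequently the open interval in \refP{Condition:Theorem:complex_rule_of_signs:arguments} is contained in $\left(-\frac{\pi}{\delta+2-M},\frac{\pi}{\delta+2-M}\right)$, i.e. the argument condition that the cited theorem demands of the degree-$\delta$ polynomial $q$ is weaker than the one assumed here and therefore holds. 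Finally, since $\textnormal{var}\left(\pVec\right)$ only records consecutive sign-opposite pairs among the non-zero entries of $\pVec$, deleting the zeros at indices below $n_0$ and above $n_1$ changes it in no way, so $\SetSize{\textnormal{var}\left(\pVec\right)}$ equals the number of sign variations of the coefficient sequence $\left(\pVec_{n_0},\dots,\pVec_{n_1}\right)$ of $q$.

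Then I would apply Theorem~2.4 of \cite{complex_rule_of_signs} to $q$ together with its $M$ distinct roots $\aVec_m\Exp{i\tVec_m}$: positivity of the moduli comes from \refP{Condition:Theorem:complex_rule_of_signs:positivity}, the argument bound from the inclusion just noted, distinctness from \refP{Condition:Theorem:complex_rule_of_signs:distinctness}, and $q\left(\aVec_m\Exp{i\tVec_m}\right)=0$ from \refP{Condition:Theorem:complex_rule_of_signs:roots} together with $\aVec_m\Exp{i\tVec_m}\neq 0$. The cited theorem, used in its multiplicity-free form (which is all that is claimed here), then yields that the coefficient sequence of $q$ has at least $M$ sign variations, and by the previous paragraph this quantity is $\SetSize{\textnormal{var}\left(\pVec\right)}$, which is the assertion.

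I expect the substantive part to be purely bookkeeping: verifying that the degree convention of \cite{complex_rule_of_signs} matches the one used here, so that replacing $N''$ by $\delta\leq N''$ genuinely relaxes rather than tightens the sector hypothesis, and disposing of the degenerate cases, namely $M=0$ (where the conclusion is trivial) and $\delta=0$, in which $\p{}$ is a monomial with no non-zero root and hence $M=0$ again. Beyond that the argument is a direct appeal to the cited theorem.
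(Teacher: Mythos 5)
Your strategy is essentially the paper's: this theorem is not proven from scratch there either, but reduced to the cited Theorem~2.4, and the paper's whole ``proof'' is the remark following the statement. The difference is in the bookkeeping. The paper applies the citation directly to $\p{}$ at its true degree $N''$ (the factor $a^{N'}$ changes neither the nonzero roots nor $\textnormal{var}\left(\pVec\right)$, only the degree entering the sector bound), and the actual content of the adaptation is the caveat coming from Equation~(1.1) of \cite{complex_rule_of_signs}: Theorem~2.4 there carries the additional requirement that the degree be at least $2$, and the paper patches the degree-one case by the one-line observation that a real linear polynomial with a positive root has a sign change in its coefficients. Your route instead clears the monomial factor and applies the citation to $q$ of degree $\delta\leq N''$, using (correctly) that lowering the degree only widens the admissible sector; this is compatible with the paper's reading and even sidesteps any worry about a nonvanishing constant term, and your identification of $\SetSize{\textnormal{var}\left(\pVec\right)}$ with the sign variations of $q$'s coefficients is fine.

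The one concrete gap is that your list of degenerate cases ($M=0$ and $\delta=0$) does not cover the low-degree case that the cited theorem excludes, and your clearing step makes this case occur even when $N''\geq 2$: for $\p{a}=a^3-a^2$ one has $N''=3$ but $\delta=1$ and $M=1$, and Theorem~2.4 of \cite{complex_rule_of_signs}, with the Equation~(1.1) restriction, does not apply to $q$. The fix is exactly the paper's trivial patch, transported to $q$: if $\delta=1$ and $M=1$, the root $\aVec_1\Exp{i\tVec_1}$ of the real linear polynomial $q$ must be real, and since $\aVec_1>0$ and $\abs{\tVec_1}<\pi$ it equals $\aVec_1>0$; hence the two coefficients $\pVec_{n_0},\pVec_{n_1}$ satisfy $\pVec_{n_0}\pVec_{n_1}<0$ with $n_1=n_0+1$, so $\SetSize{\textnormal{var}\left(\pVec\right)}\geq 1=M$. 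With that line added, your argument is a complete and faithful version of the paper's adaptation.
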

Note that due to \cite[Equation~(1.1)]{complex_rule_of_signs} the result \cite[Theorem~2.4]{complex_rule_of_signs} has the additional requirement $N''\geq 2$.
It is clear that a polynomial of degree $N''=1$ with one positive real root has at least
one sign change in its coefficient sequence, thus the statement also holds for $N''=1$.
The formulation with $N'$ in the exponents looks odd, but is of advantage for this work
By increasing $N'$ one can artificially increase the degree of the polynomial,
thus make the denominator
$N''+2-M$ positive and allow one to define suitable roots.
If $\AMat$ is the matrix for which a signed kernel condition is supposed to be proven
in this section, then from the proofs of the theorems one can see that
$N'$ will be non-zero if and only if the trivial case
$\Kernel{\AMat}\cap\mathbb{R}^N=\ZeroSet$ is fulfilled.
The result \cite[Theorem~2.4]{complex_rule_of_signs} goes back to
\cite{complex_rule_of_signs_french}.
\subsection{Real Signed Kernel Condition Case}
In this subsection the case $\mathbb{K}=\mathbb{R}$ is being considered.
The following statement allows one to create matrices with the signed kernel condition.
\begin{Theorem}[Real Vandermonde Matrices have $M=2S+1$]\label{Theorem:real_signed_kernel:existence}
	Let $\mathbb{K}=\mathbb{R}$, $\aVec_1,\dots,\aVec_M>0$ be distinct from each other
	and $\AMat\in\mathbb{R}^{M\times N}$
	have entries $\AMat_{m,n}=\aVec_m^{n-1}$.
	Then, $\AMat$ has the signed kernel condition of order
	$\lceil\frac{M}{2}\rceil-1$.
	In particular, if $M=2S+1$, then $\AMat$ has signed kernel
	condition of order $S$.
\end{Theorem}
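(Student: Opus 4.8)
The plan is to turn the kernel condition into a statement about real zeros of a univariate polynomial and then apply the complex Descartes' rule of signs \thref{Theorem:complex_rule_of_signs} in its real specialisation. First I would handle the degenerate cases: if $M=0$ the asserted order is $\RoundUp{M/2}-1=-1$ and the signed kernel condition is vacuous, and if $\Kernel{\AMat}\cap\mathbb{R}^N=\ZeroSet$ there is nothing to prove. So assume $M\geq 1$ and fix an arbitrary $\vVec\in\Kernel{\AMat}\cap\mathbb{R}^N\setminus\ZeroSet$; it then suffices to prove $\SetSize{\left\{n\in\SetOf{N}:\vVec_n<0\right\}}\geq\RoundUp{M/2}$, since that is exactly ``$>\RoundUp{M/2}-1$'', and the ``in particular'' is just the identity $\RoundUp{(2S+1)/2}-1=S$.

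Next I would associate to $\vVec$ the polynomial $\p{a}=\sum_{n=1}^N\vVec_n a^{n-1}$, which is nonzero because $\vVec\neq 0$. Since the $m$-th entry of $\AMat\vVec$ equals $\p{\aVec_m}$, the hypothesis $\AMat\vVec=0$ says $\p{\aVec_m}=0$ for every $m\in\SetOf{M}$; as the $\aVec_m$ are distinct and positive, $\p{}$ is a nonzero polynomial with at least $M$ distinct positive real roots, so its degree $N''$ satisfies $N''\geq M\geq 1$. I would then apply \thref{Theorem:complex_rule_of_signs} with $N'=0$, with these $M$ roots, and with all arguments $\tVec_m=0$: condition \refP{Condition:Theorem:complex_rule_of_signs:positivity} is the positivity of the $\aVec_m$; condition \refP{Condition:Theorem:complex_rule_of_signs:arguments} holds because $N''+2-M\geq 2>0$, so the interval $\left(-\frac{\pi}{N''+2-M},\frac{\pi}{N''+2-M}\right)$ contains $0$; condition \refP{Condition:Theorem:complex_rule_of_signs:distinctness} is the distinctness of the $\aVec_m$; and condition \refP{Condition:Theorem:complex_rule_of_signs:roots} is $\p{\aVec_m}=0$, already shown. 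The conclusion is $\SetSize{\textnormal{var}\left(\vVec\right)}\geq M$. (The classical real Descartes' rule would serve equally here, but the cited theorem applies verbatim.)

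The remaining step is the elementary count that converts sign variations into negative entries. By definition $\textnormal{var}\left(\vVec\right)$ records precisely the consecutive pairs of nonzero coordinates of $\vVec$ that have opposite signs, so the nonzero coordinates of $\vVec$ fall into $\SetSize{\textnormal{var}\left(\vVec\right)}+1\geq M+1$ maximal blocks of constant sign, with consecutive blocks of opposite sign. Among $b\geq M+1$ sign-alternating blocks at least $\RoundDown{b/2}\geq\RoundDown{(M+1)/2}=\RoundUp{M/2}$ are negative, and each negative block contains at least one $n$ with $\vVec_n<0$; hence $\SetSize{\left\{n\in\SetOf{N}:\vVec_n<0\right\}}\geq\RoundUp{M/2}$, which is what was needed.

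I expect the only friction to be bookkeeping rather than any real difficulty: keeping the degree bound $N''\geq M$ visible so that the argument-window hypothesis of \thref{Theorem:complex_rule_of_signs} is satisfied with $\tVec_m=0$, and phrasing the block-counting step so that the zero coordinates of $\vVec$ (which are skipped both by $\textnormal{var}\left(\vVec\right)$ and by the blocks) produce no off-by-one error. Equivalence with exact non-negative recovery is then immediate from \thref{Theorem:signed_kernel_condition}, although it plays no role in the present statement.
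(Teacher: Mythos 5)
Your proposal is correct and follows essentially the same route as the paper: pass from a nonzero real kernel vector to the polynomial $\p{a}=\sum_{n=1}^N\pVec_n a^{n-1}$, observe it has the $M$ distinct positive roots $\aVec_m$ (forcing degree at least $M$), invoke Descartes' rule of signs to get at least $M$ sign variations, and count that an alternating chain of $M+1$ nonzero-sign entries forces at least $\left\lceil\frac{M}{2}\right\rceil$ negative coordinates. The only cosmetic difference is that you route the sign-variation bound through \thref{Theorem:complex_rule_of_signs} with all arguments $\tVec_m=0$ (which is legitimate since $N''\geq M$ makes the argument window nonempty), whereas the paper cites the classical real Descartes' rule directly; as you yourself note, the two are interchangeable here.
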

\begin{proof}
	Let $\pVec\in\Kernel{\AMat}\cap\mathbb{R}^N\setminus\ZeroSet$. Consider the polynomial
	$\p{a}:=\sum_{n=1}^{N}\pVec_{n}a^{n-1}$ which has real coefficients.
	Let $N''$ be the degree of $\p{}$ and note that $N''\leq N-1$.
	Since $0=\AMat\pVec=\left(\p{\aVec_m}\right)_{m\in\SetOf{M}}$
	the polynomial $\p{}$ has the $M$ distinct roots $\aVec_m$.
	If $N''<M$, then $\p{}$ would have more roots than its degree
	and hence would need to be the zero polynomial.
	This would be a contradiction and one would get
	$\Kernel{\AMat}\cap\mathbb{R}^N=\left\{0\right\}$ and that the signed kernel
	condition of any order would be fulfilled. Thus, assume that $N''\geq M$.
	It follows that $M+1\leq N$.
	Due to $\pVec\neq 0$ the polynomial $p$ is not the zero polynomial and $N''\geq 1$.
	By \emph{Descartes' rule of signs} \cite{rule_of_signs}\cite[Theorem~2.2]{complex_rule_of_signs}
	the set $\textnormal{var}\left(\pVec\right)$ contains at least $M$ elements.
	That means that $\pVec$ contains a subsequence of $M+1$ elements
	so that subsequent elements are non-zero and have opposite sign.
	This subsequence and thus $\pVec$ need to have at least
	$\lfloor\frac{M+1}{2}\rfloor=\lceil\frac{M}{2}\rceil$
	negative entries. Hence, $\AMat$ has signed kernel condition of
	order $\lceil\frac{M}{2}\rceil-1$.
\end{proof}
The disadvantage of this result is that the absolute values of coefficients of $\AMat$
grows exponentially in $N$.
Instead one can consider the following trigonometric construction.
\begin{Theorem}[Trigonometric Matrices have $M=2S+1$]\label{Theorem:real_signed_kernel:existence_trigonometric}
	Let $\mathbb{K}=\mathbb{R}$, $S,N\in\mathbb{N}$ and set $M=2S+1$ and $N':=\max\{2S+1-N,0\}$.
	Let the following conditions hold true:
	\begin{enumerate}
		\item
			\label{Condition:Theorem:real_signed_kernel:existence_trigonometric:arguments}
			Let the arguments suffice
			$
					\tVec_s
				\in
					\left(0,\frac{\pi}{N+N'-2S}\right)
				\TextForAll
					s\in\SetOf{S}$.
		\item
			\label{Condition:Theorem:real_signed_kernel:existence_trigonometric:distinctness}
			Let $\tVec_1<\dots<\tVec_S$ and set $\tVec_0:=0$.
	\end{enumerate}
	Let $\AMat\in\mathbb{R}^{M\times N}$ have entries
	$\AMat_{2s+1,n}=\cos\left(\tVec_s\left(n-1+N'\right)\right)$ for all $s\in\SetOf{S}\cup\ZeroSet$ and $n\in\SetOf{N}$
	and $\AMat_{2s,n}=\sin\left(\tVec_s\left(n-1+N'\right)\right)$ for all $s\in\SetOf{S}$
	and $n\in\SetOf{N}$.
	Then, $\AMat$ has the signed kernel condition of order $S$.
\end{Theorem}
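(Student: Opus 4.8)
The plan is to adapt the argument of \thref{Theorem:real_signed_kernel:existence}, replacing the real Descartes rule by its complex counterpart \thref{Theorem:complex_rule_of_signs}, and --- this is the crucial new ingredient --- feeding that theorem \emph{all} the roots encoded by $\AMat\pVec=0$: the $S+1$ built-in ones together with the complex-conjugate partners forced by realness of $\pVec$. So assume $\Kernel{\AMat}\cap\mathbb{R}^N\ne\ZeroSet$ (otherwise the signed kernel condition of order $S$ is vacuous), fix $\pVec\in\Kernel{\AMat}\cap\mathbb{R}^N\setminus\ZeroSet$, and form $\p{a}:=\sum_{n=1}^N\pVec_na^{n-1+N'}$, a non-zero polynomial with real coefficients of some degree $N''$ satisfying $N'\le N''\le N-1+N'$.

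First I would translate $\AMat\pVec=0$ into root conditions on $\p{}$. For $s\in\SetOf{S}$ the rows $2s+1$ and $2s$ read $\sum_n\pVec_n\cos(\tVec_s(n-1+N'))=0$ and $\sum_n\pVec_n\sin(\tVec_s(n-1+N'))=0$, which combine into $\p{\Exp{i\tVec_s}}=0$; the row $1$, the case $s=0$ with $\tVec_0=0$, reads $\sum_n\pVec_n=\p{1}=0$. As $\p{}$ has real coefficients, $\p{\Exp{-i\tVec_s}}=\overline{\p{\Exp{i\tVec_s}}}=0$ for each $s\in\SetOf{S}$. Since $0<\tVec_1<\dots<\tVec_S$ and $\tVec_S<\pi/(N+N'-2S)\le\pi$ (because $N+N'-2S=\max\{N,2S+1\}-2S\ge 1$), the $2S+1$ unit-modulus numbers $\Exp{i\tVec_0},\Exp{\pm i\tVec_1},\dots,\Exp{\pm i\tVec_S}$ are pairwise distinct non-zero roots of $\p{}$. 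A non-zero polynomial of degree $N''$ has at most $N''$ distinct roots, so $N''\ge 2S+1$. If $N\le 2S+1$ then $N-1+N'=2S$, contradicting $N''\ge 2S+1$; hence in that case $\Kernel{\AMat}\cap\mathbb{R}^N=\ZeroSet$ and there is nothing to prove. So I may assume $N\ge 2S+2$, which gives $N'=0$ and $2S+1\le N''\le N-1$.

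Next I would verify the hypotheses of \thref{Theorem:complex_rule_of_signs} applied with these $2S+1$ roots --- all of modulus $1$, with arguments $0,\pm\tVec_1,\dots,\pm\tVec_S$ --- so that the parameter $M$ of that theorem equals $2S+1$. \refP{Condition:Theorem:complex_rule_of_signs:positivity} (positive moduli), \refP{Condition:Theorem:complex_rule_of_signs:distinctness} (pairwise distinctness) and \refP{Condition:Theorem:complex_rule_of_signs:roots} (being roots) are exactly the facts just established, and $N''\ge 2S+1\ge 1$ handles the degree assumption. The step I expect to be the main obstacle is \refP{Condition:Theorem:complex_rule_of_signs:arguments}: I must check that $0,\pm\tVec_1,\dots,\pm\tVec_S$ all lie in $\left(-\frac{\pi}{N''+2-(2S+1)},\frac{\pi}{N''+2-(2S+1)}\right)$. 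Here $N''\ge 2S+1$ makes the denominator $N''+1-2S$ positive, while $N''\le N-1=N+N'-1$ yields $N''+1-2S\le N+N'-2S$, so $\frac{\pi}{N''+1-2S}\ge\frac{\pi}{N+N'-2S}>\tVec_S\ge\tVec_s$ for every $s\in\SetOf{S}$, and therefore $0$ and all $\pm\tVec_s$ sit inside this symmetric interval. This is exactly why the hypothesis imposes $\tVec_s<\pi/(N+N'-2S)$, and also why the conjugate roots cannot be dropped: using only the $S+1$ roots $\Exp{i\tVec_0},\dots,\Exp{i\tVec_S}$ the denominator would be $N''+1-S$, not in general bounded by $N+N'-2S$, and the conclusion would give only $S+1$ sign variations, too weak once $S\ge 1$.

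Finally, \thref{Theorem:complex_rule_of_signs} yields $\SetSize{\textnormal{var}\left(\pVec\right)}\ge 2S+1$, i.e. $\pVec$ contains a subsequence of $2S+2$ non-zero entries of strictly alternating signs, and any such subsequence has at least $S+1$ negative entries; hence $\SetSize{\left\{n\in\SetOf{N}:\pVec_n<0\right\}}\ge S+1>S$. Since $\pVec\in\Kernel{\AMat}\cap\mathbb{R}^N\setminus\ZeroSet$ was arbitrary, $\AMat$ has the signed kernel condition of order $S$. The only things left are routine: the bookkeeping that rows $2s$ and $2s+1$ of $\AMat$ realise the imaginary and real parts of $\pVec\mapsto\p{\Exp{i\tVec_s}}$, and the elementary sign count in an alternating subsequence.
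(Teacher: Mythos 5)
Your proposal is correct and follows essentially the same route as the paper's proof: form the real polynomial $\p{a}=\sum_n\pVec_na^{n-1+N'}$, read off the $S+1$ roots $\Exp{i\tVec_0},\dots,\Exp{i\tVec_S}$ from $\AMat\pVec=0$, add the conjugate roots, check the argument window via $N''\leq N+N'-1$, and invoke \thref{Theorem:complex_rule_of_signs} to get $2S+1$ sign variations and hence at least $S+1$ negative entries. The only cosmetic differences are that the paper packages the sine/cosine rows into an auxiliary complex matrix $\BMat$ with $\Kernel{\BMat}\cap\mathbb{R}^N=\Kernel{\AMat}$ whereas you combine the rows directly, and you dispose of the low-degree case by the bound $N''\leq N-1+N'$ rather than the paper's equivalent $N''<2S+1$ contradiction.
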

\begin{proof}
	Let $\pVec\in\Kernel{\AMat}\cap\mathbb{R}^N\setminus\ZeroSet$.
	Let further $\BMat\in\mathbb{C}^{(S+1)\times N}$ with $\BMat_{s,n}=\Exp{i\tVec_{s-1}\left(n-1+N'\right)}$.
	Note that the real and imaginary parts of the rows of rows of $\BMat$ except for the redundant
	zero row are exactly the rows of $\AMat$ respectively.
	Thus $\Kernel{\BMat}\cap\mathbb{R}^N=\Kernel{\AMat}$ and
	$\pVec\in\Kernel{\BMat}\cap\mathbb{R}^N\setminus\ZeroSet$.
	Now consider the polynomial
	$\p{a}:=\sum_{n=1}^{N}\pVec_{n}a^{n-1+N'}$ which has real coefficients.
	Let $N''$ be the degree of $\p{}$.
	Since $0=\BMat\pVec=\left(\p{\Exp{i\tVec_{s-1}}}\right)_{s\in\SetOf{S+1}}$
	the polynomial $\p{}$ has the complex roots $\Exp{i\tVec_s}$ for $s\in\SetOf{S}$
	and the real root $\Exp{i\tVec_0}$. Since $\p{}$ has real coefficients,
	$\p{}$ also has the $S$ complex roots $\Exp{-i\tVec_s}$ for $s\in\SetOf{S}$.
	Due to the assumed \refP{Condition:Theorem:real_signed_kernel:existence_trigonometric:arguments}
	and \refP{Condition:Theorem:real_signed_kernel:existence_trigonometric:distinctness}
	these $2S+1$ roots are all distinct from each other. Thus, these $2S+1$ roots suffice
	\refP{Condition:Theorem:complex_rule_of_signs:distinctness} and
	\refP{Condition:Theorem:complex_rule_of_signs:roots} of
	\thref{Theorem:complex_rule_of_signs}.
	Since $N''\leq N+N'-1$, one gets
	$\tVec_s\in\left(0,\frac{\pi}{N''+2-(2S+1)}\right)$
	for all $s\in\SetOf{S}$.
	Hence all $2S+1$ roots suffice
	\refP{Condition:Theorem:complex_rule_of_signs:positivity} and
	\refP{Condition:Theorem:complex_rule_of_signs:arguments} of
	\thref{Theorem:complex_rule_of_signs}.
	If $N''<2S+1$, then $\p{}$ would have more roots than its degree
	and hence would need to be the zero polynomial. This would be a
	contradiction and one would get
	$\Kernel{\AMat}\cap\mathbb{R}^N=\Kernel{\BMat}\cap\mathbb{R}^N=\left\{0\right\}$
	and that the signed kernel condition of any order would be fulfilled.
	Thus, assume that $N''\geq 2S+1$. It follows that $N''\geq 1$.
	If $2S+1\geq N$, then one gets the contradiction
	$2S+2\leq N''+1\leq N+N'=2S+1$. Hence, $2S+2\leq N$.
	By \thref{Theorem:complex_rule_of_signs} $\SetSize{\textnormal{var}\left(\pVec\right)}\geq 2S+1$.
	That means that $\pVec$ contains a subsequence of $2S+2$ elements
	so that subsequent elements are non-zero and have opposite sign.
	This subsequence and thus $\pVec$ need to have at least
	$S+1$ negative entries. Hence, $\AMat$ has signed kernel condition of
	order $S$.
\end{proof}
Note that a non-uniform robust recovery guarantee with the non-negative least squares
with similarly many measurements has been proven for a trigonometric construction in
\cite[Theorem~3]{nnls_donoho}. However, since the guarantee from \cite[Theorem~3]{nnls_donoho}
is not uniform in all $\xVec\in\Sigma_S^N\cap\mathbb{R}_+^N$
and it is not clear that it can be solved in polynomial time,
this result can not fill in the question mark in \refP{Table:solved_problems}.
Further, any trigonometric construction like the one from
\thref{Theorem:real_signed_kernel:existence_trigonometric}
has the disadvantage that $\AMat$ can not be represented by finitely many bits.
To fill out the missing case in \refP{Table:solved_problems}.
it remains to show that recovery can be performed in a polynomial time.
\begin{Remark}\label{Remark:polynomial_time}
	Let $\mathbb{K}=\mathbb{R}$.If $M\geq 2S+1$, then uniform and robust recovery
	of non-negative $S$-sparse vectors is possible in polynomial time
	in the input bits\footnote{The input bits are the bits used to represent the input, i.e. $\AMat$ and $\yVec$.}.
\end{Remark}
\begin{proof}
	By \thref{Theorem:real_signed_kernel:existence} with $\aVec_m:=\frac{1}{m}$
	there there exists measurement matrices $\AMat\in\mathbb{R}^{M\times N}$
	that can be represented by finitely bits
	with $M=2S+1$ and signed kernel condition of order $S$.
	By \thref{Theorem:signed_kernel_condition} and
	\thref{Theorem:robustness_and_recovery}
	there exists a decoder
	that allow robust recovery of non-negative $S$-sparse vectors. It remains to show
	that the problem can be solved in polynomial time. This may not be
	the case in general, however if
	$\normRHS{\cdot}=\norm{\cdot}_\infty$, one can introduce
	the slack variable $t$ with the constraints
	$\left(\AMat\zVec-\yVec\right)_m\leq t$,
	$-\left(\AMat\zVec-\yVec\right)_m\leq t$ for all $m$
	and minimize $t$ instead of $\norm{\AMat\zVec-\yVec}_\infty$,
	i.e. solve
	$\argmin{\zVec\in\mathbb{R}_+^N,t\in\mathbb{R}:
		\left(\AMat\zVec-\yVec\right)_m\leq t,-\left(\AMat\zVec-\yVec\right)_m\leq t}t$.
	This problem on the other hand is a linear program.
	Linear programs can be solved in polynomial time of the input
	bits \cite{linear_programming} and to transfer the original
	problem into the linear program can also be done in a polynomial time
	of the input bits.
\end{proof}
It should be noted that in simulations the recovery with
matrices from \thref{Theorem:real_signed_kernel:existence} does not appear to be robust.
Here two explanations are provided.
First, the machine precision might simply be insufficient to accurately represent
the exponentially growing real entries of the Vandermonde matrix.
This can be be avoided by considering the construction of
\thref{Theorem:real_signed_kernel:existence_trigonometric}.
Moreover by \thref{Proposition:cone_closed} and \thref{Theorem:openness}
any matrix in $\mathbb{Q}^{M\times N}\subset\mathbb{R}^{M\times N}$ sufficiently close
would also have a signed kernel condition and can be represented by finitely many bits.
The other explanation is that the robustness constant
gets arbitrarily small for $M\in\mathcal{O}\left(S\right)$
and $N\rightarrow \infty$ according to
\thref{Theorem:bound_complexity_constant}.
Due to \refP{Statement:Theorem:robustness_and_recovery:tau_robust} from
\thref{Theorem:robustness_and_recovery} also any constant for a robust recovery
guarantee needs to grow arbitrarily large.
In contrast to the first explanation, the second explanation is proven by
\thref{Theorem:bound_complexity_constant} below.
\par
The previous constructions allow to characterize exactly when the
complexity constant is $0$.
\begin{Theorem}\label{Theorem:signed_kernel_dimensions}
	Let $\mathbb{K}=\mathbb{R}$, $\normLHS{\cdot}$ a norm on $\mathbb{R}^N$ and
	$\normRHS{\cdot}$ a norm on $\mathbb{R}^M$.
	Let $N,M,S\in\mathbb{N}$.
	Then, the following are equivalent:
	\begin{enumerate}
		\item
			$
				\tau_{\kappa}\left(\Sigma_S^N\cap\mathbb{R}_+^N,\mathbb{R}_+^N,\mathbb{R},\normLHS{\cdot},\normRHS{\cdot}\right)>0
			$.
		\item
			$M\geq \min\{2S+1,N\}$.
	\end{enumerate}
\end{Theorem}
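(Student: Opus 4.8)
The plan is to collapse the statement to a single combinatorial existence question — does $\mathbb{R}^{M\times N}$ contain a matrix with the signed kernel condition of order $S$? — and then answer that question by two complementary arguments. For the reduction I would first note that, by \refP{Condition:Proposition:cone_closed:SKC} of \thref{Proposition:cone_closed}, the cone $\Cone{F-C}$ is closed for $C=\Sigma_S^N\cap\mathbb{R}_+^N$ and $F=\mathbb{R}_+^N$, so \refP{Statement:Theorem:robustness_and_recovery:equivalent} of \thref{Theorem:robustness_and_recovery} applies and, chained with \thref{Theorem:signed_kernel_condition}, yields for every $\AMat\in\mathbb{R}^{M\times N}$ the equivalence $\tau\left(\AMat\right)>0 \Leftrightarrow \AMat$ has the signed kernel condition of order $S$. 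From this I deduce that the complexity constant is positive exactly when some $\AMat\in\mathbb{R}^{M\times N}$ has that property: if $\tau_\kappa>0$ the defining supremum furnishes an $\AMat'$ with $\kappa\left(\AMat'\right)\leq 1$ and $\tau\left(\AMat'\right)>0$; conversely, if $\tau\left(\AMat\right)>0$ for some $\AMat$, then since $\Cone{F-C}\setminus\ZeroSet$ is nonempty (it contains $\eVec_1$) one has $0<\tau\left(\AMat\right)\leq\kappa\left(\AMat\right)<\infty$, so the rescaled matrix $\AMat/\kappa\left(\AMat\right)$ is feasible for the supremum and has positive value, giving $\tau_\kappa>0$.

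For the direction $M\geq\min\{2S+1,N\}\Rightarrow$ condition~1, I split on the size of $M$. If $M\geq 2S+1$, the real Vandermonde matrix with nodes $1,2,\dots,M$ has, by \thref{Theorem:real_signed_kernel:existence}, the signed kernel condition of order $\RoundUp{M/2}-1\geq S$, and the signed kernel condition of any order $S'\geq S$ trivially implies that of order $S$. If instead $M<2S+1$, then $M\geq\min\{2S+1,N\}$ forces $\min\{2S+1,N\}=N\leq M$, so any $\AMat\in\mathbb{R}^{M\times N}$ of rank $N$ has $\Kernel{\AMat}\cap\mathbb{R}^N=\ZeroSet$ and hence satisfies the signed kernel condition of order $S$ vacuously. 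In both cases a suitable matrix exists, so by the reduction $\tau_\kappa>0$.

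For the converse I prove the contrapositive. Assume $M<\min\{2S+1,N\}$, so that $M\leq 2S$ and $M\leq N-1$, and let $\AMat\in\mathbb{R}^{M\times N}$ be arbitrary. Pick $T\subseteq\SetOf{N}$ with $\SetSize{T}=\min\{2S+1,N\}$; the submatrix of $\AMat$ consisting of the columns indexed by $T$ is an $M\times\SetSize{T}$ matrix with $M<\SetSize{T}$, hence has a nonzero real kernel vector, which I extend by zeros to obtain $\vVec\in\Kernel{\AMat}\cap\mathbb{R}^N\setminus\ZeroSet$ with $\supp{\vVec}\subseteq T$. Since $\vVec$ has at most $\SetSize{T}\leq 2S+1$ nonzero coordinates, either $\vVec$ or $-\vVec$ — both of which lie in $\Kernel{\AMat}\cap\mathbb{R}^N$ — has at most $\RoundDown{(2S+1)/2}=S$ strictly negative coordinates, so $\AMat$ does not have the signed kernel condition of order $S$. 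As $\AMat$ was arbitrary, no matrix in $\mathbb{R}^{M\times N}$ has it, so $\tau_\kappa=0$; combining the two directions with the reduction gives the claimed equivalence.

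The only step that is not bookkeeping is the converse: its heart is the pigeonhole-type observation that restricting the necessarily nontrivial kernel to $\min\{2S+1,N\}$ coordinates produces a vector which, after a global sign flip (legitimate because $-\vVec\in\Kernel{\AMat}$ whenever $\vVec\in\Kernel{\AMat}$), has at most $S$ negative entries, and the dimension bound $\SetSize{T}-M\geq 1$ is precisely what guarantees this restricted kernel is nonzero. Everything else reduces to assembling the equivalences of \thref{Theorem:robustness_and_recovery} and \thref{Theorem:signed_kernel_condition} with the construction of \thref{Theorem:real_signed_kernel:existence}.
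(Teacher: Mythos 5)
Your proposal is correct and follows essentially the same route as the paper: the same reduction of positivity of the complexity constant to the existence of a matrix with the signed kernel condition (via \thref{Proposition:cone_closed}, \thref{Theorem:robustness_and_recovery} and \thref{Theorem:signed_kernel_condition} plus rescaling by $\kappa$), the same Vandermonde/trivial-kernel constructions for sufficiency, and the same sparse-kernel-vector sign-counting for necessity, which you merely phrase contrapositively where the paper argues directly from a matrix realizing $\tau_\kappa>0$.
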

\begin{proof}
	$(1)\Rightarrow (2)$:
	There exist a matrix $\AMat$ with $\kappa\left(\AMat\right)\leq 1$ and
	$\tau\left(\AMat\right)
		>\frac{1}{2}\tau_{\kappa}\left(\Sigma_S^N\cap\mathbb{R}_+^N,\mathbb{R}_+^N,\mathbb{R},\normLHS{\cdot},\normRHS{\cdot}\right)>0$.
	Let $\BMat\in\mathbb{R}^{M\times \min\{2S+1,N\}}$
	whose columns are the first $\min\{2S+1,N\}$ columns of $\AMat$.
	By \thref{Theorem:robustness_and_recovery} and
	\thref{Theorem:signed_kernel_condition}
	$\AMat$ has signed kernel condition of order $S$.
	Now let $\vVec\in\Kernel{\AMat}\setminus\ZeroSet$.
	Since $\vVec,-\vVec\in\Kernel{\AMat}\setminus\ZeroSet$,
	the signed kernel condition yields that $\vVec$ needs to have at least
	$S+1$ negative entries and $S+1$ positive entries
	Thus, $\norm{\vVec}_0\geq 2S+2\geq \min\{2S+1,N\}+1$.
	Since this holds for all $\vVec\in\Kernel{\AMat}\setminus\ZeroSet$,
	it follows that $\Kernel{\BMat}=\ZeroSet$.
	However, this is only possible if $M\geq \min\{2S+1,N\}$.
	\par
	$(2)\Rightarrow (1)$:
	To show that the complexity constant is non-zero,
	one matrix with non-zero robustness constant will be constructed.
	If $2S+1\leq N$, then
	\begin{align}
		\nonumber
			\left\lceil\frac{M}{2}\right\rceil-1
		\geq
			\left\lceil\frac{2S+1}{2}\right\rceil-1
		=
			S
	\end{align}
	and the matrix $\AMat$ from \thref{Theorem:real_signed_kernel:existence}
	has the signed kernel condition of order $S$.
	If $N\leq 2S+1$, then $M\geq N$ and there exists
	$\AMat$ with $\Kernel{\AMat}=\ZeroSet$ which immediately has
	the signed kernel condition of order $S$.
	In both cases $\AMat$ has the signed kernel condition of order $S$
	and by \thref{Theorem:signed_kernel_condition} and
	\thref{Theorem:robustness_and_recovery} it follows that
	$\kappa\left(\AMat\right)\geq\tau\left(\AMat\right)>0$.
	By considering the normalized
	$\AMat':=\frac{1}{\kappa\left(\AMat\right)}\AMat$ 
	one gets
	$\tau_{\kappa}\left(\Sigma_S^N\cap\mathbb{R}_+^N,\mathbb{R}_+^N,\mathbb{R},\normLHS{\cdot},\normRHS{\cdot}\right)
		\geq\tau\left(\AMat'\right)
		\geq\frac{\tau\left(\AMat\right)}{\kappa\left(\AMat\right)}>0$.
\end{proof}
Evaluating $\tau\left(\AMat\right)$ would be interesting to determine whether
recovery guarantees are fulfilled and in view of 
\eqref{Equation:Theorem:optimal_matrix_kappa_bounded:optimization}
it could enable methods to determine measurement matrices with recovery guarantees.
Indeed, it can be evaluated in the signed kernel condition case
given certain norms.
\begin{Theorem}\label{Theorem:verifiability}
	Let $\mathbb{K}=\mathbb{R}$, $\AMat\in\mathbb{R}^{M\times N}$, $S\in\mathbb{N}$,
	and $\normRHS{\cdot}$ be a norm on $\mathbb{R}^M$.
	Fix $\normLHS{\cdot}:=\norm{\cdot}_\infty$,
	$C:=\Sigma_S^N\cap\mathbb{R}_+^N$ and $F:=\mathbb{R}_+^N$.
	Given $n\in\SetOf{N}$ and $T\subset\SetOf{N}$ consider
	$pos_n:=\left\{\zVec\in\mathbb{R}^N:\norm{\zVec}_\infty=1,\zVec_n=1\right\}$ and
	$neg_n:=\left\{\zVec\in\mathbb{R}^N:\norm{\zVec}_\infty=1,\zVec_n=-1\right\}$
	and set
	\begin{align}
		\nonumber
			\alpha_1(n,T)
		:=
			\inf_{\vVec\in pos_n:\ProjToIndex{T^c}{\vVec}\in\mathbb{R}_+^N}
			\normRHS{\AMat\vVec}
		\TextAnd
			\alpha_2(n,T)
		:=
			\inf_{\vVec\in neg_n:\ProjToIndex{T^c}{\vVec}\in\mathbb{R}_+^N}
			\normRHS{\AMat\vVec}.
	\end{align}
	Then,
	\begin{align}
		\nonumber
			\tau\left(\AMat\right)
		=
			\inf_{T\subset\SetOf{N},\SetSize{T}=S}
			\min\left\{
				\inf_{n\in\SetOf{N}}\alpha_1(n,T),\inf_{n\in T}\alpha_2(n,T)
			\right\}.
	\end{align}
	In particular, the robustness constant can be calculated by solving
	$\binom{N}{S}\left(N+S\right)$ convex optimization problems
	and, for a suitable choice of $\normRHS{\cdot}$,
	each problem can be solved by a linear program.
\end{Theorem}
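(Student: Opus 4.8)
The plan is to compute $F-C$ explicitly, to reduce $\tau\left(\AMat\right)$ to an infimum over the $\ell_\infty$-unit sphere by homogeneity, and then to prove the stated identity by two opposite inequalities --- one ranging over the vectors on that sphere, the other over the subproblems that define $\alpha_1$ and $\alpha_2$.

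First I would show that $F-C=\mathbb{R}^N_+-\left(\Sigma_S^N\cap\mathbb{R}^N_+\right)=\left\{\vVec\in\mathbb{R}^N:\SetSize{\left\{n:\vVec_n<0\right\}}\leq S\right\}$: if $\vVec$ has at most $S$ negative coordinates and $T_0:=\left\{n:\vVec_n<0\right\}$, then $\vVec=\ProjToIndex{T_0^c}{\vVec}-\left(-\ProjToIndex{T_0}{\vVec}\right)$ with $\ProjToIndex{T_0^c}{\vVec}\in F$ and $-\ProjToIndex{T_0}{\vVec}\in C$, and the reverse inclusion is immediate. This set is invariant under multiplication by non-negative scalars, so $\Cone{F-C}=F-C$, and since $\vVec\mapsto\normRHS{\AMat\vVec}/\norm{\vVec}_\infty$ is homogeneous of degree $0$, \thref{Definition:tau_kappa} gives $\tau\left(\AMat\right)=\inf\left\{\normRHS{\AMat\vVec}:\norm{\vVec}_\infty=1,\ \SetSize{\left\{n:\vVec_n<0\right\}}\leq S\right\}$. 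I would also record that $pos_n$ and $neg_n$ are convex polytopes, since $\norm{\vVec}_\infty\geq\abs{\vVec_n}$ makes the condition ``$\norm{\vVec}_\infty=1$ and $\vVec_n=\pm1$'' equivalent to ``$\norm{\vVec}_\infty\leq1$ and $\vVec_n=\pm1$''.

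To prove that $\tau\left(\AMat\right)$ is at least the right-hand side of the claimed identity, I would take an arbitrary $\vVec$ with $\norm{\vVec}_\infty=1$ and at most $S$ negative coordinates, pick an index $n$ with $\abs{\vVec_n}=1$, and enlarge $T_0:=\left\{m:\vVec_m<0\right\}$ to a set $T$ with $\SetSize{T}=S$ (possible since $S\leq N$, which I take as a standing assumption; the degenerate range $S\geq N$ needs the obvious modification of the index set). Then $\ProjToIndex{T^c}{\vVec}\in\mathbb{R}^N_+$, and if $\vVec_n=1$ then $\vVec$ is feasible for the problem defining $\alpha_1(n,T)$, whereas if $\vVec_n=-1$ then $n\in T$ and $\vVec$ is feasible for $\alpha_2(n,T)$; in either case $\normRHS{\AMat\vVec}$ is bounded below by the claimed right-hand side, and taking the infimum over $\vVec$ finishes this direction. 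For the converse, I would fix $T$ with $\SetSize{T}=S$ and an index $n$, and observe that every $\vVec$ feasible for $\alpha_1(n,T)$, respectively for $\alpha_2(n,T)$ with $n\in T$, satisfies $\norm{\vVec}_\infty=1$, is nonzero, and has all its negative coordinates inside $T$; such a $\vVec$ therefore competes in the infimum defining $\tau\left(\AMat\right)$, giving $\alpha_1(n,T)\geq\tau\left(\AMat\right)$ and, for $n\in T$, $\alpha_2(n,T)\geq\tau\left(\AMat\right)$, so the right-hand side of the identity is no smaller than $\tau\left(\AMat\right)$. Non-emptiness of every feasible set is witnessed by $\pm e_n$.

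The final assertion is bookkeeping: there are $\binom{N}{S}$ sets $T$, and for each one $N$ problems of type $\alpha_1$ and $\SetSize{T}=S$ problems of type $\alpha_2$, giving $\binom{N}{S}\left(N+S\right)$ problems in total; each minimizes the convex map $\vVec\mapsto\normRHS{\AMat\vVec}$ over the polytope obtained from $pos_n$ or $neg_n$ by adding the half-spaces $\left\{\vVec_m\geq0\right\}$, $m\in T^c$, hence is a convex program, and for $\normRHS{\cdot}\in\left\{\norm{\cdot}_1,\norm{\cdot}_\infty\right\}$ it becomes a linear program after the standard epigraph reformulation. I do not expect a genuinely hard step; what needs care is matching the index $n$ with $\abs{\vVec_n}=1$ to the correct range (every $n$ is allowed for $\alpha_1$, but $\alpha_2$ requires $n\in T$, which is automatic because $\vVec_n=-1$ makes $n$ a negative coordinate) and the $S\geq N$ edge case noted above.
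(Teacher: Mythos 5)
Your proposal is correct and follows essentially the same route as the paper: the paper writes $\left(F-C\right)\cap\mathbb{S}^{N-1}_{\norm{\cdot}_\infty}$ as the union over $T$ with $\SetSize{T}=S$ of the $pos_n$ pieces (all $n$) and $neg_n$ pieces (only $n\in T$, the others being empty) and then uses homogeneity to turn $\tau\left(\AMat\right)$ into the infimum over that union, which is exactly your two-inequality argument in set form, including the same counting and linear-program reformulation. The only cosmetic differences are that you first compute $F-C$ explicitly as the vectors with at most $S$ negative entries and you flag the $S\geq N$ edge case, which the paper leaves implicit.
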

\begin{proof}
	Using definition of $C$, $F$ and $\normLHS{\cdot}$ one gets
	\begin{align}
		\nonumber
			\left(F-C\right)\cap\mathbb{S}^{N-1}_{\normLHS{\cdot}}
		=&
			\bigcup_{T\subset\SetOf{N},\SetSize{T}=S}
			\left\{\vVec\in\mathbb{R}^N:\norm{\vVec}_\infty=1\TextAnd \ProjToIndex{T^c}{\vVec}\in\mathbb{R}_+^N\right\}
		\\=&\nonumber
			\bigcup_{T\subset\SetOf{N},\SetSize{T}=S}\bigcup_{n\in\SetOf{N}}
			\left(
				\left\{\vVec\in pos_n:\ProjToIndex{T^c}{\vVec}\in\mathbb{R}_+^N\right\}
				\cup
				\left\{\vVec\in neg_n:\ProjToIndex{T^c}{\vVec}\in\mathbb{R}_+^N\right\}
			\right).
		\\=&\nonumber
			\bigcup_{T\subset\SetOf{N},\SetSize{T}=S}\left(
				\bigcup_{n\in\SetOf{N}}
				\left\{\vVec\in pos_n:\ProjToIndex{T^c}{\vVec}\in\mathbb{R}_+^N\right\}
			\right)\cup\left(
				\bigcup_{n\in T}
				\left\{\vVec\in neg_n:\ProjToIndex{T^c}{\vVec}\in\mathbb{R}_+^N\right\}
			\right),
	\end{align}
	where the last step follows from the fact that
	$\left\{\vVec\in neg_n:\ProjToIndex{T^c}{\vVec}\in\mathbb{R}_+^N\right\}$
	is empty whenever $n\in T^c$.
	Using this and that $F-C$ is a cone the robustness constant is given as
	\begin{align}
		\nonumber
			\tau\left(\AMat\right)
		=&
			\inf_{\vVec\in F-C:\vVec\neq 0}
			\frac{
				\normRHS{\AMat\vVec}
			}{
				\normLHS{\vVec}
			}
		=
			\inf_{\vVec\in\left(F-C\right)\cap\mathbb{S}^{N-1}_{\normLHS{\cdot}}}
			\normRHS{\AMat\vVec}
		\\=&\nonumber
			\inf_{T\subset\SetOf{N},\SetSize{T}=S}
			\min\left\{
				\inf_{n\in\SetOf{N}}\alpha_1\left(n,T\right),
				\inf_{n\in T}\alpha_2\left(n,T\right)
			\right\}.
	\end{align}
	Since $\vVec\mapsto\normRHS{\AMat\vVec}$ is a convex function on the convex set
	$\left\{\vVec\in pos_n:\ProjToIndex{T^c}{\vVec}\in\mathbb{R}_+^N\right\}$,
	$\alpha_1(n,T)$ is a convex optimization
	problem and similarly $\alpha_2(T)$ is a convex optimization problem.
	In total one can calculate
	$\tau\left(\AMat\right)$ by solving
	$\binom{N}{S}\left(N+S\right)$ convex optimization problems.
	The sets $pos_n$ and $neg_n$ are affine sets and can be described
	by linear constraints. By choosing a suitable norm $\normRHS{\cdot}$
	and introducing slack variables similar as in the proof of
	\thref{Remark:polynomial_time}
	each of these problem can be solved by a linear program.
\end{proof}
\subsection{Complex Signed Kernel Condition Case}
In this subsection $\mathbb{K}=\mathbb{C}$ is being considered.
Note that \thref{Remark:polynomial_time} can not be extended to the complex case. This is due to the $\ell_2$ nature of the complex absolute value which prevents the complex
$\ell_p$-norm-minimization from being cast as a linear program.
\begin{Theorem}\label{Theorem:complex_signed_kernel_dimensions}
	Let $\mathbb{K}=\mathbb{C}$, $\normLHS{\cdot}$ a norm on $\mathbb{C}^N$ and
	$\normRHS{\cdot}$ a norm on $\mathbb{C}^M$.
	Let $N,M,S\in\mathbb{N}$.
	Then, the following are equivalent:
	\begin{enumerate}
		\item
			$
				\tau_{\kappa}\left(\Sigma_S^N\cap\mathbb{R}_+^N,\mathbb{R}_+^N,\mathbb{C},\normLHS{\cdot},\normRHS{\cdot}\right)>0
			$.
		\item
			$M\geq \min\{S+1,\frac{1}{2}N\}$.
	\end{enumerate}
\end{Theorem}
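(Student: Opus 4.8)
The plan is to reduce everything to the real setting by splitting a complex matrix into its real and imaginary parts, exploiting that the signed kernel condition constrains only the \emph{real} vectors in the kernel. For $\AMat\in\mathbb{C}^{M\times N}$ let $\BMat\in\mathbb{R}^{2M\times N}$ be the real matrix obtained by stacking $\Re\AMat$ on top of $\Im\AMat$; since $\AMat\vVec=0$ iff $(\Re\AMat)\vVec=0=(\Im\AMat)\vVec$ for $\vVec\in\mathbb{R}^N$, one has $\{\vVec\in\mathbb{R}^N:\AMat\vVec=0\}=\Kernel{\BMat}$, so $\AMat$ has the signed kernel condition of order $S$ exactly when the real matrix $\BMat$ does. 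I would also record the elementary reformulation that, since $M$ is an integer, condition $(2)$ is equivalent to $2M\geq\min\{2S+1,N\}$, which in turn is the disjunction $M\geq S+1$ or $2M\geq N$. In particular the regime $M\geq S+1$ is the genuinely new one, where complex measurements halve the real threshold $2S+1$ of \thref{Theorem:signed_kernel_dimensions}.

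For $(1)\Rightarrow(2)$ I would extract from $\tau_\kappa\left(\Sigma_S^N\cap\mathbb{R}_+^N,\mathbb{R}_+^N,\mathbb{C},\normLHS{\cdot},\normRHS{\cdot}\right)>0$ a matrix $\AMat\in\mathbb{C}^{M\times N}$ with $\kappa\left(\AMat\right)\leq1$ and $\tau\left(\AMat\right)>0$, then invoke \thref{Theorem:robustness_and_recovery} and \thref{Theorem:signed_kernel_condition} to conclude that $\AMat$ has the signed kernel condition of order $S$. Hence every nonzero real kernel vector has more than $S$ negative entries, and, applying the condition to its negative, also more than $S$ positive entries, so at least $2S+2$ nonzero entries in total. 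Passing to $\BMat$ and restricting to its first $\min\{2S+1,N\}$ columns produces, by zero-padding, a real matrix whose kernel must be trivial; since a real matrix with trivial kernel has at least as many rows as columns, this forces $2M\geq\min\{2S+1,N\}$, i.e., condition $(2)$. This direction is routine once the real/imaginary decomposition is in place.

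For $(2)\Rightarrow(1)$ I would construct a witness in two regimes whose union is condition $(2)$. If $2M\geq N$, I would take any real $2M\times N$ matrix of full column rank, split it into two real $M\times N$ blocks, and form the complex matrix having those as real and imaginary parts; it has trivial real kernel, so the signed kernel condition of order $S$ holds vacuously. If $M\geq S+1$, I would reuse the complex matrix $\BMat\in\mathbb{C}^{(S+1)\times N}$ that already appears inside the proof of \thref{Theorem:real_signed_kernel:existence_trigonometric}: that proof shows its real kernel coincides with the real kernel of a real trigonometric matrix having the signed kernel condition of order $S$, so $\BMat$ inherits that condition using only $S+1$ rows; if $M>S+1$ I would append zero rows, which does not change the kernel. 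In both regimes $\Cone{\mathbb{R}_+^N-\left(\Sigma_S^N\cap\mathbb{R}_+^N\right)}$ is closed by \thref{Proposition:cone_closed}, so \thref{Theorem:signed_kernel_condition} and \thref{Theorem:robustness_and_recovery} give $\tau\left(\AMat\right)>0$; since $0<\tau\left(\AMat\right)\leq\kappa\left(\AMat\right)<\infty$, the normalized matrix $\kappa\left(\AMat\right)^{-1}\AMat$ is feasible for the supremum defining $\tau_\kappa$ and has positive value, which yields $(1)$.

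The main obstacle is conceptual rather than computational: recognizing that the complex trigonometric construction of \thref{Theorem:real_signed_kernel:existence_trigonometric} is really an $(S+1)\times N$ complex matrix carrying the signed kernel condition of order $S$ — this is precisely the phenomenon that complex measurements are twice as efficient — together with the mildly fiddly arithmetic matching $M\geq\min\{S+1,\frac{1}{2}N\}$ against $2M\geq\min\{2S+1,N\}$ for integer $M$ when $\frac{1}{2}N$ need not be an integer. A secondary point to check is that \thref{Proposition:cone_closed} and \thref{Theorem:robustness_and_recovery} apply verbatim with $\mathbb{K}=\mathbb{C}$, which is immediate since $\mathbb{R}_+^N-\left(\Sigma_S^N\cap\mathbb{R}_+^N\right)$ lies in $\mathbb{R}^N$ regardless of the ambient field.
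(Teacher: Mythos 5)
Your proposal is correct, and its backbone — identifying the real kernel of a complex $M\times N$ matrix with the kernel of the real $2M\times N$ matrix obtained by stacking real and imaginary parts, so that the signed kernel condition transfers back and forth — is exactly the paper's reduction. The differences are in how each direction is finished. For $(1)\Rightarrow(2)$ the paper does not argue directly on the stacked matrix: it first shows $\tau>0$ for the stacked real matrix (via \thref{Proposition:cone_closed} and the recovery-implies-robustness part of \thref{Theorem:robustness_and_recovery}) and then cites \thref{Theorem:signed_kernel_dimensions} to get $2M\geq\min\{2S+1,N\}$; you instead inline the counting argument (nonzero real kernel vectors need at least $2S+2$ nonzero entries, so the restriction to the first $\min\{2S+1,N\}$ columns is injective), which is shorter and skips the cone-closedness detour — legitimate, since that rank argument is precisely what underlies \thref{Theorem:signed_kernel_dimensions}. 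For $(2)\Rightarrow(1)$ the paper again routes through \thref{Theorem:signed_kernel_dimensions}: it takes a real $2M\times N$ matrix with the real signed kernel condition and folds its two row blocks into the real and imaginary parts of a complex $M\times N$ matrix; you instead exhibit explicit complex witnesses per regime (a full-column-rank construction when $2M\geq N$, and the $(S+1)\times N$ complex exponential matrix $\BMat$ from the proof of \thref{Theorem:real_signed_kernel:existence_trigonometric}, padded with zero rows, when $M\geq S+1$). Both constructions are valid; the paper's folding is more uniform and needs no case split, while yours makes the ``complex measurements are twice as efficient'' phenomenon concrete and connects it to the Vandermonde/trigonometric constructions (\thref{Theorem:complex_signed_kernel:existence} would serve equally well as the witness). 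Your integer-arithmetic reconciliation of $M\geq\min\{S+1,\tfrac{1}{2}N\}$ with $2M\geq\min\{2S+1,N\}$ and your remark that \thref{Proposition:cone_closed} applies verbatim over $\mathbb{C}$ are both correct and match the paper.
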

\begin{proof}
	The proof follows from the real case.
	To avoid confusion the term real and complex is added
	in front of the signed kernel condition to specify that. Further it should be noted
	that matrices will have kernels over different fields.
	\par
	$(1)\Rightarrow (2)$:
	By assumption there exists a matrix $\BMat\in\mathbb{K}^{M\times N}$ with $\tau\left(\BMat\right)>0$.
	By \refP{Statement:Theorem:robustness_and_recovery:robust=>recovery} of
	\thref{Theorem:robustness_and_recovery} and \thref{Theorem:signed_kernel_condition}
	$\BMat$ has complex signed kernel condition of order $S$.
	Define $\AMat\in\mathbb{R}^{2M\times N}$ by letting the first $M$ rows of $\AMat$ be
	the real part of $\BMat$ and the last $M$ rows of $\AMat$ be the imaginary part of $\BMat$.
	It follows that $\Kernel{\BMat}\cap\mathbb{R}^N=\Kernel{\AMat}=\Kernel{\AMat}\cap\mathbb{R}^N$.
	Thus, $\AMat$ has real signed kernel condition of order $S$.
	Further, by \thref{Proposition:cone_closed} $\Cone{F-C}$ is closed.
	By \refP{Statement:Theorem:robustness_and_recovery:recovery=>robust} of
	\thref{Theorem:robustness_and_recovery} it follows that $\tau\left(\AMat\right)>0$.
	By \thref{Theorem:signed_kernel_dimensions} one gets $2M\geq \min\{2S+1,N\}$
	and thus $M\geq \min\{S+\frac{1}{2},\frac{1}{2}N\}$.
	Since $M$ needs to be an integer, this is equivalent to $M\geq \min\{S+1,\frac{1}{2}N\}$.
	\par
	$(2)\Rightarrow (1)$: Note that $2M\geq \min\{2S+2,N\}\geq\min\{2S+1,N\}$.
	Due to \thref{Theorem:signed_kernel_dimensions} there exists
	a matrix $\AMat\in\mathbb{R}^{2M\times N}$ with real signed kernel condition
	of order $S$. Define $\BMat\in\mathbb{K}^{M\times N}$ by letting its real part
	be the first $M$ rows of $\AMat$ and its imaginary part be the last $M$ rows of $\AMat$.
	It follows that $\Kernel{\BMat}\cap\mathbb{R}^N=\Kernel{\AMat}=\Kernel{\AMat}\cap\mathbb{R}^N$.
	Thus, $\BMat$ has complex signed kernel condition of order $S$.
	By \thref{Theorem:signed_kernel_condition} one has
	$\left\{\xVec\right\}=\argmin{\zVec\in\mathbb{R}_+^N}\normRHS{\BMat\zVec-\BMat\xVec}$
	for all $\xVec\in\Sigma_S^N\cap\mathbb{R}^N_+$
	and by \thref{Proposition:cone_closed} $\Cone{F-C}$ is closed.
	By \refP{Statement:Theorem:robustness_and_recovery:recovery=>robust} of
	\thref{Theorem:robustness_and_recovery} one gets
	$\kappa\left(\BMat\right)\geq\tau\left(\BMat\right)>0$.
	Setting $\BMat':=\frac{\BMat}{\kappa\left(\BMat\right)}$ yields
	$\tau_{\kappa}\left(\Sigma_S^N\cap\mathbb{R}_+^N,\mathbb{R}_+^N,\mathbb{C},\normLHS{\cdot},\normRHS{\cdot}\right)
		\geq\tau\left(\BMat'\right)=\frac{\tau\left(\BMat\right)}{\kappa\left(\BMat\right)}>0$.
\end{proof}
By using the matrix $\AMat$ from \thref{Theorem:real_signed_kernel:existence}
in the proof of \thref{Theorem:complex_signed_kernel_dimensions} one can show that matrices of the form
$\BMat_{m,n}=\aVec_m^{n-1}+i\aVec_{m+M}^{n-1}$ have a signed kernel condition if $M\geq S+1$
and all $\aVec_m>0$ are distinct. While these matrices have the optimal number of
measurements in this complex case, they are not Vandermonde matrices as in the next theorem.
\begin{Theorem}[Complex Vandermonde Matrices have $S=M-1$]\label{Theorem:complex_signed_kernel:existence}
	Let $\mathbb{K}=\mathbb{C}$, $M,N\in\mathbb{N}$ and set $N':=\max\{2M-N,0\}$.
	Let the following conditions hold true:
	\begin{enumerate}
		\item
			\label{Condition:Theorem:complex_signed_kernel:existence:positivity}
			Let $\aVec_m>0$ for all $m\in \SetOf{M}$.
		\item
			\label{Condition:Theorem:complex_signed_kernel:existence:arguments}
			Let the arguments suffice
			$\tVec_m\in\left(0,\frac{\pi}{N+N'+1-2M}\right)$ for all $m\in\SetOf{M}$.
		\item
			\label{Condition:Theorem:complex_signed_kernel:existence:distinctness}
			Let $\aVec_m\Exp{i\tVec_m}$ for all $m\in\SetOf{M}$ be distinct from each other.
	\end{enumerate}
	Let $\AMat\in\mathbb{C}^{M\times N}$
	have entries $\AMat_{m,n}=\aVec_m^{n-1+N'}\Exp{i\tVec_m\left(n-1+N'\right)}$.
	Then, $\AMat$ has the signed kernel condition of order $S:=M-1$.
	In particular, all conditions can be fulfilled
	together with the additional property
	$\abs{\AMat_{m,n}}=1$ by choosing
	and
	\begin{align}
		\nonumber
			\aVec_m
		=
			1
		\TextAnd
			\tVec_m
		=
			\frac{m}{M+1}\frac{\pi}{N+N'+1-2M}
	\end{align}
	for all $m\in\SetOf{M}$.
\end{Theorem}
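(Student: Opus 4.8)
The plan is to transcribe the argument of \thref{Theorem:real_signed_kernel:existence_trigonometric}, replacing its $2S+1$ roots (one real together with $S$ conjugate pairs) by the $2M$ roots available here as $M$ genuine conjugate pairs. First I would fix an arbitrary $\pVec\in\Kernel{\AMat}\cap\mathbb{R}^N\setminus\ZeroSet$; if no such vector exists then $\Kernel{\AMat}\cap\mathbb{R}^N=\ZeroSet$ and $\AMat$ trivially has the signed kernel condition of every order, so assume such a vector exists. Associate to $\pVec$ the real-coefficient polynomial $\p{a}:=\sum_{n=1}^{N}\pVec_n a^{n-1+N'}$. Since $\AMat_{m,n}=\left(\aVec_m\Exp{i\tVec_m}\right)^{n-1+N'}$, the identity $\AMat\pVec=0$ says precisely $\p{\aVec_m\Exp{i\tVec_m}}=0$ for every $m\in\SetOf{M}$, and because $\p{}$ has real coefficients it also vanishes at the conjugates $\aVec_m\Exp{-i\tVec_m}$. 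The first point to nail down is that these $2M$ numbers are pairwise distinct: within the family $\left(\aVec_m\Exp{i\tVec_m}\right)_m$ this is \refP{Condition:Theorem:complex_signed_kernel:existence:distinctness}, for the conjugate family it follows by conjugation, and $\aVec_m\Exp{i\tVec_m}\neq\aVec_{m'}\Exp{-i\tVec_{m'}}$ because $\tVec_m,\tVec_{m'}\in(0,\pi)$ — note $N+N'+1-2M\geq 1$ since $N+N'\geq 2M$ — so the former has strictly positive and the latter strictly negative imaginary part.

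Next comes the degree bookkeeping. As $\pVec\neq 0$ the polynomial $\p{}$ is not identically zero, and having $2M$ distinct roots forces its degree $N''$ to satisfy $N''\geq 2M\geq 1$. On the other hand $N''\leq N+N'-1$ by construction, hence $N+N'\geq 2M+1$; combined with $N'=\max\{2M-N,0\}$ this rules out $N'=2M-N$ and leaves $N'=0$ and $N\geq 2M+1$ (consistently with the remark following \thref{Theorem:complex_rule_of_signs}). Then I would apply \thref{Theorem:complex_rule_of_signs} to the $2M$ roots $\aVec_1\Exp{i\tVec_1},\dots,\aVec_M\Exp{i\tVec_M},\aVec_1\Exp{-i\tVec_1},\dots,\aVec_M\Exp{-i\tVec_M}$: positivity of the moduli is \refP{Condition:Theorem:complex_signed_kernel:existence:positivity}, distinctness and the root property were just established, and for the argument bound one uses $\pm\tVec_m\in\left(-\frac{\pi}{N+N'+1-2M},\frac{\pi}{N+N'+1-2M}\right)\subset\left(-\frac{\pi}{N''+2-2M},\frac{\pi}{N''+2-2M}\right)$, the inclusion holding because $N''+2-2M\leq N+N'+1-2M$ with both quantities positive. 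The denominator appearing in \thref{Theorem:complex_rule_of_signs} is now $N''+2-2M$ rather than $N''+2-M$, precisely because the number of roots has been doubled to $2M$.

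\thref{Theorem:complex_rule_of_signs} then yields $\SetSize{\textnormal{var}(\pVec)}\geq 2M$, i.e. $\pVec$ carries a subsequence of $2M+1=2S+3$ entries of strictly alternating sign, which therefore contains at least $M=S+1$ negative entries. Hence $\SetSize{\{n\in\SetOf{N}:\pVec_n<0\}}>S$ for every $\pVec\in\Kernel{\AMat}\cap\mathbb{R}^N\setminus\ZeroSet$, which is the signed kernel condition of order $S=M-1$. For the ``in particular'' claim I would simply verify the three hypotheses for the explicit data $\aVec_m=1$ and $\tVec_m=\frac{m}{M+1}\cdot\frac{\pi}{N+N'+1-2M}$: positivity is immediate; $0<\frac{m}{M+1}<1$ for $m\in\SetOf{M}$ gives $\tVec_m\in\left(0,\frac{\pi}{N+N'+1-2M}\right)$; the map $m\mapsto\tVec_m$ is injective with values in $(0,\pi)$, so the $\Exp{i\tVec_m}$, and hence the $\aVec_m\Exp{i\tVec_m}$, are pairwise distinct; and $\abs{\AMat_{m,n}}=\abs{\aVec_m}^{n-1+N'}=1$.

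The main obstacle is bookkeeping rather than a genuinely new difficulty: one must handle the trivial case $\Kernel{\AMat}\cap\mathbb{R}^N=\ZeroSet$ at the outset (this is exactly where $N'>0$ is permitted to enter), argue the pairwise distinctness of the $2M$ conjugate-paired roots with care, and track how doubling the number of roots from $M$ to $2M$ shrinks the admissible argument window in \thref{Theorem:complex_rule_of_signs} from $\frac{\pi}{N''+2-M}$ to $\frac{\pi}{N''+2-2M}$ — which is exactly the window that the prescribed interval in \refP{Condition:Theorem:complex_signed_kernel:existence:arguments} is tailored to accommodate. Everything else is a routine adaptation of \thref{Theorem:real_signed_kernel:existence_trigonometric}.
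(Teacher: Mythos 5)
Your proposal is correct and follows essentially the same route as the paper's proof: the same real-coefficient polynomial $\p{a}=\sum_n\pVec_n a^{n-1+N'}$, the same $2M$ conjugate-paired roots fed into \thref{Theorem:complex_rule_of_signs} with the shrunken window $\frac{\pi}{N''+2-2M}$, the same degree bookkeeping forcing $N'=0$ and $N\geq 2M+1$, and the same count of at least $M=S+1$ negative entries. Your explicit cross-conjugate distinctness argument and the verification of the ``in particular'' choice are details the paper leaves implicit, but the argument is the same.
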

\begin{proof}
	Let $\pVec\in\Kernel{\AMat}\cap\mathbb{R}^N\setminus\ZeroSet$ and consider the polynomial
	$\p{a}:=\sum_{n=1}^{N}\pVec_{n}a^{n-1+N'}$ which has real coefficients.
	Let $N''$ be the degree of $\p{}$.
	Since $0=\AMat\pVec=\left(\p{\aVec_m\Exp{i\tVec_m}}\right)_{m\in\SetOf{M}}$
	the polynomial $\p{}$ has the complex roots $\aVec_m\Exp{i\tVec_m}$ for $m\in\SetOf{M}$.
	Since $\p{}$ has real coefficients, $\p{}$ also has the complex roots
	$\aVec_m\Exp{i\left(-\tVec_m\right)}$ for $m\in\SetOf{M}$.
	Due to the assumed \refP{Condition:Theorem:complex_signed_kernel:existence:arguments}
	and \refP{Condition:Theorem:complex_signed_kernel:existence:distinctness}
	these $2M$ roots are all distinct from each other. Thus, these $2M$ roots suffice
	\refP{Condition:Theorem:complex_rule_of_signs:distinctness} and
	\refP{Condition:Theorem:complex_rule_of_signs:roots} of
	\thref{Theorem:complex_rule_of_signs}.
	Since $N''\leq N+N'-1$, one gets
	$\tVec_m\in\left(0,\frac{\pi}{N''+2-2M}\right)$.
	Hence all $2M$ roots suffice
	\refP{Condition:Theorem:complex_rule_of_signs:positivity} and
	\refP{Condition:Theorem:complex_rule_of_signs:arguments} of
	\thref{Theorem:complex_rule_of_signs}.
	If $N''<2M$, then $\p{}$ would have more roots than its degree
	and hence would need to be the zero polynomial. This would be a
	contradiction and one would get
	$\Kernel{\AMat}\cap\mathbb{R}^N=\left\{0\right\}$
	and that the signed kernel condition of any order would be fulfilled.
	Thus, assume that $N''\geq 2M$. It follows that $N''\geq 1$.
	If $2M\geq N$, then one gets the contradiction
	$2M+1\leq N''+1\leq N+N'=2M$. Hence, $2M+1\leq N$.
	By \thref{Theorem:complex_rule_of_signs} $\SetSize{\textnormal{var}\left(\pVec\right)}\geq 2M$.
	That means that $\pVec$ contains a subsequence of $2M+1$ elements
	so that subsequent elements are non-zero and have opposite sign.
	This subsequence and thus $\pVec$ need to have at least
	$M$ negative entries. Hence, $\AMat$ has signed kernel condition of
	order $M-1$.
\end{proof}
The advantage of this construction is that it is preserved under hermitian products.
\begin{Proposition}[Hermitian Products of Vandermonde Matrices have $S=\left\lceil\frac{1}{2}M^2\right\rceil-1$]
	\label{Proposition:complex_signed_kernel:hermitian_product}
	Let $\mathbb{K}=\mathbb{C}$, $M,N\in\mathbb{N}$, $\sigma:\SetOf{M}^2\rightarrow \SetOf{M^2}$
	be a bijection and set $N':=\max\{M^2-N,0\}$.
	Let the following conditions hold true:
	\begin{enumerate}
		\item
			\label{Condition:Proposition:complex_signed_kernel:hermitian_product:positivity}
			Let $\aVec_m>0$ for all $m\in \SetOf{M}$.
		\item
			\label{Condition:Proposition:complex_signed_kernel:hermitian_product:arguments}
			Let $\tVec_{m}\in\left(0,\frac{\pi}{N+N'+1-M^2}\right)$ for all $m\in\SetOf{M}$.
		\item
			\label{Condition:Proposition:complex_signed_kernel:hermitian_product:ordered}
			Let $\tVec_1<\dots<\tVec_M$.
		\item
			\label{Condition:Proposition:complex_signed_kernel:hermitian_product:distinctness}
			Let
			$\aVec_{m_1}\aVec_{m_2}\Exp{i\left(\tVec_{m_1}-\tVec_{m_2}\right)}$
			for all $m_1,m_2\in\SetOf{M}$ with $m_1\leq m_2$ be disctinct from each other.
	\end{enumerate}
	Let $\AMat\in\mathbb{C}^{M\times N}$
	have entries
	$\AMat_{m,n}=\aVec_{m}^{n-1+N'}\Exp{i\tVec_{m}\left(n-1+N'\right))}$
	and let $\AMat'\in\mathbb{C}^{M^2\times N}$ have entries
	$\AMat'_{\sigma\left(\left(m_k\right)_{k\in\SetOf{2}}\right),n}
		=\AMat_{m_1,n}\AMat_{m_2,n}^H$
	for all $m_1,m_2\in\SetOf{M},n\in\SetOf{N}$.
	In other words the $n$-th column of $\AMat'$ is a reordering of
	$\bVec\bVec^H$ where $\bVec$ is the $n$-th
	column of $\AMat$.
	Then, $\AMat'$ has the signed kernel condition of order
	$S:=\left\lceil\frac{1}{2}M^2\right\rceil-1$.
	\par
	In particular, the conditions can be fulfilled
	with $\abs{\AMat_{m,n}}\leq 1$ by choosing
	\begin{align}
		\nonumber
			\aVec_{m}
		:
			=m^{-\frac{1}{2}}
		\TextAnd
			\tVec_m
		:=
			\sqrt{\frac{\pi_m}{\pi_{M+1}}}\frac{\pi}{N+N'+1-M^2}
	\end{align}
	where $\pi_m$ is the $m$-th prime number.
\end{Proposition}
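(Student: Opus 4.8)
The plan is to prove this exactly as \thref{Theorem:complex_signed_kernel:existence}, with the single family of $M$ Vandermonde nodes replaced by the $M^2$ ``pair products'' coming from the Hermitian product. First I would record the shape of the entries of $\AMat'$: for $m_1,m_2\in\SetOf{M}$ and $n\in\SetOf{N}$,
\begin{align}
	\nonumber
	\AMat'_{\sigma\left(\left(m_1,m_2\right)\right),n}
	=
	\AMat_{m_1,n}\AMat_{m_2,n}^H
	=
	\left(\aVec_{m_1}\aVec_{m_2}\right)^{n-1+N'}
	\Exp{i\left(\tVec_{m_1}-\tVec_{m_2}\right)\left(n-1+N'\right)},
\end{align}
so that, up to the row permutation $\sigma$ (which leaves the kernel unchanged), $\AMat'$ is a generalized complex Vandermonde matrix with nodes $z_{m_1,m_2}:=\aVec_{m_1}\aVec_{m_2}\Exp{i\left(\tVec_{m_1}-\tVec_{m_2}\right)}$, $\left(m_1,m_2\right)\in\SetOf{M}^2$. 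Fix an arbitrary $\pVec\in\Kernel{\AMat'}\cap\mathbb{R}^N\setminus\ZeroSet$ (if there is none, the signed kernel condition holds vacuously), set $\p{a}:=\sum_{n=1}^{N}\pVec_na^{n-1+N'}$, which has real coefficients and some degree $N''$, and observe that $\AMat'\pVec=0$ means precisely that each $z_{m_1,m_2}$ is a root of $\p{}$.

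Next I would show that the $M^2$ nodes are pairwise distinct, which is the structural heart of the argument. Since $N+N'\geq M^2$ and hence $N+N'+1-M^2\geq 1$, \refP{Condition:Proposition:complex_signed_kernel:hermitian_product:arguments} makes every argument $\tVec_{m_1}-\tVec_{m_2}$ have absolute value $<\pi$, so the polar form of $z_{m_1,m_2}$ is unambiguous and it is determined by its modulus $\aVec_{m_1}\aVec_{m_2}>0$ and argument $\tVec_{m_1}-\tVec_{m_2}$. By \refP{Condition:Proposition:complex_signed_kernel:hermitian_product:ordered} the nodes with $m_1<m_2$ have strictly negative argument, the diagonal nodes $m_1=m_2$ are real positive, and the nodes with $m_1>m_2$ equal $\overline{z_{m_2,m_1}}$ and so have strictly positive argument; thus the three families occupy pairwise disjoint regions of $\mathbb{C}$. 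The $z_{m_1,m_2}$ with $m_1\leq m_2$ are distinct by \refP{Condition:Proposition:complex_signed_kernel:hermitian_product:distinctness}, and the family $m_1>m_2$, being the conjugates of the non-real members of the former, is distinct too. Hence all $M^2=\binom{M}{2}+M+\binom{M}{2}$ nodes are distinct.

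Then I would repeat the degree-counting of \thref{Theorem:complex_signed_kernel:existence}. If $N''<M^2$, then $\p{}$ is a nonzero polynomial with more distinct roots than its degree, which is impossible; so $N''\geq M^2\geq 1$, and since $\deg\p{}\leq N+N'-1$ this also forces $N'=0$ and $N\geq M^2+1$ (otherwise $N'=M^2-N$ and $M^2+1\leq N''+1\leq N+N'=M^2$). Consequently $N''+2-M^2\geq 2>0$ and $\tfrac{\pi}{N+N'+1-M^2}\leq\tfrac{\pi}{N''+2-M^2}$, so the $M^2$ distinct roots satisfy \refP{Condition:Theorem:complex_rule_of_signs:positivity}, \refP{Condition:Theorem:complex_rule_of_signs:arguments}, \refP{Condition:Theorem:complex_rule_of_signs:distinctness} and \refP{Condition:Theorem:complex_rule_of_signs:roots} of \thref{Theorem:complex_rule_of_signs} (its ``$M$'' being our $M^2$), giving $\SetSize{\textnormal{var}\left(\pVec\right)}\geq M^2$. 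Therefore $\pVec$ contains a subsequence of $M^2+1$ nonzero entries of alternating sign, at least $\RoundDown{\frac{M^2+1}{2}}=\RoundUp{\frac{M^2}{2}}$ of which are negative, i.e. $\SetSize{\left\{n\in\SetOf{N}:\pVec_n<0\right\}}\geq\RoundUp{\frac{M^2}{2}}>S=\RoundUp{\frac{M^2}{2}}-1$; as $\pVec$ was arbitrary, $\AMat'$ has the signed kernel condition of order $S$.

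Finally, for the ``in particular'' statement I would substitute $\aVec_m=m^{-1/2}$ and $\tVec_m=\sqrt{\pi_m/\pi_{M+1}}\cdot\tfrac{\pi}{N+N'+1-M^2}$ and check the four hypotheses. Positivity is immediate, $\abs{\AMat_{m,n}}=\aVec_m^{n-1+N'}=m^{-(n-1+N')/2}\leq 1$ because $m\geq 1$ and $n-1+N'\geq 0$, the arguments lie in $\left(0,\tfrac{\pi}{N+N'+1-M^2}\right)$ because $1\leq\pi_m<\pi_{M+1}$, and $\tVec_1<\dots<\tVec_M$ because the primes are increasing and $\sqrt{\cdot}$ is increasing. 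I expect the only real difficulty to be \refP{Condition:Proposition:complex_signed_kernel:hermitian_product:distinctness}: comparing moduli in $z_{m_1,m_2}=z_{m_1',m_2'}$ (with $m_1\leq m_2$, $m_1'\leq m_2'$) gives $m_1m_2=m_1'm_2'$, and comparing arguments (unambiguous, since all have absolute value $<\pi$) gives $\sqrt{\pi_{m_1}}+\sqrt{\pi_{m_2'}}=\sqrt{\pi_{m_1'}}+\sqrt{\pi_{m_2}}$. By the classical linear independence of the square roots of distinct primes over $\mathbb{Q}$, the vanishing integer combination $\sqrt{\pi_{m_1}}+\sqrt{\pi_{m_2'}}-\sqrt{\pi_{m_1'}}-\sqrt{\pi_{m_2}}=0$ forces $\left\{m_1,m_2'\right\}=\left\{m_1',m_2\right\}$ as multisets, and together with $m_1m_2=m_1'm_2'$ this gives $\left(m_1,m_2\right)=\left(m_1',m_2'\right)$. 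Everything else is a routine transcription of the proof of \thref{Theorem:complex_signed_kernel:existence}.
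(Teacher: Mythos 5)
Your proposal is correct and follows essentially the same route as the paper's proof: reduce $\AMat'$ to a generalized Vandermonde matrix with the $M^2$ pair-product nodes, verify their distinctness and the angle bounds so that \thref{Theorem:complex_rule_of_signs} applies with $M^2$ roots, count sign variations to get at least $\RoundUp{\frac{1}{2}M^2}$ negative kernel entries, and settle the ``in particular'' distinctness via linear independence of $\sqrt{\pi_m}$ over $\mathbb{Q}$ plus a modulus comparison for the degenerate case. The only (harmless) cosmetic differences are that you read all $M^2$ roots directly off the rows of $\AMat'$ instead of invoking real coefficients for the conjugate roots, and you compare moduli and arguments directly rather than through the $2\pi l$ congruence.
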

\begin{proof}
	Set $CR_{M,2}:=\SetSize{\left\{\left(m_k\right)_{k\in\SetOf{2}}:m_1,m_2\in\SetOf{M},m_2>m_1\right\}}$,
	$RR_{M,2}:=\#\big\{\left(m_k\right)_{k\in\SetOf{2}}:m_1,m_2\in\SetOf{M},m_2=m_1\big\}$
	and $R_{M,2}:=2CR_{M,2}+RR_{M,2}$.
	Note that $CR_{M,2}=\frac{1}{2}\left(M^2-M\right)$, $RR_{M,2}=M$
	and  $R_{M,2}=M^2$.
	\par
	Let $\pVec\in\Kernel{\AMat}\cap\mathbb{R}^N\setminus\ZeroSet$ and consider the polynomial
	$\p{a}:=\sum_{n=1}^{N}\pVec_{n}a^{n-1+N'}$ which has real coefficients.
	Let $N''$ be the degree of $\p{}$.
	Since $0=\left(\AMat\pVec\right)_{\sigma\left(\left(m_k\right)_{k\in\SetOf{2}}\right)}
		=\p{\aVec_{m_1}\aVec_{m_2}\Exp{i\left(\tVec_{m_1}-\tVec_{m_2}\right)}}$
	for all $m_1\geq m_2$, the polynomial $\p{}$ has the $CR_{M,2}$ complex roots
	$\aVec_{m_1}\aVec_{m_2}\Exp{i\left(\tVec_{m_1}-\tVec_{m_2}\right)}$ for all $m_1>m_2$ 
	and the $RR_{M,2}$ real roots
	$\aVec_{m_1}\aVec_{m_2}\Exp{i\left(\tVec_{m_1}-\tVec_{m_2}\right)}$ for $m_1=m_2$.
	Since $\p{}$ has real coefficients, $\p{}$ also has the $CR_{M,2}$ complex roots
	$\aVec_{m_1}\aVec_{m_2}\Exp{i\left(\tVec_{m_1}-\tVec_{m_2}\right)}$ for all $m_1<m_2$.
	Due to the assumed \refP{Condition:Proposition:complex_signed_kernel:hermitian_product:ordered} and
	\refP{Condition:Proposition:complex_signed_kernel:hermitian_product:arguments}
	one gets the angle condition
	\begin{align}
		\label{Equation:Proposition:complex_signed_kernel:hermitian_product:angles}
			0
		<
			\tVec_{m_1}-\tVec_{m_2}
		<
			\frac{\pi}{N+N'+1-R_{M,2}}
		<
			\pi
		\TextForAll
			m_1>m_2
	\end{align}
	so that roots $\Exp{i\left(\tVec_{m_1}-\tVec_{m_2}\right)}\aVec_{m_1}\aVec_{m_2}$
	with $m_1<m_2$ are distinct from these roots with $m_1>m_2$.
	Due to this and the assumed
	\refP{Condition:Proposition:complex_signed_kernel:hermitian_product:distinctness}
	these $R_{M,2}=2CR_{M,2}+RR_{M,2}$ roots given by
	$\Exp{i\left(\tVec_{m_1}-\tVec_{m_2}\right)}\aVec_{m_1}\aVec_{m_2}$ for $m_1,m_2\in\SetOf{M}$
	are all distinct from each other. Thus, these $R_{M,2}$ roots suffice
	\refP{Condition:Theorem:complex_rule_of_signs:distinctness} and
	\refP{Condition:Theorem:complex_rule_of_signs:roots} of
	\thref{Theorem:complex_rule_of_signs}.
	By $N''\leq N+N'-1$ and
	\eqref{Equation:Proposition:complex_signed_kernel:hermitian_product:angles}, one gets
	$\tVec_{m_1}-\tVec_{m_2}\in\left(-\frac{\pi}{N''+2-R_{M,2}},\frac{\pi}{N''+2-R_{M,2}}\right)$
	for all $m_1,m_2\in\SetOf{M}$.
	Hence all $R_{M,2}$ roots suffice
	\refP{Condition:Theorem:complex_rule_of_signs:positivity} and
	\refP{Condition:Theorem:complex_rule_of_signs:arguments} of
	\thref{Theorem:complex_rule_of_signs}.
	If $N''<R_{M,2}$, then $\p{}$ would have more roots than its degree
	and hence would need to be the zero polynomial. This would be a
	contradiction and one would get
	$\Kernel{\AMat}\cap\mathbb{R}^N=\left\{0\right\}$
	and that the signed kernel condition of any order would be fulfilled.
	Thus, assume that $N''\geq R_{M,2}$. It follows that $N''\geq 1$.
	If $R_{M,2}\geq N$, then one gets the contradiction
	$R_{M,2}+1\leq N''+1\leq N+N'=R_{M,2}$. Hence, $R_{M,2}+1\leq N$.
	By \thref{Theorem:complex_rule_of_signs} $\SetSize{\textnormal{var}\left(\pVec\right)}\geq R_{M,2}$.
	That means that $\pVec$ contains a subsequence of $R_{M,2}+1$ elements
	so that subsequent elements are non-zero and have opposite sign.
	This subsequence and thus $\pVec$ need to have at least
	$\left\lfloor\frac{1}{2}\left(R_{M,2}+1\right)\right\rfloor
		=\left\lceil\frac{1}{2}R_{M,2}\right\rceil$ negative entries.
	Hence, $\AMat$ has signed kernel condition of
	order $\left\lceil\frac{1}{2}R_{M,2}\right\rceil-1=\left\lceil\frac{1}{2}M^2\right\rceil-1$.
	\par
	For the in particular part it suffices to show that the choice suffices
	the conditions of this corollary. It is clear that
	\refP{Condition:Proposition:complex_signed_kernel:hermitian_product:positivity},
	\refP{Condition:Proposition:complex_signed_kernel:hermitian_product:arguments} and
	\refP{Condition:Proposition:complex_signed_kernel:hermitian_product:ordered}
	are fulfilled. For the distinctness condition let $m_1,m_2,m'_1,m'_2\in\SetOf{M}$
	with $m_1\leq m_2$, $m'_1\leq m'_2$ and
	\begin{align}
		\label{Equation:Proposition:complex_signed_kernel:hermitian_product:exp=exp}
			\aVec_{m_1}\aVec_{m_2}\Exp{i\left(\tVec_{m_1}-\tVec_{m_2}\right)}
		=
			\aVec_{m'_1}\aVec_{m'_2}\Exp{i\left(\tVec_{m'_1}-\tVec_{m'_2}\right)}.
	\end{align}
	By $\aVec_m>0$ this can only be fulfilled if there exists an $l\in\mathbb{Z}$ such that
	\begin{align}
		\label{Equation:Proposition:complex_signed_kernel:hermitian_product:zero+2pl}
			0
		=
			\tVec_{m_1}-\tVec_{m_2}-\tVec_{m'_1}+\tVec_{m'_2}+2\pi l
	\end{align}
	Due to
	\eqref{Equation:Proposition:complex_signed_kernel:hermitian_product:angles} one gets
	$\tVec_{m''_1}-\tVec_{m''_2}\in\left(0,\pi\right)$ for $\left(m''_1,m''_2\right)=\left(m_1,m_2\right)$
	and $\left(m''_1,m''_2\right)=\left(m'_1,m'_2\right)$. Thus, $l$ needs to be $0$.
	Plugging this and the choice of $\tVec_m$ into
	\eqref{Equation:Proposition:complex_signed_kernel:hermitian_product:zero+2pl}
	yields
	\begin{align}
		\nonumber
			0
		=&
			\sqrt{\frac{\pi_{m_1}}{\pi_{M+1}}}\frac{\pi}{N+N'+1-M^2}
			-\sqrt{\frac{\pi_{m_2}}{\pi_{M+1}}}\frac{\pi}{N+N'+1-M^2}
		\\&\nonumber
			-\sqrt{\frac{\pi_{m'_1}}{\pi_{M+1}}}\frac{\pi}{N+N'+1-M^2}
			+\sqrt{\frac{\pi_{m'_2}}{\pi_{M+1}}}\frac{\pi}{N+N'+1-M^2}.
	\end{align}
	Applying algebraic manipulation yields
	\begin{align}
		\label{Equation:Proposition:complex_signed_kernel:hermitian_product:linear_combination}
			0
		=
			\sqrt{\pi_{m_1}}-\sqrt{\pi_{m_2}}-\sqrt{\pi_{m'_1}}+\sqrt{\pi_{m'_2}}.
	\end{align}
	Note that
	$\left\{\sqrt{\pi_m}:m\in\mathbb{N}\right\}\subset\mathbb{R}$
	are linearly independent vectors in the vector space $\mathbb{R}$
	over $\mathbb{Q}$
	and by
	\eqref{Equation:Proposition:complex_signed_kernel:hermitian_product:linear_combination}
	zero is a linear combination of $\sqrt{\pi_{m_k}}$ and $\sqrt{\pi_{m_k'}}$
	for $k\in\SetOf{2}$ over $\mathbb{Q}$.
	Thus, all factors in front of all $\sqrt{\pi_m}$ must vanish.
	This can only be the case if either
	$m_1=m'_1$ and $m_2=m'_2$ or if
	$m_1=m_2$ and $m'_1=m'_2$.
	Suppose that $m_1=m_2$ and $m'_1=m'_2$. Plugging this and the choice of $\aVec_m$ into
	\eqref{Equation:Proposition:complex_signed_kernel:hermitian_product:exp=exp} yields
	\begin{align}
		\nonumber
			m_k^{-1}
		=
			\aVec_{m_1}\aVec_{m_2}
		=
			\aVec_{m'_1}\aVec_{m'_2}
		=
			{m'_k}^{-1}
	\end{align}
	for $k\in\SetOf{2}$. Thus in this case one also has $m_1=m'_1$ and $m_2=m'_2$.
	Hence, \refP{Condition:Proposition:complex_signed_kernel:hermitian_product:distinctness}
	is fulfilled.
\end{proof}
In applications normalized matrices are more common. These lose in the signed kernel condition since the diagonal elements can not correspond to distinct roots.
\begin{Proposition}[Hermitian Products of normalized Vandermonde Matrices have $S=\frac{1}{2}\left(M^2-M\right)$]
	\label{Proposition:complex_signed_kernel:hermitian_product_normalized}
	Let $\mathbb{K}=\mathbb{C}$, $M,N\in\mathbb{N}$, $\sigma:\SetOf{M}^2\rightarrow \SetOf{M^2}$
	be a bijection and set $N':=\max\{M^2-M+1-N,0\}$.
	Let the following conditions hold true:
	\begin{enumerate}
		\item
			\label{Condition:Proposition:complex_signed_kernel:hermitian_product_normalized:arguments}
			Let $\tVec_{m}\in\left(0,\frac{\pi}{N+N'-M^2+M}\right)$ for all $m\in\SetOf{M}$.
		\item
			\label{Condition:Proposition:complex_signed_kernel:hermitian_product_normalized:ordered}
			Let $\tVec_1<\dots<\tVec_M$.
		\item
			\label{Condition:Proposition:complex_signed_kernel:hermitian_product_normalized:distinctness}
			Let
			$\tVec_{m_1}-\tVec_{m_2}$
			for all $m_1,m_2\in\SetOf{M}$ with $m_1<m_2$ be disctinct from each other.
	\end{enumerate}
	Let $\AMat\in\mathbb{C}^{M\times N}$
	have entries
	$\AMat_{m,n}=\Exp{i\tVec_{m}\left(n-1+N'\right))}$
	and let $\AMat'\in\mathbb{C}^{M^2\times N}$ have entries
	$\AMat'_{\sigma\left(\left(m_k\right)_{k\in\SetOf{2}}\right),n}
		=\AMat_{m_1,n}\AMat_{m_2,n}^H$
	for all $m_1,m_2\in\SetOf{M},n\in\SetOf{N}$.
	In other words the $n$-th column of $\AMat'$ is a reordering of
	$\bVec\bVec^H$ where $\bVec$ is the $n$-th
	column of $\AMat$.
	Then, $\AMat'$ has the signed kernel condition of order
	$S:=\frac{1}{2}\left(M^2-M\right)$.
	\par
	In particular, the distinctness condition can be fulfilled
	by choosing
	\begin{align}
		\nonumber
			\tVec_m
		:=
			\sqrt{\frac{\pi_m}{\pi_{M+1}}}\frac{\pi}{N+N'-\left(M^2-M\right)+1}
	\end{align}
	where $\pi_m$ is the $m$-th prime number.
\end{Proposition}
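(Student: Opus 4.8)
The plan is to run the proof of \thref{Proposition:complex_signed_kernel:hermitian_product} with essentially one substantive change, the count of roots that the rows of $\AMat'$ impose on the associated polynomial. First I would pick $\pVec\in\Kernel{\AMat'}\cap\mathbb{R}^N\setminus\ZeroSet$ and form the real-coefficient polynomial $\p{a}:=\sum_{n=1}^{N}\pVec_n a^{n-1+N'}$ of degree $N''$. Since $\abs{\AMat_{m,n}}=1$ for all $m,n$, the $\sigma((m_1,m_2))$-th coordinate of $\AMat'\pVec$ equals $\p{\Exp{i(\tVec_{m_1}-\tVec_{m_2})}}$, so $\p{}$ vanishes at $\Exp{i(\tVec_{m_1}-\tVec_{m_2})}$ for every $m_1\geq m_2$. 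In contrast to \thref{Proposition:complex_signed_kernel:hermitian_product}, the $M$ diagonal pairs $m_1=m_2$ now all collapse to the single common root $1$, so the roots obtained this way are $1$ together with the $\frac{1}{2}(M^2-M)$ points $\Exp{i(\tVec_{m_1}-\tVec_{m_2})}$ with $m_1>m_2$; reality of the coefficients adds their $\frac{1}{2}(M^2-M)$ conjugates. Exactly as for \eqref{Equation:Proposition:complex_signed_kernel:hermitian_product:angles}, \refP{Condition:Proposition:complex_signed_kernel:hermitian_product_normalized:arguments} and \refP{Condition:Proposition:complex_signed_kernel:hermitian_product_normalized:ordered} give $0<\tVec_{m_1}-\tVec_{m_2}<\pi$ whenever $m_1>m_2$, so these complex roots are all different from $1$ and lie strictly in the open upper half plane, hence are distinct from their conjugates; together with \refP{Condition:Proposition:complex_signed_kernel:hermitian_product_normalized:distinctness} this makes all $R:=M^2-M+1$ of them pairwise distinct.

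Next I would apply \thref{Theorem:complex_rule_of_signs} to these $R$ roots, each written as $1\cdot\Exp{i\theta}$ with $\theta=0$ or $\theta=\pm(\tVec_{m_1}-\tVec_{m_2})$ for $m_1>m_2$. All moduli equal $1>0$, so \refP{Condition:Theorem:complex_rule_of_signs:positivity} holds; from $N''\leq N+N'-1$ one gets $N''+2-R\leq N+N'-M^2+M$, so the argument bound \refP{Condition:Theorem:complex_rule_of_signs:arguments} follows from \refP{Condition:Proposition:complex_signed_kernel:hermitian_product_normalized:arguments}, and \refP{Condition:Theorem:complex_rule_of_signs:distinctness} and \refP{Condition:Theorem:complex_rule_of_signs:roots} were just verified. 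As in the earlier constructions, if $N''<R$ then $\p{}$ would have more distinct roots than its degree and hence be the zero polynomial, contradicting $\pVec\neq 0$ and forcing the trivial case $\Kernel{\AMat'}\cap\mathbb{R}^N=\ZeroSet$, in which the signed kernel condition of every order holds; so assume $N''\geq R\geq 1$, noting that $R\geq N$ is impossible as it would give $N+N'=R$ and hence the contradiction $N''\leq R-1$. Therefore $R+1\leq N$, and \thref{Theorem:complex_rule_of_signs} yields $\SetSize{\textnormal{var}(\pVec)}\geq R$, so $\pVec$ contains a subsequence of $R+1$ nonzero entries whose signs strictly alternate. Since $M^2-M$ is even, $R+1=M^2-M+2$ is even, so that subsequence, and hence $\pVec$, has at least $\left\lfloor\frac{1}{2}(R+1)\right\rfloor=\frac{1}{2}(M^2-M)+1$ negative entries; as this holds for every such $\pVec$, $\AMat'$ has the signed kernel condition of order $S=\frac{1}{2}(M^2-M)$.

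For the ``in particular'' statement I would only verify the three hypotheses for the displayed choice of $\tVec_m$. Conditions \refP{Condition:Proposition:complex_signed_kernel:hermitian_product_normalized:arguments} and \refP{Condition:Proposition:complex_signed_kernel:hermitian_product_normalized:ordered} are immediate: $0<\sqrt{\pi_m/\pi_{M+1}}<1$ for $m\leq M$ since $\pi_m<\pi_{M+1}$, the factor $\frac{\pi}{N+N'-M^2+M+1}$ is smaller than $\frac{\pi}{N+N'-M^2+M}$ (both denominators being positive, as $N+N'\geq M^2-M+1$), so each $\tVec_m$ lies in $\left(0,\frac{\pi}{N+N'-M^2+M}\right)$, and $\pi_1<\pi_2<\dots$ gives $\tVec_1<\dots<\tVec_M$. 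For \refP{Condition:Proposition:complex_signed_kernel:hermitian_product_normalized:distinctness}, an equality $\tVec_{m_1}-\tVec_{m_2}=\tVec_{m'_1}-\tVec_{m'_2}$ with $m_1<m_2$ and $m'_1<m'_2$ reduces, after cancelling the common positive scalar, to $\sqrt{\pi_{m_1}}-\sqrt{\pi_{m_2}}-\sqrt{\pi_{m'_1}}+\sqrt{\pi_{m'_2}}=0$; since $\left\{\sqrt{\pi_m}:m\in\mathbb{N}\right\}$ is linearly independent over $\mathbb{Q}$ every coefficient must vanish, and with $m_1<m_2$, $m'_1<m'_2$ this forces $m_1=m'_1$ and $m_2=m'_2$, exactly the rational-independence argument of \thref{Proposition:complex_signed_kernel:hermitian_product} but simpler because here all moduli are $1$. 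I do not expect a serious obstacle; the one delicate point is the root bookkeeping, namely recognizing that the $M$ diagonal pairs collapse to the single root $1$, which is precisely why the recoverable sparsity drops from $\left\lceil\frac{1}{2}M^2\right\rceil-1$ in \thref{Proposition:complex_signed_kernel:hermitian_product} to $\frac{1}{2}(M^2-M)$ here.
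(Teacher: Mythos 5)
Your proposal is correct and takes essentially the same route as the paper, whose own proof just says to rerun the argument of \thref{Proposition:complex_signed_kernel:hermitian_product} with the $M$ diagonal pairs collapsed to the single real root $1$, giving $R=M^2-M+1$ distinct roots and, since $\frac{1}{2}\left(R+1\right)$ is an integer, at least $\frac{1}{2}\left(M^2-M\right)+1$ negative entries in any nonzero real kernel vector. Your verification of the in-particular part also matches the paper's remark that the degenerate alternative $m_1=m_2$, $m'_1=m'_2$ is excluded by the strict inequalities, so only $m_1=m'_1$ and $m_2=m'_2$ survives the rational-independence argument.
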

\begin{proof}
	The proof proceeds similarly as the proof of \thref{Proposition:complex_signed_kernel:hermitian_product}
	however $RR_{M,2}$ is replaced by $1$ corresponding to the one real root generated by $m_1=1=m_2$
	for a total of $R_{M,2}=2CR_{M,2}+1=M^2-M+1$ disctint roots.
	The proof then follows from the fact that
	$\frac{1}{2}\left(R_{M,2}+1\right)$ is always an integer.
	The in particular part proceeds the same with the only difference being that
	the case $m_1=m_2$ and $m'_1=m'_2$ is not possible since the distinctness
	only needs to be fulfilled for $m_1<m_2$ and $m'_1<m'_2$.
\end{proof}
The signed kernel condition is even being preserved under $K$-fold outer products or simply tensors.
\begin{Corollary}[Column-wise Outer Products of Vandermonde Matrices have $S=\binom{M-1+K}{K}$]\label{Corollary:complex_signed_kernel:outer_product}
	\hspace{1mm}
	\\
	Let $\mathbb{K}=\mathbb{C}$, $M,N,K\in\mathbb{N}$, $\sigma:\SetOf{M}^K\rightarrow \SetOf{M^K}$
	be a bijection and set $N':=\max\{2\binom{M-1+K}{K}-N,0\}$.
	Let the following conditions hold true:
	\begin{enumerate}
		\item
			\label{Condition:Corollary:complex_signed_kernel:outer_product:positivity}
			Let $\aVec_m>0$ for all $m\in \SetOf{M}$.
		\item
			\label{Condition:Corollary:complex_signed_kernel:outer_product:arguments}
			Let the arguments suffice
			$\tVec_m\in\left(0,\frac{1}{K}\frac{\pi}{N+N'+1-2\binom{M-1+K}{K}}\right)$
			for all $m\in\SetOf{M}$.
		\item
			\label{Condition:Corollary:complex_signed_kernel:outer_product:distinctness}
			Let $\prod_{k=1}^K\aVec_{m_k}\Exp{i\sum_{k=1}^K\tVec_{m_k}}$ for all
			$m_k\in\SetOf{M},k\in\SetOf{K}$ with $m_1\leq\dots\leq m_K$ be distinct from each other.
	\end{enumerate}
	Let $\AMat\in\mathbb{K}^{M\times N}$
	have entries
	$\AMat_{m,n}=\aVec_{m}^{n-1+N'-N}\Exp{i\tVec_{m}\left(n-1+N'\right))}$
	and let $\AMat'\in\mathbb{K}^{M^K\times N}$ have entries
	$\AMat'_{\sigma\left(\left(m_k\right)_{k\in\SetOf{K}}\right),n}
		=\prod_{k=1}^K\AMat_{m_k,n}$
	for all $m_k\in\SetOf{M},k\in\SetOf{K},n\in\SetOf{N}$.
	In other words the $n$-th column of $\AMat'$ is the reordered
	outer product of the $n$-th column of $\AMat$ $K$-times with itself.
	Then, $\AMat'$ has the signed kernel condition of order
	$S:=\binom{M-1+K}{K}-1$.
	\par
	In particular, the conditions can be fulfilled
	with $\abs{\aVec_m}=1$ by choosing
	\begin{align}
		\nonumber
			\aVec_{m}
		:=
			1
		\TextAnd
			\tVec_m
		:=
			\sqrt{\frac{\pi_m}{\pi_{M+1}}}\frac{1}{K}\frac{\pi}{N+N'+1-2\binom{M-1+K}{K}}
	\end{align}
	where $\pi_m$ is the $m$-th prime number.
\end{Corollary}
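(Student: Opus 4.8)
The plan is to follow the template of the proof of \thref{Proposition:complex_signed_kernel:hermitian_product}: collapse the column-wise outer-product structure into a single real-coefficient polynomial and read off the number of sign changes of a kernel vector from the complex rule of signs, \thref{Theorem:complex_rule_of_signs}. First I would fix an arbitrary $\pVec\in\Kernel{\AMat'}\cap\mathbb{R}^N\setminus\ZeroSet$ -- if no such vector exists the signed kernel condition of any order holds trivially -- and pass to the polynomial $\p{a}:=\sum_{n=1}^N\pVec_n a^{n-1+N'}$, which has real coefficients and is nonzero since $\pVec\neq 0$; let $N''$ be its degree, so $N''\leq N+N'-1$. Setting $z_m:=\aVec_m\Exp{i\tVec_m}$ and, for a tuple $(m_1,\dots,m_K)\in\SetOf{M}^K$, $w_{(m_k)}:=\prod_{k=1}^K z_{m_k}=\prod_{k=1}^K\aVec_{m_k}\Exp{i\sum_{k=1}^K\tVec_{m_k}}$, the definitions of $\AMat$ and $\AMat'$ show that the $\sigma((m_k))$-th row of $\AMat'$ is the nonzero scalar $\prod_{k=1}^K\aVec_{m_k}^{-N}$ times the row $\big(w_{(m_k)}^{\,n-1+N'}\big)_{n\in\SetOf{N}}$, so that the kernel equation $\AMat'\pVec=0$ forces $\p{w_{(m_k)}}=0$ for every tuple. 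Since $w_{(m_k)}$ depends only on the multiset $\{m_1,\dots,m_K\}$, \refP{Condition:Corollary:complex_signed_kernel:outer_product:distinctness} supplies $\binom{M-1+K}{K}$ distinct roots of $\p{}$, and by \refP{Condition:Corollary:complex_signed_kernel:outer_product:positivity} together with the positivity of the $\tVec_m$ each of them has strictly positive argument $\sum_k\tVec_{m_k}$, hence is non-real.

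Next I would add the conjugate roots. As $\p{}$ has real coefficients, the values $\overline{w_{(m_k)}}=\prod_k\aVec_{m_k}\Exp{-i\sum_k\tVec_{m_k}}$ are also roots, giving $R:=2\binom{M-1+K}{K}$ roots in all, and these are pairwise distinct: the $w_{(m_k)}$ are distinct by \refP{Condition:Corollary:complex_signed_kernel:outer_product:distinctness}, their conjugates are distinct since conjugation is injective, and no $w_{(m_k)}$ equals a conjugate $\overline{w_{(m'_k)}}$ because -- using $N+N'=\max\{2\binom{M-1+K}{K},N\}\geq 2\binom{M-1+K}{K}$ and \refP{Condition:Corollary:complex_signed_kernel:outer_product:arguments} -- every $\sum_k\tVec_{m_k}$ satisfies $0<\sum_k\tVec_{m_k}<\tfrac{\pi}{N+N'+1-R}\leq\pi$, so the arguments of $w_{(m_k)}$ and $\overline{w_{(m'_k)}}$ lie in $(0,\pi)$ and $(-\pi,0)$ respectively. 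If $N''<R$ then $\p{}$ would have more roots than its degree and hence vanish identically, a contradiction; so $N''\geq R\geq 1$, and combining with $N''\leq N+N'-1$ rules out $R\geq N$ (which would give $R+1\leq N''+1\leq N+N'=R$), so $R+1\leq N$. The root moduli $\prod_k\aVec_{m_k}$ are positive and, since $N''+2-R\geq 2$ and $N''+2-R\leq N+N'+1-R$, the arguments $\pm\sum_k\tVec_{m_k}$ lie in $\left(-\tfrac{\pi}{N''+2-R},\tfrac{\pi}{N''+2-R}\right)$; thus all four hypotheses of \thref{Theorem:complex_rule_of_signs} hold for these $R$ roots and $\SetSize{\textnormal{var}\left(\pVec\right)}\geq R$. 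Therefore $\pVec$ contains an alternating-sign subsequence of $R+1$ nonzero entries, which forces at least $\left\lfloor\tfrac{R+1}{2}\right\rfloor=\binom{M-1+K}{K}$ negative entries; since this exceeds $S=\binom{M-1+K}{K}-1$, the signed kernel condition of order $S$ follows.

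For the final part of the statement I would only need to check that $\aVec_m=1$ and $\tVec_m=\sqrt{\pi_m/\pi_{M+1}}\cdot\tfrac1K\cdot\tfrac{\pi}{N+N'+1-2\binom{M-1+K}{K}}$ meet the three hypotheses. Positivity of the $\aVec_m$ is immediate; since $\pi_m<\pi_{M+1}$ for $m\in\SetOf{M}$ we get $\sqrt{\pi_m/\pi_{M+1}}\in(0,1)$, which (with the denominator $N+N'+1-2\binom{M-1+K}{K}\geq 1$) yields the argument bound in \refP{Condition:Corollary:complex_signed_kernel:outer_product:arguments}; and for distinctness, $\prod_k\aVec_{m_k}\Exp{i\sum_k\tVec_{m_k}}=\Exp{i\sum_k\tVec_{m_k}}$ has modulus $1$ and argument a fixed positive multiple of $\sum_k\sqrt{\pi_{m_k}}$, which -- because $\{\sqrt{\pi_m}:m\in\mathbb{N}\}$ is linearly independent over $\mathbb{Q}$ -- determines the multiset $\{m_1,\dots,m_K\}$ uniquely, all these arguments lying in $(0,\pi)$ and hence being distinct modulo $2\pi$.

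The difficulty here is essentially bookkeeping rather than a new idea: one must make sure that dividing the angle budget in \refP{Condition:Corollary:complex_signed_kernel:outer_product:arguments} by $K$ is precisely what keeps the $K$-fold sums $\sum_k\tVec_{m_k}$ -- which are the arguments of the roots actually fed to \thref{Theorem:complex_rule_of_signs} -- inside the admissible window, and that the root count $R=2\binom{M-1+K}{K}$ (one root per multiset of size $K$ from $\SetOf{M}$, plus its conjugate) lines up with the claimed order $S=\binom{M-1+K}{K}-1$. Once those two alignments are verified, the rest is a routine transcription of the Vandermonde and Hermitian-product arguments already carried out.
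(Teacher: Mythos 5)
Your argument is correct, and it takes a recognizably different route through the same machinery than the paper does. You inline the root-counting argument in the style of \thref{Proposition:complex_signed_kernel:hermitian_product}: you note that each row of $\AMat'$ is a nonzero row-constant multiple of $\left(w^{n-1+N'}\right)_{n\in\SetOf{N}}$ with $w=\prod_{k}\aVec_{m_k}\Exp{i\sum_k\tVec_{m_k}}$ (so the amplitude exponent $n-1+N'-N$ in the statement is harmless for the kernel, a point the paper's proof silently normalizes away), collect the $\binom{M-1+K}{K}$ multiset roots together with their conjugates, verify the angle window against $N''+2-R$, and apply \thref{Theorem:complex_rule_of_signs} directly. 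The paper instead discards the rows indexed by non-sorted tuples, observes that the remaining $\binom{M-1+K}{K}\times N$ submatrix is itself a complex Vandermonde matrix satisfying the hypotheses of \thref{Theorem:complex_signed_kernel:existence}, and transfers the conclusion back using that deleting duplicate (linearly dependent) rows does not change the kernel; it also proves the multiset count $CR_{M,K}=\binom{M-1+K}{K}$ by induction via the hockey-stick identity, whereas you take this standard stars-and-bars count for granted -- a well-known fact, so not a substantive gap, but worth stating if you write this up. Your route is self-contained and avoids the row-deletion bookkeeping; the paper's is more modular, reusing the already-proven Vandermonde theorem so that only the collapsing of permuted tuples and the counting remain to be checked. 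Your verification of the ``in particular'' choice (all root arguments confined to $\left(0,\pi\right)$ forcing the integer $l$ to vanish, then $\mathbb{Q}$-linear independence of $\left\{\sqrt{\pi_m}\right\}$ forcing equal multisets, hence equal sorted tuples) coincides with the paper's.
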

\begin{proof}
	Let $CR_{M,K}:=\SetSize{\left\{\left(m_k\right)_{k\in\SetOf{K}}:m_k\in\SetOf{M},m_1\leq\dots\leq m_K\right\}}$.
	At first the combinatorial argument $CR_{M,K}=\binom{M-1+K}{K}$
	is shown. Note that $CR_{M,K}$ can be written as
	\begin{align}
		\label{Equation:Corollary:complex_signed_kernel:outer_product:CR}
			CR_{M,K}
		=
			\sum_{m_K=1}^M\sum_{m_{K-1}=1}^{m_K}\dots\sum_{m_1=1}^{m_2}1.
	\end{align}
	The rest follows from induction over $K$.
	For an induction start note that \eqref{Equation:Corollary:complex_signed_kernel:outer_product:CR} yields
	\begin{align}
		\nonumber
			CR_{M,1}
		=
			\sum_{m=1}^M1
		=
			M
		=
			\binom{M-1+1}{1}.
	\end{align}
	Thus, the claim holds for $K=1$ and all $M\in\mathbb{N}$.
	Now assume that $CR_{M,K}=\binom{M-1+K}{K}$ holds for some $K\in\mathbb{N}$ and all $M\in\mathbb{N}$.
	By \eqref{Equation:Corollary:complex_signed_kernel:outer_product:CR}
	and the induction hypothesis one gets
	\begin{align}
		\nonumber
			CR_{M,K+1}
		=
			\sum_{m=1}^M\sum_{m_{K}=1}^{m}\sum_{m_{K-1}=1}^{m_K}\dots\sum_{m_1=1}^{m_2}1
		=
			\sum_{m=1}^MCR_{m,K}
		=
			\sum_{m=1}^M\binom{m-1+K}{K}
		=
			\sum_{m=K}^{M-1+K}\binom{m}{K}.
	\end{align}
	Applying \cite[Theorem~1.5.2]{chus_theorem} yields 
	\begin{align}
		\nonumber
			CR_{M,K+1}
		=
			\sum_{m=K}^{M-1+K}\binom{m}{K}
		=
			\binom{M-1+K+1}{K+1}
	\end{align}
	which completes the induction. Hence,
	\begin{align}
		\nonumber
		CR_{M,K}=\binom{M-1+K}{K} \TextForAll K,M\in\mathbb{N}.
	\end{align}
	\par
	Set $\tVec'_{\sigma\left(\left(m_k\right)_{k\in\SetOf{K}}\right)}:=\sum_{k=1}^K\tVec_{m_k}$
	and $\aVec'_{\sigma\left(\left(m_k\right)_{k\in\SetOf{K}}\right)}:=\prod_{k=1}^K\aVec_{m_k}$
	so that
	\begin{align}
		\nonumber
			\AMat'_{m,n}={\aVec'}_{m}^{n-1+N'}\Exp{i\tVec'_{m}\left(n-1+N'\right)}
		\TextForAll
			m\in\SetOf{M^K},n\in\SetOf{N}.
	\end{align}
	At first glance $\AMat'$
	is of the same structure as in
	\thref{Theorem:complex_signed_kernel_dimensions}
	however this is not true. Since permutations of
	$\sigma\left(\left(m_k\right)_{k\in\SetOf{K}}\right)$
	create the same products
	$\prod_{k=1}^K\aVec_{m_k}\Exp{i\sum_{k=1}^K\tVec_{m_k}}$,
	the distinctness condition is not given for all $m_k\in\SetOf{M}$, $k\in\SetOf{K}$.
	However, the assumed \refP{Condition:Corollary:complex_signed_kernel:outer_product:distinctness}
	yields that $\prod_{k=1}^K\aVec_{m_k}\Exp{i\sum_{k=1}^K\tVec_{m_k}}$ for $m_k\in\SetOf{M}$
	with $m_1\leq\dots\leq m_K$ are distinct from each other.
	Thus, let $\AMat''$ be the matrix consisting of the $CR_{M,K}$ rows
	of $\AMat'$ with indices
	$\sigma\left(\left(m_k\right)_{k\in\SetOf{K}}\right)$
	for all $m_k\in\SetOf{M}$ with $m_1\leq\dots\leq m_K$.
	The matrix $\AMat''\in\mathbb{K}^{CR_{M,K}\times N}$
	suffices \refP{Condition:Theorem:complex_signed_kernel:existence:distinctness} of
	\thref{Theorem:complex_signed_kernel:existence}.
	By the assumed \refP{Condition:Corollary:complex_signed_kernel:outer_product:positivity}
	and \refP{Condition:Corollary:complex_signed_kernel:outer_product:arguments}
	it follows that
	$\aVec'_m>0$ and
	$\tVec'_m\in\left(0,\frac{\pi}{N+N'+1-2CR_{M,K}}\right)$
	for all $m\in\SetOf{M^K}$.
	Hence \refP{Condition:Theorem:complex_signed_kernel:existence:positivity} and
	\refP{Condition:Theorem:complex_signed_kernel:existence:arguments} of
	\thref{Theorem:complex_signed_kernel:existence} are fulfilled.
	Lastly it holds true that $N'=\max\{2CR_{M,K}-N,0\}$.
	Thus, all conditions of \thref{Theorem:complex_signed_kernel:existence}
	are fulfilled and $\AMat''$ has signed kernel condition of order
	$CR_{M,K}-1$. Since $\AMat'$
	is $\AMat''$ just with some added rows that are all
	linearly dependent to rows of $\AMat''$, the
	kernels of $\AMat'$ and $\AMat''$ are the same.
	Thus, also $\AMat'$ has	the signed kernel condition of order
	$CR_{M,K}-1=\binom{M-1+K}{K}-1$.
	\par
	For the in particular part it suffices to show that the choice suffices
	the conditions of this corollary. It is clear that
	\refP{Condition:Corollary:complex_signed_kernel:outer_product:positivity} and
	\refP{Condition:Corollary:complex_signed_kernel:outer_product:arguments}
	are fulfilled. For the distinctness condition let
	$m_k,m_k'\in\SetOf{M}$, $k\in\SetOf{K}$ with
	$m_1\leq\dots\leq m_K$ and $m_1'\leq\dots\leq m_K'$ be such that
	\begin{align}
		\nonumber
			\prod_{k=1}^K\aVec_{m_k}\Exp{i\sum_{k=1}^K\tVec_{m_k}}
		=
			\prod_{k=1}^K\aVec_{m_k'}\Exp{i\sum_{k=1}^K\tVec_{m_k'}}.
	\end{align}
	Plugging in the choice $\aVec_m=1$ and applying algebraic maniuplation
	yields that there exists a $l\in\mathbb{Z}$ such that
	\begin{align}
		\label{Equation:Corollary:complex_signed_kernel:outer_product:zero+2pl}
			0
		=
			\sum_{k=1}^K\tVec_{m_k}
			+\sum_{k=1}^K\left(-\tVec_{m_k'}\right)
			+2\pi l
	\end{align}
	Since $\sqrt{\frac{\pi_m}{\pi_{M+1}}}\in\left(0,1\right)$,
	one gets $\sum_{k=1}^K\tVec_{m_k''}\in\left(0,\frac{\pi}{N+N'+1-2\binom{M-1+K}{K}}\right)\subset\left(0,\pi\right)$ for $m_k''=m_k$, $k\in\SetOf{K}$ and for $m_k''=m_k'$, $k\in\SetOf{K}$.
	Thus, $l$ needs to be $0$.
	Plugging this and the choice of $\tVec_m$ into
	\eqref{Equation:Corollary:complex_signed_kernel:outer_product:zero+2pl}
	yields
	\begin{align}
		\nonumber
			0
		=
			\sum_{k=1}^K
			\sqrt{\frac{\pi_{m_k}}{\pi_{M+1}}}\frac{1}{K}\frac{\pi}{N+N'+1-2\binom{M-1+K}{K}}
			+\sum_{k=1}^K
			\left(-\sqrt{\frac{\pi_{m_k'}}{\pi_{M+1}}}\frac{1}{K}\frac{\pi}{N+N'+1-2\binom{M-1+K}{K}}\right).
	\end{align}
	Applying more algebraic manipulation yields
	\begin{align}
		\nonumber
			0
		=
			\sum_{k=1}^K\sqrt{\pi_{m_k}}
			+\sum_{k=1}^K\left(-1\right)\sqrt{\pi_{m_k'}}.
	\end{align}
	Rewriting the sums gives
	\begin{align}
		\label{Equation:Corollary:complex_signed_kernel:outer_product:linear_combination}
			0
		=
			\sum_{m=1}^M\left(\sum_{k\in\SetOf{K}:m_k=m}1-\sum_{k\in\SetOf{K}:m'_k=m}1\right)\sqrt{\pi_m}
	\end{align}
	Note that
	$\left\{\sqrt{\pi_m}:m\in\mathbb{N}\right\}\subset\mathbb{R}$
	are linearly independent vectors in the vector space $\mathbb{R}$
	over $\mathbb{Q}$
	and by
	\eqref{Equation:Corollary:complex_signed_kernel:outer_product:linear_combination}
	zero is a linear combination of $\sqrt{\pi_{m_k}}$ and $\sqrt{\pi_{m_k'}}$
	for $k\in\SetOf{K}$ over $\mathbb{Q}$.
	Thus, $\sum_{k\in\SetOf{K}:m_k=m}1-\sum_{k\in\SetOf{K}:m'_k=m}1=0$
	and hence $\SetSize{\left\{k\in\SetOf{K}:m_k=m\right\}}=\SetSize{\left\{k\in\SetOf{K}:m'_k=m\right\}}$
	for all $m\in\SetOf{M}$.
	Since $m_1\leq\dots\leq m_K$ and $m_1'\leq\dots\leq m_K'$,
	this can only happen if $m_k=m'_k$ for all $k\in\SetOf{K}$.
	Thus, \refP{Condition:Corollary:complex_signed_kernel:outer_product:distinctness}
	is fulfilled.
\end{proof}
Note that the resulting signed kernel condition
in \thref{Corollary:complex_signed_kernel:outer_product} with $K=2$
is higher than the one obtained from \thref{Proposition:complex_signed_kernel:hermitian_product}.
Due to the complex conjugate in \thref{Proposition:complex_signed_kernel:hermitian_product}
the diagonal elements are forced to be real number which in turn correspond to real roots
which can not be counted twice.
\begin{Remark}
	Let $\mathbb{K}=\mathbb{C}$, $\AMat\in\mathbb{C}^{M\times N}$ and
	$\sigma:\SetOf{M}^2\rightarrow\SetOf{M^2}$ be a bijection.
	Let $\AMat'\in\mathbb{C}^{M^2\times N}$ such that
	$\AMat'_{\sigma\left(m_1,m_2\right),n}=\AMat_{m_1,n}\AMat_{m_2,n}^H$ for all
	$m_1,m_2\in\SetOf{M}$, $n\in\SetOf{N}$ such that $\AMat'$ has the signed
	kernel condition of order $S\in\mathbb{N}$.
	Then, $\min\left\{2S+1,N\right\}\leq M^2$.
	In particular, if $2S+1\leq N$, then $S\leq\left\lceil\frac{1}{2}M^2\right\rceil-1$.
	Thus, the construction in \thref{Proposition:complex_signed_kernel:hermitian_product}
	has the maximal possible order of a signed kernel condition.
\end{Remark}
\begin{proof}
	For this proof it is required to
	distinguish between the real signed kernel condition and the complex
	signed kernel condition again.
	Consider the matrix $\BMat\in\mathbb{R}^{2M^2\times N}$ where the first $M^2$ rows
	be the real part of $\AMat'$ and the last $M^2$ rows of $\BMat$ be the imaginary part of
	$\AMat'$. It follows that
	$\Kernel{\BMat}=\Kernel{\BMat}\cap\mathbb{R}^N=\Kernel{\AMat}\cap\mathbb{R}^N$.
	Now set $CR_{M,2}:=\SetSize{\left\{\left(m_k\right)_{k\in\SetOf{2}}:m_1,m_2\in\SetOf{M},m_2>m_1\right\}}$,
	$RR_{M,2}:=\SetSize{\left\{\left(m_k\right)_{k\in\SetOf{2}}:m_1,m_2\in\SetOf{M},m_2=m_1\right\}}$.
	Since $\AMat'$ is hermitian, it follows that $\BMat$ has maximally
	$2CR_{M,2}+RR_{M,2}$
	linearly independent rows, two for each element in $CR_{M,2}$ and one for each element
	in $RR_{M,2}$. Let $\BMat'\in\mathbb{R}^{2CR_{M,2}+RR_{M,2},N}$
	be a matrix of those rows. Since only linearly dependent rows were removed
	it follows that
	$\Kernel{\BMat'}\cap\mathbb{R}^N=\Kernel{\BMat}\cap\mathbb{R}^N
		=\Kernel{\AMat'}\cap\mathbb{R}^N$.
	Thus, $\BMat'$ has the real signed kernel condition of order $S$. 
	By \thref{Theorem:signed_kernel_dimensions} it follows that
	$M^2=2CR_{M,2}+RR_{M,2}\geq \min\left\{2S+1,N\right\}$.
	Now if $2S+1\leq N$, then $S\leq \frac{1}{2}\left(M^2-1\right)$.
	Since $S$ is an integer, it follows that
	$S\leq\left\lfloor \frac{1}{2}\left(M^2-1\right)\right\rfloor
		=\left\lceil\frac{1}{2}M^2\right\rceil-1$.
\end{proof}
Note that similar optimality results can be drawn for the outer product construction
and the other hermitian construction in this subsection.
	\section{Outwardly Neighborly Polytopes}
In this section the case $\mathbb{K}=\mathbb{R}$,
$C=\eta\left(\Sigma_S^N\cap\mathbb{S}^{N-1}_{\norm{\cdot}_1}\cap\mathbb{R}_+^N\right)$
and $F=\eta\left(\mathbb{B}^{N}_{\norm{\cdot}_1}\cap\mathbb{R}_+^N\right)$,
where $\eta>0$, and its relation to the signed kernel condition case is discussed.
In \cite{outwardly_neighborly_polytopes,noneg_nsp} an optimization problem is considered
which is basically the case with these $F$ and $C$.
\begin{Theorem}\label{Theorem:outwardly_neighborly_polytopes}
	Let $\mathbb{K}=\mathbb{R}$, $\normRHS{\cdot}$ be any norm on $\mathbb{R}^M$.
	Let $S\in\mathbb{N}$ and $\AMat\in\mathbb{R}^{M\times N}$ and $\eta>0$.
	Consider the following conditions:
	\begin{enumerate}
		\item
			\label{Condition:Theorem:outwardly_neighborly_polytopes:FC}
			For all
			$\xVec\in\eta\left(\Sigma_S^N\cap\mathbb{S}^{N-1}_{\norm{\cdot}_1}\cap\mathbb{R}_+^N\right)$ one has
			$
					\left\{\xVec\right\}
				=
					\argmin{\zVec\in\eta\left(\mathbb{B}^{N}_{\norm{\cdot}_1}\cap\mathbb{R}_+^N\right)}
					\normRHS{\AMat\zVec-\AMat\xVec}
			$.
		\item
			\label{Condition:Theorem:outwardly_neighborly_polytopes:LP}
			For all
			$\xVec\in\Sigma_S^N\cap\mathbb{R}_+^N$ one has
			$
					\left\{\xVec\right\}
				=
					\argmin{\zVec\in\mathbb{R}_+^N:\AMat\zVec=\AMat\xVec}
					\norm{\zVec}_1
			$.
		\item
			\label{Condition:Theorem:outwardly_neighborly_polytopes:NNLR}
			For all
			$\xVec\in\Sigma_S^N\cap\mathbb{R}_+^N$ one has
			$
					\left\{\xVec\right\}
				=
					\argmin{\zVec\in\mathbb{R}_+^N}
					\normRHS{\AMat\zVec-\AMat\xVec}
			$.
	\end{enumerate}
	Then, the following statements are true:
	\begin{enumerate}
		\item
			\label{Statement:Theorem:outwardly_neighborly_polytopes:equivalence}
			\refP{Condition:Theorem:outwardly_neighborly_polytopes:FC},
			\refP{Condition:Theorem:outwardly_neighborly_polytopes:LP}
			are equivalent.
		\item
			\label{Statement:Theorem:outwardly_neighborly_polytopes:NNLR=>others}
			\refP{Condition:Theorem:outwardly_neighborly_polytopes:NNLR}
			implies
			\refP{Condition:Theorem:outwardly_neighborly_polytopes:FC} and
			\refP{Condition:Theorem:outwardly_neighborly_polytopes:LP}.
		\item
			\label{Statement:Theorem:outwardly_neighborly_polytopes:others=>NNLR}
			Let additionally there be $\vVec\in\mathbb{R}^M$ such that
			$\AMat^H\vVec$ is the all ones vector in $\mathbb{R}^N$.
			Then,
			\refP{Condition:Theorem:outwardly_neighborly_polytopes:FC} and
			\refP{Condition:Theorem:outwardly_neighborly_polytopes:LP}
			imply
			\refP{Condition:Theorem:outwardly_neighborly_polytopes:NNLR}.
	\end{enumerate}
\end{Theorem}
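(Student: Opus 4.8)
The plan is to reduce all three conditions to statements about the non-negative feasible set $P(\xVec):=\left\{\zVec\in\mathbb{R}_+^N:\AMat\zVec=\AMat\xVec\right\}$. For $\xVec\in\Sigma_S^N\cap\mathbb{R}_+^N$ the point $\xVec$ is feasible in the non-negative least residual problem with objective value $0$, and since $\normRHS{\cdot}$ is a norm, $\normRHS{\AMat\zVec-\AMat\xVec}=0$ holds precisely on $P(\xVec)$; hence \refP{Condition:Theorem:outwardly_neighborly_polytopes:NNLR} is equivalent to ``$P(\xVec)=\left\{\xVec\right\}$ for all $\xVec\in\Sigma_S^N\cap\mathbb{R}_+^N$''. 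The same reasoning (now intersecting the feasible set with $\eta\mathbb{B}^N_{\norm{\cdot}_1}$, which still contains $\xVec$) shows \refP{Condition:Theorem:outwardly_neighborly_polytopes:FC} is equivalent to ``$P(\xVec)\cap\eta\mathbb{B}^N_{\norm{\cdot}_1}=\left\{\xVec\right\}$ for all $\xVec\in\eta\left(\Sigma_S^N\cap\mathbb{S}^{N-1}_{\norm{\cdot}_1}\cap\mathbb{R}_+^N\right)$'', while \refP{Condition:Theorem:outwardly_neighborly_polytopes:LP} says $\xVec$ is the unique $\norm{\cdot}_1$-minimizer over $P(\xVec)$. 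Along the way I would note that these $\argmin$ sets are nonempty because they contain $\xVec$, and that the $\ell_1$-minimum over $P(\xVec)$ is attained since a bounded sublevel set intersected with the closed set $P(\xVec)$ is compact.

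For \refP{Statement:Theorem:outwardly_neighborly_polytopes:equivalence} I would prove both directions from this reformulation. Assuming \refP{Condition:Theorem:outwardly_neighborly_polytopes:LP}, fix $\xVec$ with $\norm{\xVec}_1=\eta$; uniqueness of the $\ell_1$-minimizer over $P(\xVec)$ gives $\norm{\zVec}_1\geq\eta$ for all $\zVec\in P(\xVec)$ with equality only at $\xVec$, so $P(\xVec)\cap\eta\mathbb{B}^N_{\norm{\cdot}_1}=\left\{\xVec\right\}$, which is \refP{Condition:Theorem:outwardly_neighborly_polytopes:FC}. Conversely, assume \refP{Condition:Theorem:outwardly_neighborly_polytopes:FC} and take $\xVec\in\Sigma_S^N\cap\mathbb{R}_+^N$: if $\xVec=0$ then $0\in P(0)$ and any nonzero element of $P(0)$ has positive $\ell_1$-norm, so $0$ is the unique minimizer; if $\xVec\neq 0$, rescale $\tilde\xVec:=\tfrac{\eta}{\norm{\xVec}_1}\xVec$, which is admissible for \refP{Condition:Theorem:outwardly_neighborly_polytopes:FC} and satisfies $P(\tilde\xVec)=\tfrac{\eta}{\norm{\xVec}_1}P(\xVec)$ by homogeneity, so any $\zVec\in P(\xVec)$ with $\norm{\zVec}_1\leq\norm{\xVec}_1$ rescales into $P(\tilde\xVec)\cap\eta\mathbb{B}^N_{\norm{\cdot}_1}=\left\{\tilde\xVec\right\}$ and hence equals $\xVec$; this gives \refP{Condition:Theorem:outwardly_neighborly_polytopes:LP}.

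For \refP{Statement:Theorem:outwardly_neighborly_polytopes:NNLR=>others}, \refP{Condition:Theorem:outwardly_neighborly_polytopes:NNLR} makes $P(\xVec)=\left\{\xVec\right\}$ for all admissible $\xVec$, whence $P(\xVec)\cap\eta\mathbb{B}^N_{\norm{\cdot}_1}=\left\{\xVec\right\}$ and $\xVec$ is trivially the unique $\ell_1$-minimizer over the singleton $P(\xVec)$; so \refP{Condition:Theorem:outwardly_neighborly_polytopes:FC} and \refP{Condition:Theorem:outwardly_neighborly_polytopes:LP} follow (or one simply invokes \refP{Statement:Theorem:outwardly_neighborly_polytopes:equivalence}). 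The substantive step is \refP{Statement:Theorem:outwardly_neighborly_polytopes:others=>NNLR}: given $\vVec$ with $\AMat^H\vVec$ equal to the all-ones vector $\Vec{1}\in\mathbb{R}^N$, every $\zVec\in\mathbb{R}_+^N$ satisfies $\norm{\zVec}_1=\scprod{\Vec{1}}{\zVec}=\scprod{\AMat^H\vVec}{\zVec}=\scprod{\vVec}{\AMat\zVec}$, so on each feasible set $P(\xVec)$ the functional $\norm{\cdot}_1$ is constant, equal to $\scprod{\vVec}{\AMat\xVec}=\norm{\xVec}_1$. Under \refP{Condition:Theorem:outwardly_neighborly_polytopes:LP} the point $\xVec$ is the unique $\ell_1$-minimizer over $P(\xVec)$; but every element of $P(\xVec)$ attains the common value $\norm{\xVec}_1$ and is therefore a minimizer, so $P(\xVec)=\left\{\xVec\right\}$, which is \refP{Condition:Theorem:outwardly_neighborly_polytopes:NNLR}.

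I do not anticipate a real obstacle. The one genuine idea is in \refP{Statement:Theorem:outwardly_neighborly_polytopes:others=>NNLR}: the hypothesis $\AMat^H\vVec=\Vec{1}$ turns $\norm{\cdot}_1$ into an affine functional that is constant along each feasible fibre, which upgrades ``$\xVec$ is the unique $\ell_1$-minimizer'' to ``the feasible fibre is the singleton $\left\{\xVec\right\}$''. Everything else is the objective-value-$0$ reduction and, for \refP{Statement:Theorem:outwardly_neighborly_polytopes:equivalence}, homogeneity bookkeeping; the only points to watch are that $\normRHS{\cdot}$ is a norm (used to identify zero-residual points with $P(\xVec)$), that the $\ell_1$-minimum is attained, and the rescaling in the $\xVec\neq 0$ subcase. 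An alternative route for \refP{Statement:Theorem:outwardly_neighborly_polytopes:equivalence} would go through \thref{Theorem:robustness_and_recovery} and the closedness of $\Cone{F-C}$ in the outwardly neighborly polytope case from \thref{Proposition:cone_closed}, but the direct scaling argument is shorter.
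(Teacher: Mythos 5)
Your proposal is correct and follows essentially the same route as the paper's proof: the zero-residual reduction identifying each $\argmin$ set with the feasible fibre $\left\{\zVec\in\mathbb{R}_+^N:\AMat\zVec=\AMat\xVec\right\}$, the $\frac{\eta}{\norm{\xVec}_1}$-rescaling for the equivalence of \refP{Condition:Theorem:outwardly_neighborly_polytopes:FC} and \refP{Condition:Theorem:outwardly_neighborly_polytopes:LP}, and the computation $\norm{\zVec}_1=\scprod{\vVec}{\AMat\zVec}$ from $\AMat^H\vVec$ being the all ones vector for \refP{Statement:Theorem:outwardly_neighborly_polytopes:others=>NNLR} all match the paper's argument. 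The only difference is presentational, namely stating the fibre reformulation up front rather than arguing directly with minimizers.
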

\begin{proof}
	\refP{Statement:Theorem:outwardly_neighborly_polytopes:equivalence}:
	\refP{Condition:Theorem:outwardly_neighborly_polytopes:LP}
	$\Rightarrow$
	\refP{Condition:Theorem:outwardly_neighborly_polytopes:FC}:
	Let $\xVec\in\eta\left(\Sigma_S^N\cap\mathbb{S}^{N-1}_{\norm{\cdot}_1}\cap\mathbb{R}_+^N\right)$ and $\xVec^\#$ be a minimizer of
	$\min_{\zVec\in\eta\left(\mathbb{B}^{N}_{\norm{\cdot}_1}\cap\mathbb{R}_+^N\right)}\normRHS{\AMat\zVec-\AMat\xVec}$.
	Since $\xVec$ is feasible for this problem,
	$\normRHS{\AMat\xVec^\#-\AMat\xVec}=0$
	and thus $\xVec^\#$ is feasible for
	$\min_{\zVec\in\mathbb{R}_+^N:\AMat\zVec=\AMat\xVec}\norm{\zVec}_1$.
	Since $\norm{\xVec^\#}_1\leq\eta=\norm{\xVec}_1$
	and $\xVec$ is a minimizer by assumption,
	$\xVec^\#$ is also minimizer for 
	$\min_{\zVec\in\mathbb{R}_+^N:\AMat\zVec=\AMat\xVec}\norm{\zVec}_1$.
	By assumption $\xVec^\#=\xVec$.
	\par
	\refP{Condition:Theorem:outwardly_neighborly_polytopes:FC}
	$\Rightarrow$
	\refP{Condition:Theorem:outwardly_neighborly_polytopes:LP}:
	Let $\xVec\in\Sigma_S^N\cap\mathbb{R}_+^N$ and
	$\xVec^\#\in\argmin{\zVec\in\mathbb{R}_+^N:\AMat\zVec=\AMat\xVec}\norm{\zVec}_1$.
	If $\xVec=0$, the feasibility of $\xVec$ yields $\norm{\xVec^\#}_1\leq\norm{\xVec}_1=0$
	and thus $\xVec^\#=0$. Now assume $\xVec\neq 0$.
	Since $\xVec$ is feasible for $\min_{\zVec\in\mathbb{R}_+^N:\AMat\zVec=\AMat\xVec}\norm{\zVec}_1$, one gets
	$\norm{\frac{\eta}{\norm{\xVec}_1}\xVec^\#}_1\leq\frac{\eta}{\norm{\xVec}_1}\norm{\xVec}_1=\eta$.
	Thus, $\frac{\eta}{\norm{\xVec}_1}\xVec^\#$ is also feasible for
	$\min_{\zVec\in\eta\left(\mathbb{B}^{N}_{\norm{\cdot}_1}\cap\mathbb{R}_+^N\right)}\normRHS{\AMat\zVec-\AMat\frac{\eta}{\norm{\xVec}_1}\xVec}$.
	Since $\xVec^\#$ is feasible for
	$\min_{\zVec\in\mathbb{R}_+^N:\AMat\zVec=\AMat\xVec}\norm{\zVec}_1$,
	one gets
	$\normRHS{\AMat\frac{\eta}{\norm{\xVec}_1}\xVec^\#-\AMat\frac{\eta}{\norm{\xVec}_1}\xVec}=0$.
	Thus, $\frac{\eta}{\norm{\xVec}_1}\xVec^\#$ is a minimizer of the problem
	$\min_{\zVec\in\eta\left(\mathbb{B}^{N}_{\norm{\cdot}_1}\cap\mathbb{R}_+^N\right)}\normRHS{\AMat\zVec-\AMat\frac{\eta}{\norm{\xVec}_1}\xVec}$
	and by assumption $\frac{\eta}{\norm{\xVec}_1}\xVec=\frac{\eta}{\norm{\xVec}_1}\xVec^\#$.
	\par
	\refP{Statement:Theorem:outwardly_neighborly_polytopes:NNLR=>others}:
	\refP{Condition:Theorem:outwardly_neighborly_polytopes:NNLR}
	$\Rightarrow$
	\refP{Condition:Theorem:outwardly_neighborly_polytopes:LP}:
	Let $\xVec\in\Sigma_S^N\cap\mathbb{R}_+^N$ and
	$\xVec^\#\in\argmin{\zVec\in\mathbb{R}_+^N:\AMat\zVec=\AMat\xVec}\norm{\zVec}_1$.
	Since $\xVec^\#$ is feasible for this problem, it is also feasible for
	$\min_{\zVec\in\mathbb{R}_+^N}\normRHS{\AMat\zVec-\AMat\xVec}$
	and actually a minimizer of that. By assumption $\xVec^\#=\xVec$.
	\par
	\refP{Statement:Theorem:outwardly_neighborly_polytopes:others=>NNLR}:
	Let additionally there be $\vVec\in\mathbb{R}^M$ such that
	$\wVec:=\AMat^H\vVec$ is the all ones vector in $\mathbb{R}^N$.
	Now 
	\refP{Condition:Theorem:outwardly_neighborly_polytopes:LP}
	$\Rightarrow$
	\refP{Condition:Theorem:outwardly_neighborly_polytopes:NNLR}
	is being shown.
	Let $\xVec\in\Sigma_S^N\cap\mathbb{R}_+^N$ and
	$\xVec^\#\in\argmin{\zVec\in\mathbb{R}_+^N}\normRHS{\AMat\zVec-\AMat\xVec}$.
	Since $\xVec$ is feasible for this, one gets
	$\AMat\xVec^\#=\AMat\xVec$. On the one hand this yields that
	$\xVec^\#$ is feasible for
	$\min_{\zVec\in\mathbb{R}_+^N:\AMat\zVec=\AMat\xVec}\norm{\zVec}_1$.
	And on the other hand this and the non-negativity of $\xVec$ and
	$\xVec^\#$ gives
	\begin{align}
		\nonumber
			\norm{\xVec^\#}_1-\norm{\xVec}_1
		=
			\scprod{\wVec}{\xVec^\#}-\scprod{\wVec}{\xVec}
		=
			\scprod{\wVec}{\xVec^\#-\xVec}
		=
			\scprod{\AMat^H\vVec}{\xVec^\#-\xVec}
		=
			\scprod{\vVec}{\AMat\xVec^\#-\AMat\xVec}
		=
			0.
	\end{align}
	Since by assumption $\xVec$ is a minimizer, $\xVec^\#$ must also be a minimizer of 
	$\min_{\zVec\in\mathbb{R}_+^N:\AMat\zVec=\AMat\xVec}\norm{\zVec}_1$
	and by assumption
	$\xVec^\#=\xVec$.
\end{proof}
In \cite{outwardly_neighborly_polytopes} it was proven that the recovery in this case
is equivalent to a property related to a polytope being outwardly $S$-neighborly.
Note that the condition on the all ones vector is mandatory in
\refP{Statement:Theorem:outwardly_neighborly_polytopes:others=>NNLR}.
\begin{Example}
	Let $\AMat=\begin{pmatrix}1&0&1\\0&1&1\end{pmatrix}$,
	then there exists no $\vVec\in\mathbb{R}^2$ such that
	$\AMat^H\vVec$ is the all ones vector!
	Further,
	\refP{Condition:Theorem:outwardly_neighborly_polytopes:FC} and
	\refP{Condition:Theorem:outwardly_neighborly_polytopes:LP}
	are fulfilled because
	$\Kernel{\AMat}=\left\{\alpha\begin{pmatrix}1\\1\\-1\end{pmatrix}:\alpha\in\mathbb{R}\right\}$,
	but
	\refP{Condition:Theorem:outwardly_neighborly_polytopes:NNLR}
	is not fulfilled due to \thref{Theorem:signed_kernel_dimensions}.
\end{Example}
Due to this example, the outwardly neighborly polytope case and the
signed kernel condition case are indeed different. However, the cases are related. In
\cite[Corollary~1.1]{outwardly_neighborly_polytopes} it was proven that recovery in the
outwardly neighborly polytope case is possible when $M\geq 2S$
only requiring one measurement less as in signed kernel condition case.
This can be used to create matrices with a signed kernel condition.
\begin{Remark}
	\label{Remark:row_of_ones}
	If $\AMat\in\mathbb{R}^{M\times N}$ suffices
	\refP{Condition:Theorem:outwardly_neighborly_polytopes:FC}
	or \refP{Condition:Theorem:outwardly_neighborly_polytopes:LP}
	of \thref{Theorem:outwardly_neighborly_polytopes}.
	then $\AMat$ with an additional row of ones suffices
	\refP{Condition:Theorem:outwardly_neighborly_polytopes:NNLR}
	of \thref{Theorem:outwardly_neighborly_polytopes}.
\end{Remark}
\begin{proof}
	Due to \thref{Theorem:robustness_and_recovery}
	\refP{Condition:Theorem:outwardly_neighborly_polytopes:FC} of
	\thref{Theorem:outwardly_neighborly_polytopes} is equivalent to
	$\left(F-C\right)\cap\Kernel{\AMat}=\ZeroSet$ for a general $\AMat$. This property stays valid
	if a row of ones gets added to $\AMat$. Due to
	\refP{Statement:Theorem:outwardly_neighborly_polytopes:others=>NNLR}
	of \thref{Theorem:outwardly_neighborly_polytopes}
	the extended matix suffices
	\refP{Condition:Theorem:outwardly_neighborly_polytopes:NNLR}
	of \thref{Theorem:outwardly_neighborly_polytopes}.
\end{proof}
Note that the statement about the row of ones from
\thref{Remark:row_of_ones} and \refP{Statement:Theorem:outwardly_neighborly_polytopes:others=>NNLR}
from \thref{Theorem:outwardly_neighborly_polytopes}
are quite similar to a result in
\cite[Theorem~3.4]{NNLAD}. In \cite[Theorem~3.4]{NNLAD} an additional
row of ones would yield a perfect constant for an $M^+$ criterion and thus
transfers a robust recovery guarantee from null space property case to
a robust recovery guarantee in the signed kernel condition case.
	\section{Bound for the Complexity Constants}
In this section a bound for the complexity constant is derived.
It is being derived using a similar methods as the proof of
\cite[Theorem~11.7]{IntroductionCS}.
At first a well known counting argument
based on volumes is needed.
\begin{Lemma}\label{Lemma:distance_in_unit_ball}
	Let $\normRHS{\cdot}$ be a norm on $\mathbb{K}^M$ and
	$W\subset \mathbb{B}_{\normRHS{\cdot}}^M$ be a finite set
	with at least two elements. Let $c'=1$ if $\mathbb{K}=\mathbb{R}$ and $c'=2$ if $\mathbb{K}=\mathbb{C}$.
	Then,
	\begin{align}
		\nonumber
			\inf_{\wVec,\wVec'\in W:\wVec\neq \wVec'}\normRHS{\wVec-\wVec'}
		\leq&
			2\left(\Exp{\frac{1}{c'M}\Ln{\SetSize{W}}}-1\right)^{-1}.
	\end{align}
\end{Lemma}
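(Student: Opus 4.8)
The plan is to use a standard volumetric packing argument. Set $\delta:=\inf_{\wVec,\wVec'\in W:\wVec\neq\wVec'}\normRHS{\wVec-\wVec'}$; since $W$ is finite with at least two elements, this infimum is attained and $\delta>0$. The goal is to show $\delta\leq 2\bigl(\Exp{\frac{1}{c'M}\Ln{\SetSize{W}}}-1\bigr)^{-1}$.

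First I would observe that the open $\normRHS{\cdot}$-balls of radius $\delta/2$ centred at the points of $W$ are pairwise disjoint: if some $\zVec$ lay in two of them, say the ones around $\wVec\neq\wVec'$, the triangle inequality would give $\normRHS{\wVec-\wVec'}<\delta$, contradicting the definition of $\delta$. Moreover, since $\normRHS{\wVec}\leq 1$ for every $\wVec\in W$, each such ball $\wVec+\frac{\delta}{2}\mathbb{B}^M_{\normRHS{\cdot}}$ is contained in the larger ball $(1+\frac{\delta}{2})\mathbb{B}^M_{\normRHS{\cdot}}$, again by the triangle inequality.

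Next, identify $\mathbb{K}^M$ with $\mathbb{R}^{c'M}$ as a real vector space, so that $c'M$ is the real dimension ($c'=1$ in the real case, $c'=2$ in the complex case), and use $c'M$-dimensional Lebesgue measure. A norm ball in a finite-dimensional space has finite and strictly positive volume, and dilating by $r\geq 0$ multiplies the volume by $r^{c'M}$. Combining disjointness and containment, a volume comparison yields $\SetSize{W}\cdot(\delta/2)^{c'M}\operatorname{vol}(\mathbb{B}^M_{\normRHS{\cdot}})\leq(1+\delta/2)^{c'M}\operatorname{vol}(\mathbb{B}^M_{\normRHS{\cdot}})$. Cancelling the common positive factor and taking $c'M$-th roots gives $\SetSize{W}^{1/(c'M)}\leq 1+2/\delta$, i.e. $2/\delta\geq\Exp{\frac{1}{c'M}\Ln{\SetSize{W}}}-1$. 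Since $\SetSize{W}\geq 2>1$ the right-hand side is strictly positive, so rearranging gives exactly the claimed bound.

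There is no genuine obstacle here; the only points that need a word of care are that $\delta>0$ (so that the dilated balls are nondegenerate and the final rearrangement is valid), that $\operatorname{vol}(\mathbb{B}^M_{\normRHS{\cdot}})$ is finite and nonzero so the cancellation is legitimate, and the correct bookkeeping of the real dimension $c'M$ of $\mathbb{K}^M$ — this is the only place where the constant $c'$ enters.
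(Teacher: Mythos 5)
Your proof is correct and follows essentially the same volumetric packing argument as the paper: disjoint balls of radius roughly half the minimal separation packed inside a ball of radius $1+\delta/2$, compared via Lebesgue volume in real dimension $c'M$, with the complex case handled by viewing $\mathbb{C}^M$ as $\mathbb{R}^{2M}$. The only cosmetic difference is that you use open balls at radius exactly $\delta/2$ where the paper takes closed balls of radius $t<\delta/2$ and passes to the limit; both are sound.
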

\begin{proof}
	Let $\mathbb{K}=\mathbb{R}$ and
	$0<t<\frac{1}{2}\inf_{\wVec,\wVec'\in W:\wVec\neq \wVec'}\normRHS{\wVec-\wVec'}$
	be arbitrary.
	Let $\textnormal{vol}\left[\cdot\right]$ denote the volume of an
	integrable set in the Lebesque measure.
	Consider the balls $\mathbb{B}_{\normRHS{\cdot}}^M\left(\wVec,t\right)
	:=\wVec+t\mathbb{B}_{\normRHS{\cdot}}^M$
	for all $\wVec\in W$
	which are disjoint by definition of $t$.
	Due to $W\subset \mathbb{B}_{\normRHS{\cdot}}^M$
	one has
	$\bigcup_{\wVec\in W}\mathbb{B}_{\normRHS{\cdot}}^M\left(\wVec,t\right)
		\subset \mathbb{B}_{\normRHS{\cdot}}^M\left(0,1+t\right)$.
	Since the union is taken over disjoint sets, it follows that
	\begin{align}
		\nonumber
			\SetSize{W} t^M
			\textnormal{vol}\left[\mathbb{B}_{\normRHS{\cdot}}^M\left(0,1\right)\right]
		=&
			\sum_{\wVec\in W}t^M
			\textnormal{vol}\left[\mathbb{B}_{\normRHS{\cdot}}^M\left(0,1\right)\right]
		=
			\sum_{\wVec\in W}
			\textnormal{vol}\left[\mathbb{B}_{\normRHS{\cdot}}^M\left(\wVec,t\right)\right]
		\\=&\nonumber
			\textnormal{vol}\left[\bigcup_{\wVec\in W}\mathbb{B}_{\normRHS{\cdot}}^M\left(\wVec,t\right)\right]
		\leq
			\textnormal{vol}\left[\mathbb{B}_{\normRHS{\cdot}}^M\left(0,1+t\right)\right]
		=
			\left(1+t\right)^M
			\textnormal{vol}\left[\mathbb{B}_{\normRHS{\cdot}}^M\left(0,1\right)\right].
	\end{align}
	By the fact that
	$\textnormal{vol}\left[\mathbb{B}_{\normRHS{\cdot}}^M\left(0,1\right)\right]$
	is non-zero, one gets $\SetSize{W} t^M\leq\left(1+t\right)^M$
	and thus $t\leq \left(\SetSize{W}^\frac{1}{M}-1\right)^{-1}$.
	Since this holds for all $0<t<\frac{1}{2}\inf_{\wVec,\wVec'\in W:\wVec\neq \wVec'}\normRHS{\wVec-\wVec'}$,
	it follows that
	\begin{align}
			\frac{1}{2}\inf_{\wVec,\wVec'\in W:\wVec\neq \wVec'}
			\normRHS{\wVec-\wVec'}
		\leq&
			\left(\SetSize{W}^\frac{1}{M}-1\right)^{-1}
		=
			\left(\Exp{\frac{1}{M}\Ln{\SetSize{W}}}-1\right)^{-1}.
	\end{align}
	If $\mathbb{K}=\mathbb{C}$, the proof follows from the fact that
	the complex unit ball $\mathbb{B}_{\normRHS{\cdot}}^M$ is a real unit ball in
	$\mathbb{R}^{2M}$ for a certain norm.
\end{proof}
The following theorem gives a bound for the complexity constant
in several important cases.
\begin{Theorem}\label{Theorem:bound_complexity_constant}
	Let $M,N,S\in\mathbb{N}$, $q\in\left[1,\infty\right]$
	and set $\frac{1}{q}:=0$ if $q=\infty$.
	Let $\normLHS{\cdot}=\norm{\cdot}_q$ be an $\ell_q$-norm
	and $\normRHS{\cdot}$ a norm on $\mathbb{K}^M$.
	Let one of the following hold true:
	\begin{enumerate}
		\item
			$C=\Sigma_S^N$ and $F=\Sigma_S^N$ (the Kruskal Rank case).
		\item
			$C=\Sigma_S^N\cap\mathbb{R}^N_+$ and $F=\mathbb{R}^N_+$
			(the Signed Kernel Condition case).
		\item
			$C=\eta\left(\Sigma_S^N\cap\mathbb{S}^{N-1}_{\norm{\cdot}_1}\right)$
			and $F=\eta\mathbb{B}^{N}_{\norm{\cdot}_1}$,
			where $\eta>0$.
			(the Null Space Property case).
		\item
			$C=\eta\left(\Sigma_S^N\cap\mathbb{S}^{N-1}_{\norm{\cdot}_1}\cap\mathbb{R}_+^N\right)$
			and $F=\eta\left(\mathbb{B}^{N}_{\norm{\cdot}_1}\cap\mathbb{R}_+^N\right)$,
			where $\eta>0$ (Outwardly Neighborly Polytope case).
	\end{enumerate}
	Let $c=2$ if $\mathbb{K}=\mathbb{R}$ and $c=4$ if $\mathbb{K}=\mathbb{C}$.
	Then,
	\begin{align}
		\nonumber
			\tau_{\kappa}\left(C,F,\mathbb{K},\norm{\cdot}_q,\normRHS{\cdot}\right)
		\leq
			2\left(\frac{2}{3}\right)^\frac{1}{q}
			\left(\Exp{\frac{S}{cM}\Ln{\frac{N}{4S}}}-1\right)^{-1}.
	\end{align}
\end{Theorem}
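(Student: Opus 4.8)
The plan is to bound $\tau\left(\AMat\right)$ for an arbitrary $\AMat\in\mathbb{K}^{M\times N}$ with $\kappa\left(\AMat\right)\leq 1$ and then pass to the supremum defining $\tau_\kappa$. Fix such an $\AMat$. By \thref{Definition:tau_kappa}, the hypothesis $\kappa\left(\AMat\right)\leq 1$ is exactly the statement $\normRHS{\AMat\wVec}\leq\norm{\wVec}_q$ for all $\wVec\in\Cone{F-C}$, so the $\AMat$-image of any $\norm{\cdot}_q$-bounded subset of $\Cone{F-C}$ lies in a correspondingly small $\normRHS{\cdot}$-ball. Following the line of argument of \cite[Theorem~11.7]{IntroductionCS}, I would produce a large family $\xVec^1,\dots,\xVec^n\in C$ that is well separated in $\norm{\cdot}_q$ but whose images $\AMat\xVec^1,\dots,\AMat\xVec^n$ all lie in a small $\normRHS{\cdot}$-ball; the packing bound \thref{Lemma:distance_in_unit_ball} then forces two of these images close together, and the corresponding difference $\xVec^i-\xVec^j\in F-C\subset\Cone{F-C}$ serves as a nonzero test vector for $\tau\left(\AMat\right)$.

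Concretely, I would start from a packing lemma: there are subsets $T_1,\dots,T_n\subset\SetOf{N}$ with $\SetSize{T_i}=S$, with $\SetSize{T_i\cap T_j}$ small enough that $\SetSize{T_i\,\triangle\,T_j}\geq\tfrac{3}{2}S$ for $i\neq j$, and with $n\geq\left(\tfrac{N}{4S}\right)^{S/2}$. To $T_i$ I associate $\xVec^i$ given by $\xVec^i_n=c$ for $n\in T_i$ and $\xVec^i_n=0$ otherwise, with $c=1$ in the Kruskal Rank and Signed Kernel Condition cases and $c=\tfrac{\eta}{S}$ in the Null Space Property and Outwardly Neighborly Polytope cases (so that $\norm{\xVec^i}_1=\eta$ there). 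In all four cases $\xVec^i\in C\subset F$, hence $\xVec^i-\xVec^j\in F-C$ and $\xVec^i-\xVec^j\neq 0$ for $i\neq j$. Moreover a one-line check in each case shows that one of $\xVec^i,-\xVec^i$ lies in $\Cone{F-C}$: for the Kruskal Rank and Signed Kernel Condition cases $0\in C$ and $\xVec^i\in F$, so $\xVec^i=\xVec^i-0\in F-C$; for the Outwardly Neighborly Polytope case $0\in F$ and $\xVec^i\in C$, so $-\xVec^i=0-\xVec^i\in F-C$; for the Null Space Property case $0\in F$ and $-\xVec^i\in C$, so $\xVec^i=0-\left(-\xVec^i\right)\in F-C$. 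Therefore $\normRHS{\AMat\xVec^i}=\normRHS{\AMat\left(\pm\xVec^i\right)}\leq\norm{\pm\xVec^i}_q=c\,S^{1/q}=:R$ for every $i$.

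So $W:=\{\AMat\xVec^i:i\in\SetOf{n}\}\subset R\,\mathbb{B}^M_{\normRHS{\cdot}}$. If two of the $\AMat\xVec^i$ coincide, then $\AMat\left(\xVec^i-\xVec^j\right)=0$ with $\xVec^i-\xVec^j\in\Cone{F-C}\setminus\{0\}$, whence $\tau\left(\AMat\right)=0$ and the bound is trivial; otherwise $\SetSize{W}=n$ and \thref{Lemma:distance_in_unit_ball} applied to $\tfrac{1}{R}W\subset\mathbb{B}^M_{\normRHS{\cdot}}$ yields $i\neq j$ with $\normRHS{\AMat\left(\xVec^i-\xVec^j\right)}\leq 2R\left(\Exp{\tfrac{1}{c'M}\Ln n}-1\right)^{-1}$, where $c'=1$ for $\mathbb{K}=\mathbb{R}$ and $c'=2$ for $\mathbb{K}=\mathbb{C}$. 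Since $\norm{\xVec^i-\xVec^j}_q=c\,\SetSize{T_i\,\triangle\,T_j}^{1/q}\geq\left(\tfrac{3}{2}\right)^{1/q}R$ and $\xVec^i-\xVec^j\in\Cone{F-C}\setminus\{0\}$,
\begin{align}
	\nonumber
	\tau\left(\AMat\right)
	\leq
	\frac{\normRHS{\AMat\left(\xVec^i-\xVec^j\right)}}{\norm{\xVec^i-\xVec^j}_q}
	\leq
	\frac{2R\left(\Exp{\tfrac{1}{c'M}\Ln n}-1\right)^{-1}}{\left(\tfrac{3}{2}\right)^{1/q}R}
	=
	2\left(\tfrac{2}{3}\right)^{1/q}\left(\Exp{\tfrac{1}{c'M}\Ln n}-1\right)^{-1}.
\end{align}
Finally $\Ln n\geq\tfrac{S}{2}\Ln{\tfrac{N}{4S}}$, and as $x\mapsto\left(\Exp{x}-1\right)^{-1}$ is decreasing, using $c=2c'$ this gives $\tau\left(\AMat\right)\leq 2\left(\tfrac{2}{3}\right)^{1/q}\left(\Exp{\tfrac{S}{cM}\Ln{\tfrac{N}{4S}}}-1\right)^{-1}$; taking the supremum over all $\AMat$ with $\kappa\left(\AMat\right)\leq 1$ is the claim (the bound being vacuous when $N\leq 4S$).

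The main obstacle is securing the packing lemma in exactly this form: one needs $S$-subsets of $\SetOf{N}$ that are at once numerous ($n\geq\left(N/4S\right)^{S/2}$) and well separated ($\SetSize{T_i\,\triangle\,T_j}\geq\tfrac{3}{2}S$), with a single family usable across all four geometries. A greedy/volumetric counting argument over $S$-subsets, in the spirit of \cite[Chapter~10]{IntroductionCS}, is the natural device, and getting the constants to line up with $\Ln{\tfrac{N}{4S}}$ and $\left(\tfrac{2}{3}\right)^{1/q}$ simultaneously is the delicate part. Everything else is routine: the four short verifications that $\pm\xVec^i\in\Cone{F-C}$ (which only use the explicit descriptions of $F$ and $C$, not closedness of the cone), fixing $c$ so that $\norm{\xVec^i-\xVec^j}_q\geq\left(\tfrac{3}{2}\right)^{1/q}\norm{\xVec^i}_q$, and substituting the packing into \thref{Lemma:distance_in_unit_ball}.
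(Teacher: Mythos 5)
Your argument is the paper's argument: the same flat test vectors supported on a packing of $S$-sets, the bound $\normRHS{\AMat\xVec^i}\leq\norm{\xVec^i}_q$ extracted from $\kappa\left(\AMat\right)\leq 1$ (the paper gets it by pairing $\ProjToIndex{T_l}{\vVec}\in C$ with $0\in F$), then \thref{Lemma:distance_in_unit_ball} applied to the rescaled image set, and the same bookkeeping $c=2c'$ and $\Ln{n}\geq\frac{S}{2}\Ln{\frac{N}{4S}}$. Your case checks that $\xVec^i\in C\cap F$ and that $\pm\xVec^i\in F-C$ are correct, and your treatment of coinciding images (then $\tau\left(\AMat\right)=0$) is a detail the paper passes over silently.

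The genuine gap is the packing lemma you postulate, and it is not merely delicate: it cannot hold with your constants. You ask for $n\geq\left(\frac{N}{4S}\right)^{S/2}$ sets of size $S$ with $\SetSize{T_i\triangle T_j}\geq\frac{3}{2}S$, i.e. $\SetSize{T_i\cap T_j}\leq\frac{S}{4}$. Any family of $S$-sets with pairwise intersections at most $t$ has size at most $\binom{N}{t+1}/\binom{S}{t+1}$, because no $\left(t+1\right)$-subset can lie in two members; with $t=\lfloor S/4\rfloor$ this is of order $N^{S/4+1}$ for fixed $S$, which falls short of $\left(\frac{N}{4S}\right)^{S/2}$ once $S\geq 5$ and $N$ is large. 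What is available is the lemma the paper cites, \cite[Lemma~10.12]{IntroductionCS}: cardinality $\left(\frac{N}{4S}\right)^{S/2}$ but only $\SetSize{T_l\cap T_{l'}}<\frac{S}{2}$, hence only $\SetSize{T_l\triangle T_{l'}}>S$ and a separation of $\eta S^{1/q-1}$ rather than $\left(\frac{3}{2}\right)^{1/q}\eta S^{1/q-1}$; running your computation with that lemma gives the stated bound with the prefactor $\left(\frac{2}{3}\right)^{1/q}$ replaced by $1$. (For what it is worth, the paper's own proof produces its $\left(\frac{3}{2}\right)^{1/q}$ separation from the $<\frac{S}{2}$ intersection bound by subtracting $\SetSize{T_l\cap T_{l'}}$ only once in the norm of the difference, i.e. it counts the union where the symmetric difference is needed, so the obstacle you flag points at a real soft spot in the paper; the asymptotic content of the theorem is unaffected either way.) To close your write-up you must either accept the weaker prefactor $2$, or rebalance the intersection threshold, which changes the exponent $\frac{S}{2}$ and hence the constant inside the exponential rather than just the prefactor.
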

\begin{proof}
	Let $\AMat\in\mathbb{K}^{M\times N}$ with $\kappa\left(\AMat\right)\leq 1$.
	Let $T_1,\dots,T_L$ be the sets from \cite[Lemma~10.12]{IntroductionCS}
	with
	\begin{align}
		\label{Equation:Theorem:bound_complexity_constant:definition:sets}
			\SetSize{T_l}=S
		\TextAnd
			\SetSize{T_l\cap T_{l'}}<\frac{S}{2}
		\TextAnd
			L\geq \left(\frac{N}{4S}\right)^{\frac{S}{2}}.
	\end{align}
	Let $\vVec\in\mathbb{K}^N$ be the vector with entries all
	$\eta S^{-1}$ given any $\eta>0$.
	Recall that given $l\in\SetOf{L}$ the vector
	$\ProjToIndex{T_l}{\vVec}$ is the vector with entries
	$\eta S^{-1}$ if $n\in T_l$ and zero else. By
	\eqref{Equation:Theorem:bound_complexity_constant:definition:sets}
	one gets
	\begin{align}
		\label{Equation:Theorem:bound_complexity_constant:v_in_CF}
			\ProjToIndex{T_l}{\vVec}\in C\cap F
			\TextForAll l\in\SetOf{L}
			\textnormal{ and any of the choices of $C,F$ in the assumption.}
	\end{align}
	Further, note that by
	\eqref{Equation:Theorem:bound_complexity_constant:definition:sets}
	for $l\neq l'$ and $q\neq\infty$ one has
	\begin{align}
		\nonumber
			\norm{\ProjToIndex{T_l}{\vVec}-\ProjToIndex{T_{l'}}{\vVec}}_q
		=&
			\left(
				\eta^q S^{-q}\SetSize{T_l}+\eta^q S^{-q}\SetSize{T_{l'}}
				-\eta^q S^{-q}\SetSize{T_l\cap T_{l'}}
			\right)^\frac{1}{q}
		\\\geq&\nonumber
			\left(
				\eta^q S^{1-q}+\eta^q S^{1-q}-\frac{1}{2}\eta^q S^{1-q}
			\right)^\frac{1}{q}
		=
			\left(\frac{3}{2}\right)^\frac{1}{q}\eta S^{\frac{1}{q}-1}.
	\end{align}
	On the other hand by
	\eqref{Equation:Theorem:bound_complexity_constant:definition:sets}
	for $l\neq l'$ and $q=\infty$ one has
	$\norm{\ProjToIndex{T_l}{\vVec}-\ProjToIndex{T_{l'}}{\vVec}}_q=\eta S^{-1}$.
	Thus, for all $q\in\left[1,\infty\right]$
	it holds true that
	\begin{align}
		\label{Equation:Theorem:bound_complexity_constant:v_norms}
			\norm{\ProjToIndex{T_l}{\vVec}}_q
		=
			\eta S^{\frac{1}{q}-1}
		\TextAnd
			\norm{\ProjToIndex{T_l}{\vVec}-\ProjToIndex{T_{l'}}{\vVec}}_q
		\geq
			\left(\frac{3}{2}\right)^\frac{1}{q}\eta S^{\frac{1}{q}-1}
		\TextForAll
			l,l'\in\SetOf{L},l\neq l'.
	\end{align}
	Using \eqref{Equation:Theorem:bound_complexity_constant:v_in_CF}
	and \eqref{Equation:Theorem:bound_complexity_constant:v_norms}
	yields
	\begin{align}
		\nonumber
			\tau\left(\AMat\right)
		=&
			\inf_{\zVec\in F,\xVec\in C:\zVec\neq\xVec}
			\frac{
				\normRHS{\AMat\left(\zVec-\xVec\right)}
			}{
				\norm{\zVec-\xVec}_q
			}
		\leq
			\inf_{l,l'\in\SetOf{L}:l\neq l'}
			\frac{
				\normRHS{
					\AMat\left(
						\ProjToIndex{T_{l'}}{\vVec}
						-\ProjToIndex{T_l}{\vVec}
					\right)
				}
			}{
				\norm{\ProjToIndex{T_{l'}}{\vVec}-\ProjToIndex{T_l}{\vVec}}_q
			}
		\\\leq&\nonumber
			\left(\frac{2}{3}\right)^\frac{1}{q}
			\inf_{l,l'\in\SetOf{L}:l\neq l'}
			\normRHS{
				\AMat\left(
					\eta^{-1} S^{1-\frac{1}{q}}\ProjToIndex{T_{l'}}{\vVec}
					-\eta^{-1} S^{1-\frac{1}{q}}\ProjToIndex{T_l}{\vVec}
				\right)
			}.
	\end{align}
	By setting $
			W
		:=
			\left\{\AMat\eta^{-1}S^{1-\frac{1}{q}}\ProjToIndex{T_l}{\vVec}
				:l\in\SetOf{L}\right\}
	$
	this yields
	\begin{align}
		\label{Equation:Theorem:bound_complexity_constant:tau_distance}
			\tau\left(\AMat\right)
		\leq
			\left(\frac{2}{3}\right)^\frac{1}{q}
			\inf_{\wVec,\wVec'\in W:\wVec\neq\wVec'}
			\normRHS{\wVec-\wVec'}.
	\end{align}
	By \eqref{Equation:Theorem:bound_complexity_constant:v_in_CF},
	$0\in F$ and the definition of $\kappa\left(\AMat\right)$
	one gets
	\begin{align}
		\nonumber
			\normRHS{\AMat\eta^{-1}S^{1-\frac{1}{q}}\ProjToIndex{T_l}{\vVec}}
		=
			\eta^{-1}S^{1-\frac{1}{q}}
			\normRHS{\AMat\left(\ProjToIndex{T_l}{\vVec}-0\right)}
		\leq
			\eta^{-1}S^{1-\frac{1}{q}}
			\kappa\left(\AMat\right)
			\norm{\ProjToIndex{T_l}{\vVec}-0}_q.
	\end{align}
	Using
	\eqref{Equation:Theorem:bound_complexity_constant:v_norms} 
	on this	yields
	\begin{align}
		\nonumber
			\normRHS{\AMat\eta^{-1}S^{1-\frac{1}{q}}\ProjToIndex{T_l}{\vVec}}
		\leq
			\kappa\left(\AMat\right)
		\leq
			1.
	\end{align}
	Hence, $W\subset \mathbb{B}_{\normRHS{\cdot}}^M$. Applying
	\thref{Lemma:distance_in_unit_ball} to
	\eqref{Equation:Theorem:bound_complexity_constant:tau_distance}
	yields
	\begin{align}
		\nonumber
			\tau\left(\AMat\right)
		\leq
			2\left(\frac{2}{3}\right)^\frac{1}{q}
			\left(\Exp{\frac{1}{c'M}\Ln{L}}-1\right)^{-1}.
	\end{align}
	Plugging
	\eqref{Equation:Theorem:bound_complexity_constant:definition:sets}
	into this gives
	\begin{align}
		\nonumber
			\tau\left(\AMat\right)
		\leq
			2\left(\frac{2}{3}\right)^\frac{1}{q}
			\left(\Exp{\frac{1}{c'M}\Ln{
				\left(\frac{N}{4S}\right)^{\frac{S}{2}}
			}}-1\right)^{-1}
		=
			2\left(\frac{2}{3}\right)^\frac{1}{q}
			\left(\Exp{\frac{S}{cM}\Ln{\frac{N}{4S}}}-1\right)^{-1}.
	\end{align}
	Since this holds for all $\AMat$ with $\kappa\left(\AMat\right)\leq 1$,
	the claim follows.
\end{proof}
\begin{Remark}\label{Remark:compelexity_constant:scaling}
	In the cases of \thref{Theorem:bound_complexity_constant}
	the robustness constant can only be constant over dimensions if
	$M\in\mathcal{O}\left(S\Ln{\frac{N}{S}}\right)$. Otherwise
	the robustness constant changes exponentially in the variable
	$\frac{S}{cM}\Ln{\frac{N}{4S}}$.
	In the case $M\in\mathcal{O}\left(S\right)$ the robustness constant
	can be finite but grows with growing $N$.
\end{Remark}
In many compressed sensing works there is an emphasis on stability
and robustness, see \cite[Chapter~4]{IntroductionCS}.
Stability is closely related to a quantity called best $S$-term approximation error, see \cite[Definition~2.2]{IntroductionCS} for a precise definition.
While robustness means bounding the error by
$\normRHS{\yVec-\AMat\xVec}$, stability means bounding the error by the best $S$-term approximation error \cite[Theorem~4.22]{IntroductionCS}.
In this work the robustness was investigated and several results have been shown.
At least in the case of $\mathbb{K}=\mathbb{R}$, $C=\Sigma_S^N\cap\mathbb{R}^N_+$, $F=\mathbb{R}^N_+$
and $\normLHS{\cdot}=\norm{\cdot}_1$ the best $S$-term approximation error
of $\xVec$ is given by the quantity $\inf_{\zVec\in C}\normLHS{\zVec-\xVec}$
from \thref{Theorem:robustness=>stability}. In this case stability can
be considered as a consequence of robustness.
\par
Stability and the best $S$-term approximation are also closely related
to $\ell_p$-instance optimality, see \cite[Definition~11.1]{IntroductionCS}. In particular, in
\cite[Theorem~11.6]{IntroductionCS} it was proven that constant
stability can only hold if $M\geq CS\Ln{\ExpE\frac{N}{S}}$.
To be precise, \cite[Proof~of~Theorem~11.6]{IntroductionCS} reveals that
any constant $C$ of a stable recovery guarantee must obey
$C\geq \frac{1}{2}\Exp{\frac{1}{4}\frac{S}{M}\Ln{\ExpE\frac{N}{S}}-2}-\frac{3}{2}$.
A similar scaling also holds for \cite[Theorem~11.7]{IntroductionCS}.
In this work a similar statement about the robustness constant
was proven with
\thref{Remark:compelexity_constant:scaling} and
\thref{Theorem:bound_complexity_constant}.
In view of
\thref{Theorem:robustness=>stability} the question arises
whether the stability gets worse because the robustness gets worse
or vice versa or each happens independently in the case
$M\notin \mathcal{O}\left(S\Ln{\frac{N}{S}}\right)$.
	\section*{Acknowledgments}
The authors would like to thank Professor Philippe Ciblat for translating 
\cite{complex_rule_of_signs_french}
and Fabian Jaensch for the discussions about the complex null space property case of
\thref{Proposition:cone_closed}.
	\bibliography{Postamble/Bibliography}
	\bibliographystyle{IEEEtran}
\end{document}